\newcolumntype{L}{>{\centering\arraybackslash}m{3cm}}
\newtheorem{theorem}{Theorem}
\newtheorem{corollary}[theorem]{Corollary}
\newtheorem{lemma}[theorem]{Lemma}
\newtheorem{remark}[theorem]{Remark}
\newtheorem{claim}[theorem]{Claim}
\theoremstyle{definition}
\newtheorem{definition}[theorem]{Definition}
\newtheorem{example}[theorem]{Example}
\newtheorem{mtheorem}{Theorem}
\newtheorem{mcorollary}[mtheorem]{Corollary}
\newtheorem{openproblem}{Open Problem}
\numberwithin{equation}{section}
\numberwithin{theorem}{section}
\newcommand{\C}{{\mathbb{C}}} 
\newcommand{\R}{{\mathbb{R}}}
\DeclareMathOperator*{\E}{\mathbb{E}}
\DeclareMathOperator{\gen}{gen} 
\DeclareMathOperator{\id}{id} 
\DeclareMathOperator*{\argmin}{argmin}
\DeclareMathOperator{\tr}{Tr}
\newcommand{\cH}{\mathcal{H}} 
\newcommand{\cB}{\mathcal{B}} 
\newcommand{\cT}{\mathcal{T}} 
\newcommand{\alg}{\mathcal{A}} 
\newcommand{\Norm}[1]{\left\|{#1}\right\|}
\newcommand{\ketbra}[2]{\ket{#1}\!\!\bra{#2}}
\crefname{lemma}{lemma}{lemmas}
\crefname{proposition}{proposition}{propositions}
\crefname{definition}{definition}{definitions}
\crefname{theorem}{theorem}{theorems}
\crefname{conjecture}{conjecture}{conjectures}
\crefname{corollary}{corollary}{corollaries}
\crefname{example}{example}{examples}
\crefname{section}{section}{sections}
\crefname{appendix}{appendix}{appendices}
\crefname{figure}{fig.}{figs.}
\crefname{equation}{eq.}{eqs.}
\crefname{table}{table}{tables}
\crefname{item}{property}{properties}
\crefname{remark}{remark}{remarks}
\title{Information-theoretic generalization bounds for learning from quantum data}
\author[1,2]{Matthias C.~Caro\thanks{matthias.caro@fu-berlin.de}}
\author[3]{Tom Gur\thanks{tom.gur@cl.cam.ac.uk}}
\author[4,5]{Cambyse Rouz\'{e}\thanks{cambyse.rouze@telecom-paris.fr}}
\author[6]{Daniel Stilck França\thanks{daniel.stilck\_franca@ens-lyon.fr}}
\author[3,7]{Sathyawageeswar Subramanian\thanks{ss2310@cam.ac.uk}}
\affil[1]{\normalsize Dahlem Center for Complex Quantum Systems, Freie Universit\"at Berlin, Berlin, Germany}
\affil[2]{\normalsize Institute for Quantum Information and Matter, Caltech, Pasadena, CA, USA}
\affil[3]{\normalsize Department of Computer Science and Technology, University of Cambridge, Cambridge, UK}
\affil[4]{Inria, T\'{e}l\'{e}com Paris - LTCI, Institut Polytechnique de Paris, Palaiseau, France}
\affil[5]{Zentrum Mathematik, Technische Universit\"{a}t M\"{u}nchen, Garching, Germany}
\affil[6]{\normalsize Univ Lyon, ENS Lyon, UCBL, CNRS, Inria, LIP, F-69342, Lyon Cedex 07, France}
\affil[7]{\normalsize Department of Computer Science, University of Warwick, Coventry, UK}
\date{}
\begin{document}
\maketitle

\begin{abstract}
    Learning tasks play an increasingly prominent role in quantum information and computation. They range from fundamental problems such as state discrimination and metrology over the framework of quantum probably approximately correct (PAC) learning, to the recently proposed shadow variants of state tomography. 
    However, the many directions of quantum learning theory have so far evolved separately.
    We propose a general mathematical formalism for describing quantum learning by training on classical-quantum data and then testing how well the learned hypothesis generalizes to new data.
    In this framework, we prove bounds on the expected generalization error of a quantum learner in terms of classical and quantum information-theoretic quantities measuring how strongly the learner's hypothesis depends on the specific data seen during training.
    To achieve this, we use tools from quantum optimal transport and quantum concentration inequalities to establish non-commutative versions of decoupling lemmas that underlie recent information-theoretic generalization bounds for classical machine learning.
    
    Our framework encompasses and gives intuitively accessible generalization bounds for a variety of quantum learning scenarios such as quantum state discrimination, PAC learning quantum states, quantum parameter estimation, and quantumly PAC learning classical functions.
    Thereby, our work lays a foundation for a unifying quantum information-theoretic perspective on quantum learning.
\end{abstract}

\clearpage
\tableofcontents
\clearpage

\section{Introduction}\label{section:introduction}

The intersection of machine learning and quantum physics has developed into a vibrant area of research.
On the one hand, along the lines of using (at least partially) quantum learners for classical data, there are proposals for machine learning models based on quantum circuits \cite{biamonte2017quantum, dunjko2018machine, havlivcek2019supervised}, such as the so-called variational quantum machine learning models and quantum kernel methods.
On the other hand, there has been significant progress in learning from quantum data. Inspired by ``pretty good tomography'' \cite{Aaronson2007}, viewing quantum experiments through the lens of learning from quantum data has given rise to `shadow' protocols \cite{aaronson2019shadow, huang2020predicting} that use few copies of an unknown quantum state to predict many of its properties. The learning perspective has also led to insights into the potential for quantum advantage of fully quantum over conventional experiments \cite{huang2021information, aharonov2022quantum, chen2022exponential, chen2022quantum, huang2022quantum-advantage, caro2022learning, chen2023unitarity, chen2023does}.
Moreover, from the viewpoint of computer science, quantum theory allows for new kinds of oracular access to an unknown object that is to be learned \cite{bshouty1998learning}, and thus potentially (though not always) for more efficient learning algorithms \cite{arunachalam2017survey}.
Even fundamental problems of quantum information theory, such as state or process tomography \cite{haah2016sample, odonnell2016efficient, haah2023query, zhao2023learning} or state discrimination \cite{helstrom1969quantum, holevo1974remarks, yuen1975optimum}, can be interpreted as tasks of learning from quantum data \cite{Guta2010, Sentis2019Unsupervised}.

As quantum machine learning and quantum learning theory have grown, so has the number of different quantum learning scenarios and mathematical descriptions thereof.
This is reminiscent of the plethora of approaches to generalization and sample complexity bounds in classical machine learning theory \cite{vapnik1971uniform, pollard1984convergence, littlestone1986relating, kearns1994efficient, dudley1999uniform, mcallester1999some, bousquet2002stability, bartlett2002rademacher, dwork2006calibrating}. Recently, information-theoretic generalization bounds \cite{hellstroem2023generalization}, going back to \cite{Xu2017, russo2019much}, have emerged as a promising approach towards unifying these varied results. Furthermore, they may help overcome the limitations of uniform generalization bounds \cite{zhang2017understanding, zhang2021understanding}, which have recently also been pointed out for quantum machine learning models \cite{gil-fuster2023understanding}.
However, a similarly unifying perspective on quantum learning has so far been lacking.

In the spirit of unification, we propose a general mathematical framework for quantum learning procedures that train on data composed of classical samples as well as (copies of) quantum data states, and then produce a classical and/or quantum hypothesis to be used for prediction on new classical-quantum data.
We prove that the generalization behavior of such quantum learners -- that is, how well they generalize from available training data to previously unseen data -- can be controlled through classical and quantum information-theoretic quantities, which quantify how much information the learner's hypothesis contains about the data, combined with concentration properties of the loss observables used for the training. 
We demonstrate several applications of this quantum version of the central insight from \cite{Xu2017, russo2019much}. To mention a few, it allows us to provide a new perspective on quantum state classification tasks \cite{Guta2010}, and recover the seminal result of \cite{Aaronson2007} on probably approximately correctly learning quantum states as well as the results of \cite{chung2021sample, caro2021binary, fanizza2022learning} on learning state preparation procedures.

\subsection{Main results}

Our first contribution is a unifying framework capable of capturing a wide variety of quantum learning problems. Having formulated the framework, we then use it to prove information-theoretic generalization error bounds for quantum learners and demonstrate applications of our bounds to learning quantum states, learning classical functions from entangled quantum data, and quantum state classification. 

\subsubsection{Unified information-theoretic framework}\label{subsection:unified-framework}

To give an impression of the varied landscape of quantum learning, let us briefly examine three influential learning tasks that deal with quantum information. The learner in \cite{Aaronson2007} takes as input 
classical data associated with an $n$-qubit mixed quantum state $\rho$ obtained through measurements. In particular, it receives a classical description of observables $E_1\ldots,E_m$ drawn i.i.d.~according to an unknown distribution $P$ over effect operators, and the corresponding expectation values $\tr[ E_i \rho]$. It addresses the task of Probably Approximately Correctly (PAC) learning a classical representation $\hat{\rho}$ of the unknown state, where the figure of merit is the quality of the hypothesis $\hat{\rho}$ in approximating the expectation values $\tr[O\rho ]$ of new observables $O\sim P$ drawn from the same distribution $P$. 

\cite{bshouty1998learning} proposed a quantum input model for learning classical functions: The quantum algorithm takes copies of the superposition example $\sum_x \sqrt{P(x)}\ket{x,f(x)}$.
This can be viewed as a quantum version of classical access to pairs $(x_i,f(x_i))$ where the input points $x_i\sim P$ are drawn i.i.d.~from the unknown data distribution $P$. The learner is tasked with producing a classical hypothesis $h$ that, with high probability, agrees with $f$ on a set of inputs that has large probability under $P$.

Quantum parameter estimation is a fundamental task in quantum metrology \cite{giovannetti2006quantum} and quantum sensing applications \cite{degen2017quantum}.
Here, given access to copies of a parameter-dependent quantum state $\rho (\theta)$ with an unknown parameter vector $\theta$, one aims to perform a measurement and observe an outcome $\hat{\theta}$ that serves as an accurate estimate for $\theta$. 
(The mapping $\theta\mapsto\rho(\theta)$ may be known beforehand.)
We can interpret the task of identifying such a measurement as one of learning a (probabilistic) mapping from quantum data to a classical parameter vector that approximates the true underlying parameter-state connection.

At first sight, the three tasks described above differ in important aspects: The learners handle different kinds of inputs -- classical data, pure states, or mixed states -- and produce different kinds of outputs -- a classically described quantum state, a classical function, or a parameter estimate. 
Indeed, the approaches and techniques employed in solving these problems vary widely. 
This raises the following question: Can we define and analyze quantum learners in a framework that simultaneously captures these (and more) different quantum learning tasks?
Taking inspiration from recent developments at the intersection of classical learning theory and information theory, we now propose such a framework.

\paragraph{Learners as maps.}
It is well established that classical randomized (supervised) learning algorithms can be modeled as channels. They take as input a random variable called the \emph{training data}, which is a set $S=(Z_1,\ldots,Z_m)$ of $|S|=m$ i.i.d.~data points drawn from a probability distribution $P$ over an \emph{instance space} $\mathsf{Z}$. 
The output of a learner is a random variable called the \emph{hypothesis} taking values in a \emph{hypothesis space} $\mathsf{W}$. We often think of the input domain $\mathsf{Z}$ as being a Cartesian product $\mathsf{Z}=\mathsf{X}\times\mathsf{Y}$, and the hypothesis space $\mathsf{W}$ as a subset of $\mathsf{Y}^{\mathsf{X}}$, so that a hypothesis is in fact a (randomized) \emph{function} $w:\mathsf{X}\to\mathsf{Y}$. 
The learner can then be identified with a conditional probability distribution $P(W|S)$ for the hypothesis given the data.

In analogy, we propose to think of \emph{quantum} learning algorithms $\alg$ as quantum procedures that take as input data represented by a quantum state $\rho$ coming from a quantum instance space $\mathcal{Z}$. The output of a quantum learner is a hypothesis state taking values in a space $\mathcal{W}$. In particular, without loss of generality, we can take $\mathcal{Z}$ to be a space of \emph{classical-quantum} ``CQ states'', that is, states of the form 
\begin{equation}
\label{eq:intro-cq}
    \rho 
    = \E_{S\sim P^m}\left[\ketbra{S}{S}\otimes \rho(S)^{\vphantom{make the brackets bigger}}\right], 
\end{equation}
where $\rho(S)$ is a quantum state on the Hilbert space $\mathcal{H}_{\mathrm{train}}$. 
Typically, we consider $\mathcal{H}_{\mathrm{train}}\cong \bigotimes_{i=1}^m\C^d$, where $d$ is the local dimension, and assume the quantum training data to factorize as $\rho(S)=\bigotimes_{i=1}^m \rho(Z_i)$ for $d$-dimensional states $\rho(Z_i)$. 
Similarly $\mathcal{W}$ consists of states of the form 
\begin{equation}
\label{eq:intro-hypstate}
    \sigma^{\alg}
    =\E_{(S,W)\sim P^{\alg}}\left[\ketbra{S,W}{S,W}\otimes \sigma^{\alg}(S,W)^{\vphantom{make the brackets bigger}}\right],
\end{equation}
where $P^\alg$ is a joint distribution over data and hypothesis induced by the learner, and $\sigma^{\alg}(S,W)$ is a quantum state on the Hilbert space $\mathcal{H}_{\mathrm{hyp}}$.
The learning procedure consists of two steps that can be iterated: measurement and post-processing. The measurements may be implemented by positive operator-valued measures (POVMs), and the associated instruments.
Here, a POVM maps states to classical probability distributions over outcomes, and the instruments give the corresponding mappings to post-measurement states.
Post-processing can include randomized classical post-processing of the measurement outcomes as well as quantum post-processing of the post-measurement states. Learners that use only a single iteration are \emph{non-adaptive}.

\paragraph{Risk for classical learners.} In classical learning theory, the performance of a hypothesis on a data point is evaluated by a loss function $\ell:\mathsf{W}\times\mathsf{Z}\to\R_{\geq 0}$. Accordingly, the \emph{true risk} of a hypothesis $w\in\mathsf{W}$ relative to the distribution $P$ is 
\begin{equation}
    R_P(w)
    = \E_{Z\sim P}[\ell (w,Z)]. 
\end{equation}
The goal of a learner is to output a randomized hypothesis $W$ that has small true risk $R_P(W)$, either in expectation or with high success probability. However, the data distribution $P$ is typically unknown, so the learner cannot directly evaluate $R_P(w)$ for a candidate hypothesis $w$.
Instead, the average loss of a hypothesis on available training data serves as a proxy for the true risk. For training data $S=(Z_1,\ldots,Z_m)$ and hypothesis $w\in\mathsf{W}$, the \emph{empirical risk} is defined by
\begin{equation}\label{eq:empirical-risk-classical}
    \hat{R}_S(w)
    = \frac{1}{m}\sum_{i=1}^m \ell(w,Z_i)\, .
\end{equation}
In contrast to $R_P(w)$, a classical learner with access to $S$ can in principle evaluate $\hat{R}_S(w)$ for any $w\in\mathsf{W}$.
When the focus is on the average performance of a learner, the quality of $\hat{R}_S(W)$ as a proxy for $R_P(W)$ may be quantified by the \emph{expected generalization error}
\begin{equation}
\label{eq:gen-err-def-xr17}
    \operatorname{gen}_P (\alg)
    = \E_{(S,W)\sim P^{\alg}} \left[R_P(W) - \hat{R}_S(W)\right] .
\end{equation}
In this work, we refer to bounds on $\operatorname{gen}_P (\alg)$ simply as generalization bounds\footnote{Concentration bounds for the generalization error are also often of interest, but we primarily consider bounds in expectation in this article.}. Such generalization bounds then give rise to guarantees on when successful training, as quantified by small empirical risk, also leads to small true risk.

\paragraph{Risk for quantum learners.} In translating the above classical recipe for evaluating the performance of a learner to the quantum scenario, we encounter a fundamental obstacle: In general, quantum data cannot be reused. Quantum training data that has been used for training may be irreversibly modified by measurements and post-processing, and cannot simply be reused for evaluating the empirical risk of a hypothesis obtained at the end of the training process. 

Therefore, we extend our quantum framework by introducing an additional quantum system to capture test data. That is, we now allow $\rho(S)$ in the quantum data state of \Cref{eq:intro-cq} to be states on a composite Hilbert space $\mathcal{H}_{\mathrm{data}}=\mathcal{H}_{\mathrm{test}}\otimes \mathcal{H}_{\mathrm{train}}$. 
Note that $\rho(S)$ can be correlated or even entangled across the test-train bipartition of the data Hilbert space.
As before, the action of the learner on the training data subsystem then leads to a hypothesis state as in \Cref{eq:intro-hypstate},
with the difference that $\sigma^{\alg}(S,W)$ is now a quantum state on the Hilbert space $\mathcal{H}_{\mathrm{test}}\otimes \mathcal{H}_{\mathrm{hyp}}$.

Lifting the notion of loss function to the stature of a quantum observable, we work with a family of (Hermitian and nonnegative) loss observables $\{L(S,W)\}\subset\mathcal{B}(\mathcal{H}_{\mathrm{test}}\otimes \mathcal{H}_{\mathrm{hyp}})$. We then define the \emph{expected empirical risk} of the quantum learner $\alg$ as the expectation value of the observable $L(S,W)$ on the hypothesis state $\sigma (S,W)$, further averaged over $P^{\alg}$. That is, 
\begin{equation}
    \hat{R}_\rho (\alg)
    = \E_{(S,W)\sim P^{\alg}}\left[\tr[L(S,W)\sigma^{\alg}(S,W)]^{\vphantom{make the brackets bigger}}\right]\,.
\end{equation}
In contrast, we define the \emph{expected true risk} of $\alg$ as
\begin{equation}\label{eq:informal-empirical-risk}
    R_\rho (\alg)
    = \E_{(\bar{S},\bar{W})\sim P^{\alg}_{\mathsf{Z}^m}\otimes P^{\alg}_{\mathsf{W}}}\left[\tr\left[L(\bar{S},\bar{W}) \left(\rho_{\mathrm{test}}(\bar{S})\otimes \sigma^{\alg}_{\mathrm{hyp}}(\bar{S},\bar{W})\right)\right]\right] ,
\end{equation}
where we have ``decoupled'' the quantum test and training data systems before letting the learner act, and we have also decoupled the classical training data and hypothesis random variables. 
Here, a state with a subscript denotes a reduced density matrix obtained by tracing out the other subsystems. Mathematically, this is achieved by a partial trace, for example, we have $\rho_{\mathrm{test}}(\bar{S}) = \tr_{\mathrm{train}}[\rho(\bar{S})]$ and $\sigma^{\alg}_{\mathrm{hyp}}(\bar{S},\bar{W}) = \tr_{\mathrm{test}}[\sigma^{\alg}(\bar{S},\bar{W})]$.
As in the classical case, we define the expected generalization error as the difference between expected true and empirical risks, 
\begin{equation}\label{eq:informal-true-risk}
    \operatorname{gen}_\rho (\alg)
    = R_\rho (\alg) - \hat{R}_\rho (\alg).
\end{equation}
Our main goal is to bound $\operatorname{gen}_\rho (\alg)$ in terms of properties of the CQ data $\rho$, the loss observables $L(S,W)$, and the learner $\alg$.
We visualize our framework for quantum learners in \Cref{fig:framework}.

In theory, we can consider alternative notions of decoupling, and indeed, alternative definitions for the quantum risks. These notions may also be tailored differently in order to capture the essence of what is important in the learning task at hand. In \Cref{subsection:comparison-frameworks}, we motivate our decoupling approach to the definition of true risk and generalization error by a comparison to the classical framework, and demonstrate how it extends established notions from classical learning theory.
In addition to reducing to the expected empirical and true risk in the classical case, these choices give rise to natural notions of risks and generalization error for a variety of quantum learning tasks (see \Cref{sec:applications}). 
Moreover, our definitions account for the desiderata that $\hat{R}_\rho (\alg)$ should incorporate all aspects in which the learner's actions ``contaminate'' the test data, whereas the test data in $R_\rho (\alg)$, both classical and quantum, must be completely untarnished by the learner.
This justifies \Cref{eq:informal-empirical-risk,eq:informal-true-risk} as the quantum extension of \cite{Xu2017}'s change-of-measure/decoupling perspective on classical generalization analysis.

\begin{figure}
    \centering
    \includegraphics[width=0.9\textwidth]{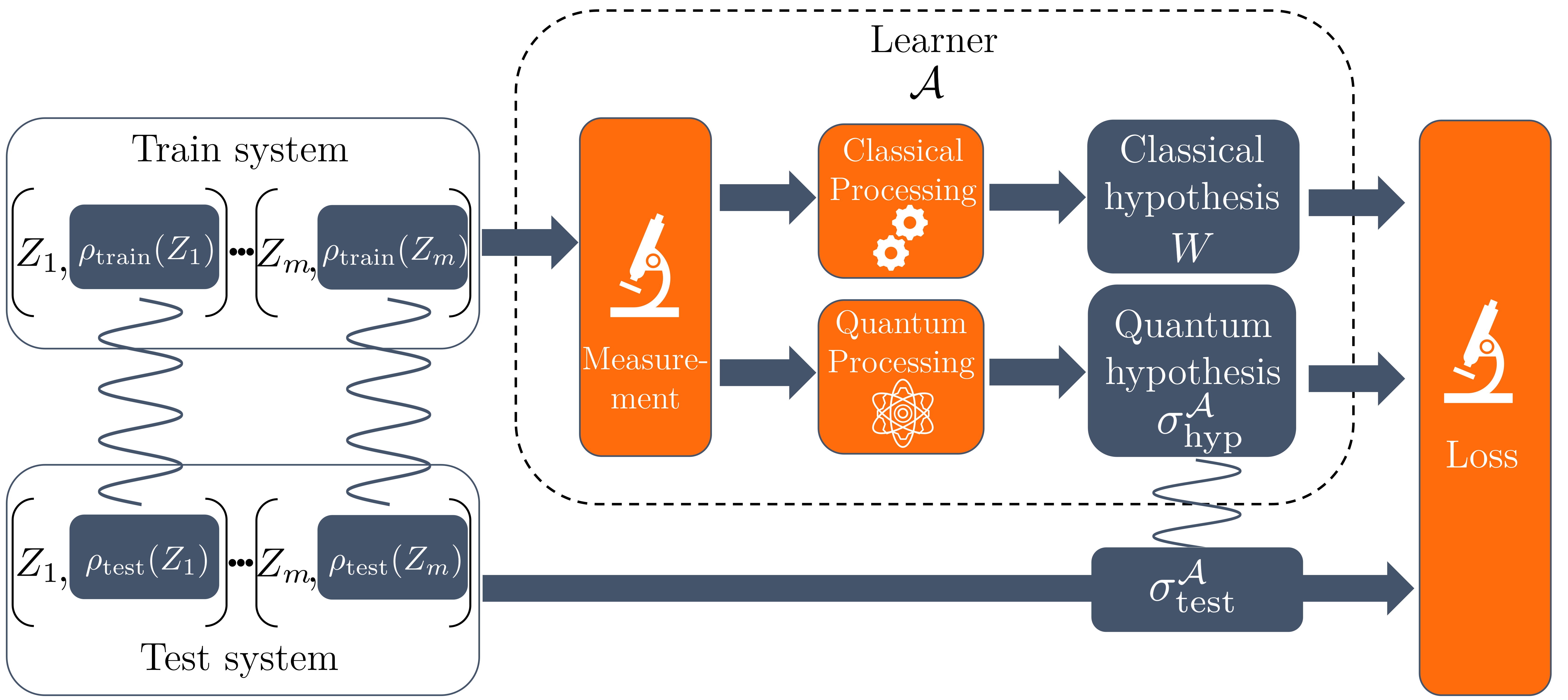}
    \caption{\textbf{Framework for learning from classical-quantum data:} The quantum learner $\alg$ acts on the classical data and on the training subsystem of the quantum data via a measurement followed by classical and quantum post-processing. The performance of the resulting classical and quantum hypotheses are then evaluated via a loss measurement that also takes the testing subsystem of the quantum data into account. The training and testing subsystems may initially be correlated or even entangled.}
    \label{fig:framework}
\end{figure}

\subsubsection{Generalization error bounds}\label{section:generelization-bounds-informal}

\begin{table}[htb]
  \begin{center}
      \renewcommand{\arraystretch}{1.4}
      \begin{tabular}{|l|c|}
        \toprule
        Object & Notation \\
        \midrule
        Input data CQ state & $\rho$\\
        Learner output & $\sigma^\alg$\\
        Decoupled learner output & $\tau^{\alg} = \rho_{\mathrm{test}}\otimes\sigma^\alg_{\mathrm{hyp}}$\\
        Loss observable & $L$ \\
        Probability density of classical data & $P$ \\
        Quantum mutual information & $I(\cdot~\!;\cdot)_{\small{\bullet}}$\\
        Holevo information & $\chi(\{\cdot~\!,\cdot\})$\\
        Quantum log-MGF bound & $\psi_{\pm}$ \\
        Classical log-MGF bound & $\phi_{\pm}$ \\
        \bottomrule
      \end{tabular}
  \end{center}
  \caption{Notation for the various mathematical objects appearing in this section.}
  \label{table:notation}
\end{table}

\paragraph{Assumptions.} 
The framework and formalism described above can capture a variety of learning scenarios. In order to prove bounds on the generalization error, we will assume mild properties to be satisfied by the learner, the data, and the loss observables.
To avoid clutter, in the following we suppress dependencies on $S,W$ in the notation where it is clear from the context. That is, we write $\sigma$ instead of $\sigma (S,W)$ and $L$ instead of $L(S,W)$.
Additionally, we will frequently denote $\tau^{\alg} = \tau^{\alg}(s,w) =\rho_{\mathrm{test}} (s)\otimes \sigma^{\alg}_{\mathrm{hyp}}(s,w) =\rho_{\mathrm{test}}\otimes \sigma^{\alg}_{\mathrm{hyp}}$. 
Here and throughout, a state with a subsystem subscript denotes the reduced state on that subsystem.
(See \Cref{table:notation} for an overview of notation.)

As in many classical works on this subject, bounds on the moment generating function (MGF) allow for characterizing the concentration properties of the value of the loss observable around its expectation value. However, due to the noncommutative setting at hand, we consider the following two generalizations:
\begin{itemize}
    \item[({\tiny{QMGF}})] \textbf{Quantum MGF/tail bound}: For every $(s,w)\in\mathsf{Z}^m\times\mathsf{W}$, let the logarithmic quantum moment generating function of the loss observable $L$ with respect to the product state $\tau^{\alg}$ 
    be bounded by convex functions $\psi_+, \psi_-:\mathbb{R}\to\mathbb{R}$ which satisfy $\psi_\pm (0)=\psi'_\pm (0)=0$, i.e.\    
    \begin{equation}
    \label{eqinf:qmgf}
        \log \tr\left[\tau^{\alg} e^{\lambda \left( L - \tr[L \tau^{\alg}]\mathbbm{1}\right)}\right] 
        \leq \begin{cases} \psi_+ (\lambda)\quad \textrm{if } \lambda \geq 0 \\ \psi_- (\lambda)\quad \textrm{if } \lambda <0 \end{cases} \tag{\texttt{QMGF}}. 
    \end{equation}

    \item[({\tiny{CMGF}})] \textbf{Classical MGF/tail bound:}
    For every $w\in\mathsf{W}$, let the logarithmic moment generating function of the expectation value $\tr[L\tau^{\alg}]$ of the loss observable $L$ in the product state $\tau^{\alg}$, viewed as a random variable, be bounded by convex functions $\phi_+, \phi_-:\mathbb{R}\to\mathbb{R}$ which satisfy $\phi_\pm (0)=\phi'_\pm (0)=0$, i.e.,  
    \begin{equation}
    \label{eqinf:cmgf}
           \log \E_{S\sim P^m}\left[e^{\lambda \left(\tr[L \tau^{\alg}] - \E_{S\sim P^m}\left[\tr[L\tau^{\alg}]\right]\right)}\right]
            \leq \begin{cases} \phi_+ (\lambda)\quad \textrm{if } \lambda \geq 0 \\ \phi_- (\lambda)\quad \textrm{if } \lambda <0 \end{cases} \tag{\texttt{CMGF}}.
    \end{equation} 
\end{itemize}

If the convex functions $\psi_\pm$ and $\phi_\pm$ are of the form $\lambda\mapsto \tfrac{\alpha^2 \lambda^2}{2}$ and $\lambda\mapsto \tfrac{\beta^2 \lambda^2}{2}$, respectively, then we speak of an $\alpha$-sub-gaussian QMGF and a $\beta$-sub-gaussian CMGF. We describe some scenarios of interest in which these assumptions are satisfied in \Cref{section:applications-informal}.

\paragraph{Generalization bounds.}
Can the generalization error of the quantum learner $\alg$ on the data $\rho$ be controlled in terms of quantities that we can interpret, giving us a handle on how one can produce a hypothesis that attains a balance between fitting the training data and performing well on unseen data? We answer this question in the affirmative, and show that assuming classical and quantum MGF bounds allows us to control the generalization error via quantities measuring the classical and quantum information shared between data and hypothesis. 

Our first main result is the following generalization bound for quantum learners.

\begin{mtheorem}[Classical and quantum information-theoretic generalization bound. Informally stated; see \Cref{theorem:qmi-gen-bound-qmgf-and-cmgf}]
\label{thminf:qmi-gen-bound-qmgf-and-cmgf}
    \textit{If the classical-quantum data state $\rho$ and the loss observable satisfy \eqref{eqinf:qmgf} and \eqref{eqinf:cmgf}, then the expected generalization error of $\alg$ satisfies}
    \small
    \begin{equation}
    \label{eqinf:qmi-gen-bound-qmgf-and-cmgf}
        \pm \gen_\rho (\alg)
        \leq \psi_\mp^{\ast -1}\left(\E_{(S,W)\sim P^{\alg}}\left[ I(\mathrm{test};\mathrm{hyp})_{\sigma^\alg}\right] + \E_{S\sim P^m}\left[ \chi\left(\{P^{\mathcal{A}}_{\mathsf{W}|S}(w), \rho^{\alg}_{\mathrm{test}} (S,w)\}_{w}\right)\right]\right) + \phi_\mp^{\ast -1}(I(S; W)) \, ,
    \end{equation}
    \normalsize
    \textit{where $\psi_\mp^{\ast -1}$ and $\phi_\mp^{\ast -1}$ denote the inverses of the Legendre transforms of $\psi_\mp$ and $\phi_{\mp}$. }
\end{mtheorem}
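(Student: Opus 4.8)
The plan is to mirror the change-of-measure argument behind the classical information-theoretic bounds of \cite{Xu2017}: split $\gen_\rho(\alg)$ into a purely classical \emph{decoupling} of the data--hypothesis pair $(S,W)$ and a genuinely quantum \emph{decoupling} of the test and hypothesis registers, control each piece by the matching information quantity via a Donsker--Varadhan-type variational inequality together with the corresponding MGF hypothesis, and finally invert the Legendre transforms. By the symmetry $L\leftrightarrow -L$ (equivalently $\psi_\pm,\phi_\pm\leftrightarrow\psi_\mp,\phi_\mp$), it suffices to bound $+\gen_\rho(\alg)$; the case $-\gen_\rho(\alg)$ is identical with all auxiliary parameters negated. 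Concretely, I would write $g(s,w)=\tr[L(s,w)\,\tau^{\alg}(s,w)]$ and $h(w)=\E_{S\sim P^m}[g(S,w)]$, so that $R_\rho(\alg)=\E_{W}[h(W)]$ (using that the $S$-marginal of $P^{\alg}$ is $P^m$) while $\hat R_\rho(\alg)=\E_{(S,W)\sim P^{\alg}}[\tr[L\,\sigma^{\alg}]]$, and decompose
\[
\gen_\rho(\alg)=\underbrace{\big(\E_{W}[h(W)]-\E_{P^{\alg}}[g]\big)}_{\mathrm{(I)}}\;+\;\underbrace{\E_{P^{\alg}}\big[\,g-\tr[L\,\sigma^{\alg}]\,\big]}_{\mathrm{(II)}}.
\]

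For the \emph{classical term} $\mathrm{(I)}$, set $\tilde g(s,w)=g(s,w)-h(w)$, so that $\E_{S\sim P^m}[\tilde g(S,w)]=0$ for every $w$; since $h(W)$ does not depend on $S$, its contribution cancels and $\mathrm{(I)}=-\E_{P^{\alg}}[\tilde g]$. Applying the Donsker--Varadhan inequality to the joint law $P^{\alg}$ against the product of its marginals, with test function $-\mu\tilde g$ for $\mu>0$, and using \eqref{eqinf:cmgf} (whose centering is exactly $h(w)$) to bound $\log\E_{P^m\otimes P^{\alg}_{\mathsf{W}}}[e^{-\mu\tilde g}]\le\phi_-(-\mu)$, gives $\mathrm{(I)}\le \mu^{-1}\big(I(S;W)+\phi_-(-\mu)\big)$; minimizing over $\mu>0$ and invoking the standard Legendre inversion identity yields $\mathrm{(I)}\le\phi_-^{\ast-1}(I(S;W))$.

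For the \emph{quantum term} $\mathrm{(II)}$ the crucial ingredient is the noncommutative change-of-measure inequality $\tr[\rho X]\le D(\rho\|\sigma)+\log\tr[\sigma\,e^{X}]$, which follows from the Gibbs variational principle combined with Golden--Thompson. I would apply it with $\rho=\sigma^{\alg}(s,w)$, $\sigma=\tau^{\alg}(s,w)$ and $X=-\nu\big(L-\tr[L\,\tau^{\alg}]\one\big)$ for $\nu>0$, and use \eqref{eqinf:qmgf} to bound the ensuing log-MGF by $\psi_-(-\nu)$, obtaining $\tr[L\tau^{\alg}]-\tr[L\sigma^{\alg}]\le\nu^{-1}\big(D(\sigma^{\alg}\|\tau^{\alg})+\psi_-(-\nu)\big)$; averaging over $P^{\alg}$ bounds $\mathrm{(II)}$ by $\nu^{-1}\big(\E_{P^{\alg}}[D(\sigma^{\alg}\|\tau^{\alg})]+\psi_-(-\nu)\big)$. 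It then remains to identify $\E_{P^{\alg}}[D(\sigma^{\alg}\|\tau^{\alg})]$ with the two information terms. By the chain rule $D(\rho_{AB}\|\omega_A\otimes\rho_B)=I(A;B)_\rho+D(\rho_A\|\omega_A)$, taken with $A=\mathrm{test}$, $B=\mathrm{hyp}$, $\omega_A=\rho_{\mathrm{test}}(s)$,
\[
D\big(\sigma^{\alg}(s,w)\,\big\|\,\tau^{\alg}(s,w)\big)=I(\mathrm{test};\mathrm{hyp})_{\sigma^{\alg}(s,w)}+D\big(\sigma^{\alg}_{\mathrm{test}}(s,w)\,\big\|\,\rho_{\mathrm{test}}(s)\big),
\]
and since the learner acts only on $\mathcal{H}_{\mathrm{train}}$, discarding its classical outcome $W$ and quantum hypothesis returns the untouched test marginal, i.e.\ $\rho_{\mathrm{test}}(s)=\E_{W\sim P^{\alg}_{\mathsf{W}|S=s}}[\sigma^{\alg}_{\mathrm{test}}(s,W)]$; hence $\E_{W|S=s}\,D(\sigma^{\alg}_{\mathrm{test}}(s,W)\|\rho_{\mathrm{test}}(s))$ is exactly the Holevo information $\chi(\{P^{\alg}_{\mathsf{W}|S}(w),\rho^{\alg}_{\mathrm{test}}(s,w)\}_w)$. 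Taking the expectation over $S$ as well reproduces the bracketed quantity in \eqref{eqinf:qmi-gen-bound-qmgf-and-cmgf}; minimizing over $\nu>0$ gives the $\psi_-^{\ast-1}$ contribution, and adding the two bounds proves the theorem.

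I expect the genuine difficulty to lie in the quantum step, on two counts. First, one needs a change-of-measure inequality compatible with the \emph{particular} form of the quantum log-MGF in \eqref{eqinf:qmgf}, which is why Golden--Thompson --- or, more robustly, the quantum concentration/optimal-transport estimates developed earlier in the paper --- must be layered on top of the Gibbs variational principle. Second, the inequality must be tight enough that the ``excess'' relative-entropy term $D(\sigma^{\alg}_{\mathrm{test}}\|\rho_{\mathrm{test}})$ produced by decoupling the test register collapses to precisely the Holevo information rather than something larger; this hinges on the non-selective action of the learner leaving the test marginal invariant. The remaining work is bookkeeping: verifying that these identities survive the successive averagings over $W$, $S$, and $(S,W)\sim P^{\alg}$, and that the relevant support conditions keep all relative entropies finite.
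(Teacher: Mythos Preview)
Your proposal is correct and mirrors the paper's proof essentially step for step: the same two-term decomposition into a classical and a quantum decoupling, the same use of Petz's variational formula (your ``Gibbs variational principle'') combined with Golden--Thompson for the quantum step, the same Donsker--Varadhan argument for the classical step, and the same chain-rule identification of $\E_{P^{\alg}}[D(\sigma^{\alg}\|\tau^{\alg})]$ as expected QMI plus expected Holevo information via $\sigma^{\alg}_{\mathrm{test}}=\rho^{\alg}_{\mathrm{test}}$ and $\E_{W|S}[\rho^{\alg}_{\mathrm{test}}(S,W)]=\rho_{\mathrm{test}}(S)$. The only cosmetic difference is that you average over $(S,W)$ before optimizing the auxiliary parameter $\nu$, whereas the paper optimizes pointwise and then applies Jensen to the concave $\psi_\mp^{\ast-1}$; both routes yield the same bound.
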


In \Cref{eqinf:qmi-gen-bound-qmgf-and-cmgf}, the following quantities from classical and quantum information theory appear: 
$I(S;W)$ is the classical mutual information (MI) between the training data and hypothesis random variables $S$ and $W$.
$I(\mathrm{test};\mathrm{hyp})_{\sigma^\alg} = I(\mathrm{test};\mathrm{hyp})_{\sigma^\alg(S,W)}$ denotes the quantum mutual information (QMI) between test and hypothesis systems in the output state $\sigma^\alg (S,W)$ produced by the learner.
Finally, $\chi (\{P(x),\rho(x)\}_{x\in\mathsf{X}})$ denotes the Holevo information of an ensemble of quantum states, which is connected to how much information about $x\sim P$ can be extracted from $\rho(x)$. It is given by $H(\mathbb{E}_{x\sim P}[\rho(x)]) - \mathbb{E}_{x\sim P}[H(\rho(x))]$, the difference between the (von Neumann) entropy of the average state and the expected (von Neumann) entropy of a state drawn from the ensemble. 
We formally define these quantities in \Cref{section:preliminaries}.

\Cref{thminf:qmi-gen-bound-qmgf-and-cmgf} provides a theoretical guideline for designing a learner $\alg$. 
Namely, we expect better generalization performance for learners whose measurements and post-processing do not induce strong correlations between the available data set and the output hypothesis.
Naturally, we inherit a caveat from classical learning theory: Learning typically requires both good performance on the training data and good generalization.
Thus, our bounds provide an information-theoretic perspective on the bias-variance trade-off in quantum learning.
On the one hand, for good training performance, a learner has to extract information about the underlying concept from the available classical-quantum data. 
On the other hand, for good generalization, the amount of extracted/accessible classical and quantum information should be limited.

In the sub-gaussian case, the inverse Legendre transforms can be computed explicitly and the generalization error bound takes an appealingly simple form. 

\begin{mcorollary}[Informally stated; see \Cref{corollary:qmi-gen-bound-qmgf-and-cmgf-subgaussian}]
\label{corinf:qmi-gen-bound-qmgf-and-cmgf-subgaussian}
    If the classical-quantum data and the loss observable satisfy an $\alpha$-sub-gaussian \eqref{eqinf:qmgf} and a $\beta$-sub-gaussian \eqref{eqinf:cmgf} condition, then 
    \small
    \begin{equation}
        \lvert \gen_\rho (\alg)\rvert
        \leq \sqrt{2\alpha^2 \left( \E_{(S,W)\sim P^{\alg}}\left[ I(\mathrm{test};\mathrm{hyp})_{\sigma^\alg}\right]+ \E_{S\sim P^m}\left[ \chi\left(\{P^{\mathcal{A}}_{\mathsf{W}|S}(w), \rho^{\alg}_{\mathrm{test}} (S,w)\}_{w}\right)\right]\right)} + \sqrt{2 \beta^2 I(S;W)}\; .
    \end{equation}
    \normalsize
\end{mcorollary}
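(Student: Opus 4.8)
The plan is to obtain the corollary directly from the formal version of \Cref{thminf:qmi-gen-bound-qmgf-and-cmgf} by specializing the convex functions $\psi_\pm,\phi_\pm$ to the sub-gaussian form and computing the relevant (one-sided) Legendre transforms in closed form. Under the hypotheses of the corollary one takes $\psi_+(\lambda)=\psi_-(\lambda)=\tfrac{\alpha^2\lambda^2}{2}$ and $\phi_+(\lambda)=\phi_-(\lambda)=\tfrac{\beta^2\lambda^2}{2}$; these are convex and nonnegative and satisfy $\psi_\pm(0)=\psi'_\pm(0)=0$ and $\phi_\pm(0)=\phi'_\pm(0)=0$, so \eqref{eqinf:qmgf} and \eqref{eqinf:cmgf} hold with these choices and the theorem applies verbatim.

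The next step is the elementary computation of the Legendre transform of $q(\lambda)=\tfrac{c^2\lambda^2}{2}$: for $x\geq 0$,
\[
    q^{\ast}(x)=\sup_{\lambda\in\R}\left(\lambda x-\tfrac{c^2\lambda^2}{2}\right)=\frac{x^2}{2c^2},
\]
the supremum being attained at $\lambda=x/c^2\geq 0$. Hence $q^{\ast}$ is a strictly increasing bijection of $[0,\infty)$ onto $[0,\infty)$, with inverse $q^{\ast-1}(y)=\sqrt{2c^2y}$. Applying this with $c=\alpha$ to $\psi_\mp$ and with $c=\beta$ to $\phi_\mp$, and substituting into \Cref{eqinf:qmi-gen-bound-qmgf-and-cmgf}, gives
\[
    \pm\gen_\rho(\alg)\leq\sqrt{2\alpha^2\left(\E_{(S,W)\sim P^{\alg}}\left[I(\mathrm{test};\mathrm{hyp})_{\sigma^\alg}\right]+\E_{S\sim P^m}\left[\chi\left(\{P^{\mathcal{A}}_{\mathsf{W}|S}(w),\rho^{\alg}_{\mathrm{test}}(S,w)\}_{w}\right)\right]\right)}+\sqrt{2\beta^2 I(S;W)}.
\]
Because $\psi_+=\psi_-$ and $\phi_+=\phi_-$ in the sub-gaussian case, the $\mp$ subscripts disappear and the right-hand side is the same for both choices of sign, so the two inequalities for $+\gen_\rho(\alg)$ and $-\gen_\rho(\alg)$ combine into the stated bound on $\lvert\gen_\rho(\alg)\rvert$. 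For the expression to be well defined one only needs the arguments of the square roots to be nonnegative, which holds since the quantum mutual information, the Holevo information, and the classical mutual information are all nonnegative, hence so are their expectations.

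I do not expect a genuine obstacle: all the analytic content — the change-of-measure/decoupling argument and the noncommutative decoupling lemmas underlying \Cref{thminf:qmi-gen-bound-qmgf-and-cmgf} — is already packaged in that theorem, and this corollary is a routine specialization. The only points needing a little care are (i) checking that the Gaussian $\psi_\pm,\phi_\pm$ meet the normalization hypotheses of the theorem, and (ii) correctly identifying the one-sided Legendre transform and its inverse on $[0,\infty)$; both are trivial. Alternatively, one could give a self-contained proof by re-running the proof of the main theorem with these explicit Gaussian functions from the outset and optimizing the free parameter $\lambda$ in the Chernoff/Donsker--Varadhan step in closed form, but invoking the theorem is cleaner.
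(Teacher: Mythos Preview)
Your proposal is correct and matches the paper's own proof essentially verbatim: the paper also derives the corollary by specializing \Cref{thminf:qmi-gen-bound-qmgf-and-cmgf} to $\psi_\pm(\lambda)=\tfrac{\alpha^2\lambda^2}{2}$ and $\phi_\pm(\lambda)=\tfrac{\beta^2\lambda^2}{2}$ and computing $\psi_\pm^{\ast-1}(\xi)=\sqrt{2\alpha^2\xi}$, $\phi_\pm^{\ast-1}(\xi)=\sqrt{2\beta^2\xi}$. Your write-up is in fact slightly more detailed than the paper's one-line justification.
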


Next, we specialize this guarantee to the important case of independent data.
We have already assumed that $S=(Z_i)_{i=1}^m$ consists of i.i.d.~examples.
Now, we additionally assume that $\rho(S)=\bigotimes_{i=1}^m \rho_i(Z_i)$ is a tensor product of quantum data states, and that the measurements and channels performed by the learner $\alg$ also factorize. 
(In fact, if $\alg$ produces only a classical but no quantum hypothesis, we can drop this factorization requirement on $\alg$.)
Then, our states after the action of the learner also factorize as $\sigma^\alg=\bigotimes_{i=1}^m \sigma^\alg_i$ and $\tau^{\alg} = \bigotimes_{i=1}^m \tau^{\alg}_i$, with $\sigma^\alg_i=\sigma^{\alg}_i(z_i,w)$ and $\tau^{\alg}_i =\tau^{\alg}_i(z_i,w)$.
Finally, we assume that the loss observable is an average of local losses, $L=\tfrac{1}{m}\sum_{i=1}^m L_i$, with $L_i=L_i(z_i, w)$ acting non-trivially only on the $i^{\mathrm{th}}$ tensor factor.
The natural analogues of \eqref{eqinf:qmgf} and \eqref{eqinf:cmgf} become: 
\vspace{0.5cm}
\begin{itemize}
    \item[({\tiny{locQMGF}})] \textbf{Local QMGF:} For every $i\in[m]$, for every $(z_i,w)$, each local $L_i$ satisfies \eqref{eqinf:qmgf} w.r.t.~$\tau^{\alg}_i$ with bound $\psi_{\pm,i}$.

    \item[({\tiny{locCMGF}})] \textbf{Local CMGF:} For every $i\in [m]$, for every $w$, each local $\tr[L_i\tau^{\alg}_i]$ satisfies \eqref{eqinf:cmgf} w.r.t.~$P$ with bound $\phi_{\pm,i}$.
\end{itemize}
\vspace{0.5cm}
In addition to serving as natural quantum counterparts of common assumptions used to derive classical information-theoretic generalization bounds, we identify several scenarios in which (locQMGF) and (locCMGF) are satisfied in \Cref{sec:quantum-info-gen-bounds}.
As before, we speak of $\alpha_i$-sub-gaussian (locQMGF) and $\beta_i$-sub-gaussian (locCMGF) if the convex functions $\psi_{\pm,i}$ and $\phi_{\pm,i}$ are of the form $\lambda\mapsto\tfrac{\alpha_i^2 \lambda^2}{2}$ and  $\lambda\mapsto\tfrac{\beta_i^2 \lambda^2}{2}$, respectively. If the sub-gaussianity parameters are the same for all $i$, that is, if $\alpha_i = \alpha$ and $\beta_i = \beta$ for all $i$, then we simply speak of $\alpha$-sub-gaussian (locQMGF) and $\beta$-sub-gaussian (locCMGF).
In this scenario, \Cref{corinf:qmi-gen-bound-qmgf-and-cmgf-subgaussian} becomes:

\begin{mcorollary}[Informally stated; see \Cref{corollary:qmi-gen-bound-qmgf-and-cmgf-subgaussian-independent}]
\label{corinf:qmi-gen-bound-qmgf-and-cmgf-subgaussian-independent}
    If the classical-quantum data and the loss observable satisfy an $\alpha$-sub-gaussian (locQMGF) and a $\beta$-sub-gaussian (locCMGF) condition, then 
    \small
    \begin{equation}
        \lvert \gen_\rho (\alg)\rvert
        \leq  \sqrt{\frac{2\alpha^2}{m}\left( \E_{(S,W)\sim P^\alg}\left[ I(\mathrm{test};\mathrm{hyp})_{\sigma^\alg}\right]+ \E_{S\sim P^m}\left[ \chi\left(\{P^{\mathcal{A}}_{\mathsf{W}|S}(w), \rho^{\alg}_{\mathrm{test}} (S,w)\}_{w}\right)\right]\right)} + \sqrt{\frac{2 \beta^2}{m} I(S;W)}\, .
    \end{equation}
    \normalsize
\end{mcorollary}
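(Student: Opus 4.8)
The plan is to reduce \Cref{corinf:qmi-gen-bound-qmgf-and-cmgf-subgaussian-independent} to the already-established general sub-gaussian bound \Cref{corinf:qmi-gen-bound-qmgf-and-cmgf-subgaussian}. Under the product assumptions stated just above the corollary — $\rho(S)=\bigotimes_i\rho_i(Z_i)$, a factorizing learner, hence $\sigma^\alg=\bigotimes_i\sigma^\alg_i$ and $\tau^\alg=\bigotimes_i\tau^\alg_i$, and $L=\tfrac1m\sum_i L_i$ with each $L_i$ local — I would show that the \emph{local} hypotheses (locQMGF) and (locCMGF) tensorize into \emph{global} hypotheses of the form \eqref{eqinf:qmgf} and \eqref{eqinf:cmgf} for $L$ and $\tau^\alg$, but now with sub-gaussianity parameters $\alpha/\sqrt m$ and $\beta/\sqrt m$ in place of $\alpha$ and $\beta$. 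Since the classical mutual information $I(S;W)$, the quantum mutual information $I(\mathrm{test};\mathrm{hyp})_{\sigma^\alg}$, and the Holevo term on the right-hand side of \Cref{corinf:qmi-gen-bound-qmgf-and-cmgf-subgaussian} are untouched by this step, plugging $\alpha\mapsto\alpha/\sqrt m$ and $\beta\mapsto\beta/\sqrt m$ into that corollary yields exactly the claimed bound, because $\sqrt{2(\alpha/\sqrt m)^2 X}=\sqrt{2\alpha^2 X/m}$.

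For the quantum tensorization step, I would fix $(s,w)$ and use that $L_i$ acts non-trivially only on the $i$-th factor of the product state $\tau^\alg=\bigotimes_i\tau^\alg_i$, so that $\tr[L_i\tau^\alg]=\tr[L_i\tau^\alg_i]$ and the centered summands $\tfrac1m(L_i-\tr[L_i\tau^\alg_i]\one)$ live on disjoint tensor factors and hence commute. Then $e^{\lambda(L-\tr[L\tau^\alg]\one)}=\bigotimes_i e^{\frac{\lambda}{m}(L_i-\tr[L_i\tau^\alg_i]\one)}$ holds \emph{exactly} — no Golden--Thompson-type inequality is needed in the commuting case — and its trace against $\tau^\alg$ factorizes over $i$. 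Taking logarithms and bounding each factor by the $\alpha$-sub-gaussian (locQMGF) assumption gives
\[
\log\tr\!\left[\tau^\alg\, e^{\lambda(L-\tr[L\tau^\alg]\one)}\right]=\sum_{i=1}^m\log\tr\!\left[\tau^\alg_i\, e^{\frac{\lambda}{m}(L_i-\tr[L_i\tau^\alg_i]\one)}\right]\le\sum_{i=1}^m\frac{\alpha^2\lambda^2}{2m^2}=\frac{\alpha^2\lambda^2}{2m},
\]
i.e.\ an $(\alpha/\sqrt m)$-sub-gaussian \eqref{eqinf:qmgf}. For the classical step I would fix $w$ and note that the random variables $\tr[L_i\tau^\alg_i]=\tr[L_i\tau^\alg_i(Z_i,w)]$ are functions of the respective $Z_i$ only, hence mutually independent since $S=(Z_1,\ldots,Z_m)$ is i.i.d., so the MGF of $\tr[L\tau^\alg]=\tfrac1m\sum_i\tr[L_i\tau^\alg_i]$ factorizes; bounding each factor by the $\beta$-sub-gaussian (locCMGF) assumption gives $\log\E_{S\sim P^m}\!\big[e^{\lambda(\tr[L\tau^\alg]-\E_{S}[\tr[L\tau^\alg]])}\big]\le\tfrac{\beta^2\lambda^2}{2m}$, an $(\beta/\sqrt m)$-sub-gaussian \eqref{eqinf:cmgf}.

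With these two rescaled global MGF bounds established, \Cref{corinf:qmi-gen-bound-qmgf-and-cmgf-subgaussian} applies directly and finishes the proof. I do not anticipate a substantive obstacle here, since the analytic content lives in the general theorem; the only care required is (i) confirming that the product structure of the data together with the factorization of the learner's measurements and channels really does yield $\tau^\alg=\bigotimes_i\tau^\alg_i$ and $\sigma^\alg=\bigotimes_i\sigma^\alg_i$, and observing that when $\alg$ produces no quantum hypothesis one has $\tau^\alg=\rho_{\mathrm{test}}$, which is already a product over $i$ by the assumed factorization of the data, so the factorization requirement on $\alg$ may be dropped in that case; and (ii) tracking the $1/m$ factors coming from $L=\tfrac1m\sum_i L_i$. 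The same tensorization applied to general convex bounds would give $\sum_i\psi_{\pm,i}(\lambda/m)$ and $\sum_i\phi_{\pm,i}(\lambda/m)$ — again convex with vanishing value and first derivative at $0$ — and hence an independent-data refinement of \Cref{thminf:qmi-gen-bound-qmgf-and-cmgf} of which the present corollary is the explicit sub-gaussian specialization.
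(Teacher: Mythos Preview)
Your proposal is correct and follows essentially the same approach as the paper's proof: both tensorize the local QMGF and CMGF assumptions across the product structure to obtain global sub-gaussianity with rescaled parameters $\alpha/\sqrt m$ and $\beta/\sqrt m$, and then invoke \Cref{corinf:qmi-gen-bound-qmgf-and-cmgf-subgaussian}. The paper additionally records that the factorization $\sigma^\alg=\bigotimes_i\sigma^\alg_i$ yields $I(\mathrm{test};\mathrm{hyp})_{\sigma^\alg}=\sum_i I(\mathrm{test};\mathrm{hyp})_{\sigma^\alg_i}$, which is used in the formal version of the corollary but is not needed for the informal statement you address.
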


\Cref{corinf:qmi-gen-bound-qmgf-and-cmgf-subgaussian-independent} tells us: We can control the expected generalization error by choosing the training data size $m$ to be on the order of the maximum between the classical and quantum information shared between the data and the learner's hypothesis. Conversely,  if only a limited amount of data is available, then to guarantee good generalization, we have to limit the classical and quantum information that the learner accumulates about the data accordingly.
As we explain in \Cref{section:framework-main-result}, \Cref{corinf:qmi-gen-bound-qmgf-and-cmgf-subgaussian-independent} can be extended to the case of different local loss observables, which also have different sub-gaussianity parameters $\alpha_i$ and $\beta_i$ (see \Cref{corollary:qmi-gen-bound-qmgf-and-cmgf-subgaussian-independent}), and to stable learners employing channels that approximately preserve locality (see \Cref{corollary:lipschitz-gen-bound}).

\subsubsection{Applications}\label{section:applications-informal}

Our framework and generalization bounds capture a variety of settings.
Therefore, we envision that our approach can lead to new insights by providing a novel perspective on diverse quantum learning problems.
Here, we highlight only three applications, but to fundamental problems in quantum learning.
For further examples in quantum parameter estimation, variational quantum machine learning, approximate quantum membership problems, learning quantum state preparation procedures, quantum differential privacy, and inductive quantum learning see \Cref{sec:applications}.

\paragraph{PAC learning quantum states.} 
\cite{Aaronson2007} pioneered the use of learning-theoretic perspectives for quantum information problems. The seminal contribution of this work was to formulate ``pretty good state tomography'' in a PAC learning sense and to analyze its sample complexity. 
Here, instead of aiming for a (classically described) approximation to an unknown quantum state in trace distance, one considers the relaxed task of producing a (classically described) hypothesis state that accurately approximates the expectation value on a test measurement drawn from an underlying data distribution, with high success probability. 
While full state tomography requires resources scaling exponentially with the number $n$ of qubits \cite{odonnell2016efficient, haah2017sample}, this PAC relaxation has sample complexity scaling linearly in $n$ \cite{Aaronson2007}.

In \Cref{section:pac-learning-quantum-states}, we use \Cref{corinf:qmi-gen-bound-qmgf-and-cmgf-subgaussian-independent} to reproduce this fundamental insight into learning quantum states within our framework of in-expectation learning. 
Concretely, we give a simple learning strategy achieving an in-expectation version of \cite[Theorem 1.1]{Aaronson2007} with the same dependence on the Hilbert space dimension $d$ and the approximation accuracy $\varepsilon$.
Our formulation allows us to naturally describe an end-to-end learning strategy that starts from (possibly entangled) copies of the unknown quantum state.
As part of our derivation, we extend an argument due to \cite{Xu2017} to prove that information-theoretic generalization guarantees reproduce classical in-expectation excess risk bounds for regression based on the fat-shattering dimension \cite{kearns1994efficient, bartlett1998prediction, anthony2000function}.

We highlight that our in-expectation guarantees show that for each observable seen during training, a number of copies independent of $d$ is sufficient to achieve overall reliable expectation value estimates.
In essence, there are distinct classical (``how many observables'') and quantum (``how many copies of $\rho$ per observable'') aspects to the sample complexity. Only the first is $d$-dependent.
Our perspective thus provides a natural intermediary between the ``measure $\log\log(d)$ many times'' setting of \cite[Objection 6]{Aaronson2007} and the ``measure once'' scenario of \cite[Theorem 1.3]{Aaronson2007}. This illustrates how studying in-expectation bounds can complement studying the concentration properties of the generalization error.

\paragraph{Quantum PAC learning from entangled data.} 
A central question in quantum learning theory \cite{arunachalam2017survey} is whether and when quantum access to data allows one to learn an unknown classical object (typically a function) more sample- and/or computationally efficiently than is possible purely classically.
A prominent way of modeling quantum (access to) data is via \emph{superposition examples} \cite{bshouty1998learning}, which then admit questions of PAC learning from quantum oracle access.

We propose a 
variant of quantum superposition examples: Viewing a single classical training example as a mixed state $\rho = \sum_{z} P(z)\ketbra{z}{z}$ diagonal in the computational basis, we take a \emph{purification} and consider the resulting entangled state $\ket{\phi}=\sum_{z} \sqrt{P(z)}\ket{z}_{\mathrm{test}}\otimes \ket{z}_{\mathrm{train}}$ as describing the joint system of a single quantum test and training example. Multiple copies of this bipartite state then form the overall 
data. 
The entanglement between test and training data 
is an inherently quantum analogue to a classical scenario with perfectly correlated test and training data (see \Cref{subsection:comparison-frameworks} for a more detailed discussion).

For this notion of quantum data access, we study learners that perform simple measurements followed by classical post-processing.
We show how to analyze the generalization performance of such learners purely quantumly by describing the measurement and post-processing jointly by a quantum channel acting on the training data. In particular, we demonstrate that \Cref{corinf:qmi-gen-bound-qmgf-and-cmgf-subgaussian-independent} in this case reproduces the main result of \cite{Xu2017}. 
Notably, it does so \emph{via the QMI contribution} in the upper bound, which highlights the relevance and necessity of this term.

\paragraph{Quantum state discrimination and classification.} Distinguishing between different candidate states when given copies of an unknown quantum state is a fundamental task in quantum information science \cite{bae2015quantum}. 
The optimal measurement for binary state discrimination, the case of two candidates, is well understood  \cite{helstrom1969quantum, holevo1973statistical}. 
For distinguishing between multiple states, necessary and sufficient optimality criteria are known \cite{holevo1974remarks, yuen1975optimum}, but in general do not give rise to an explicit construction for the optimal POVM.
Only in certain symmetric cases can the optimal measurement be made explicit \cite{ban1997optimum, eldar2001quantum, eldar2004optimal}, often via the pretty good (or square root) measurement \cite{hausladen1994pretty, hausladen1996classical}.
These results, however, presuppose that classical descriptions for the possible candidate states are known in advance. 

More recently, distinguishing between two a priori unknown quantum states was considered as a classification problem inspired by machine learning approaches to pattern recognition \cite{Guta2010, Sentis2019Unsupervised, rosati2022learning}. 
Here, the goal is to learn a distinguishing POVM from (labelled) copies of the unknown states.
Within our framework, we formulate a PAC version of quantum state classification (see \Cref{section:framework-main-result}). Then, our information-theoretic generalization guarantees yield bounds on the sample size sufficient to ensure that a learned POVM, which performs well on available training data, will also successfully classify previously unseen state pairs in-expectation over an underlying distribution over pairs.
These may serve as a guiding principle for avoiding overfitting in quantum state classification.
In particular, our results imply that limiting the complexity of the admissible hypothesis POVMs and thus the maximum information content of a hypothesis, for instance by imposing locality restrictions, will favorably affect the required amount of quantum data.

\subsection{Discussion and outlook}

In this work, we have established a mathematical framework for reasoning about tasks of learning from data that is part classical and part quantum. In addition to proving generalization error bounds for quantum learners in such scenarios, we have also demonstrated a variety of applications that our framework encompasses. 
Importantly, our bounds are information-theoretic in nature. Thus, they come with an intuitive interpretation and provide a perspective on quantum learning that can benefit from insights in quantum information theory.

The average-case and in-expectation generalization bounds give an insightful perspective that is complementary to worst-case analyses, which have thus far been more widespread in the literature on quantum learning.
The former illuminates certain features that are not apparent in the latter, raising the question of re-examining established results in a new or different light. 
We hope our work motivates future work on quantum learning to also consider in-expectation generalization alongside worst-case behavior.

With part of our contribution being the formulation of a novel framework, our work raises many interesting follow-up questions. In the following, we highlight some of them.

\paragraph{Average-case vs.\ worst-case.}  As is typical in PAC learning, our results address the average performance on instances drawn from an (unknown) underlying distribution.
For instance, our risk bounds for ``pretty good tomography'' \cite{Aaronson2007} hold w.r.t.~a distribution over $2$-outcome POVMs. 
In contrast, recent progress in shadow tomography \cite{aaronson2019shadow, badescu2021improved, huang2021information} and classical shadows \cite{huang2020predicting, elben2022randomized} has focused on making correct predictions in the worst-case simultaneously over many observables. 
Moreover, recent work \cite{huang2021information} has drawn attention to the notable contrast between the average-case and the worst-case when it comes to the potential for a quantum advantage in learning.
Extending our information-theoretic perspective on quantum learning to these worst-case scenarios could give us novel ways of probing this frontier.

\begin{openproblem}
    Establish a quantum information-theoretic characterization of the performance of learners for shadow tomography. 
\end{openproblem}

\paragraph{Quantum-quantum learners.} A recent spate of results \cite{aharonov2022quantum, chen2022quantum, chen2022exponential,huang2022quantum, caro2022learning,huang2022learningmanybodyhamiltonians, dutkiewicz2023advantage} has emphasized the role of quantum-enhanced experiments for learning quantum channels.
In particular, the ability to coherently and sequentially query the unknown channel on input states of our choice is an example of quantum enhancement. Can our framework be further developed to incorporate learning from such \emph{query access} to a quantum-to-quantum channel?  

\begin{openproblem}
    Establish a quantum information-theoretic characterization of the performance in learning quantum-to-quantum channels in a query input model.
\end{openproblem}

\paragraph{Optimality and technical improvements.}

One might raise the question of whether information-theoretic bounds on the expected generalization error are tight. This is already a non-trivial open question in the classical setting. In the quantum world, the problem of state discrimination is very well understood information-theoretically. We speculate that a notion of average-case state discrimination may be an approach towards understanding the optimality of our bounds. 

Finally, \cite{Xu2017, russo2019much} have led to a series of follow-up works, including techniques to tighten information-theoretic generalization bounds \cite{asadi2018chaining, bu2019tightening}, improvements relying on (evaluated) sample-wise and/or conditional mutual information \cite{steinke2020reasoning, haghifam2020sharpened, hellstroem2021fast-rate, harutyunyan2021informationtheoretic, hellstroem2022new-family, chu2023unified, hellstroem2023generalization}, and connections to optimal transport \cite{esposito2022generalisation} and convex analysis \cite{lugosi2022generalization, lugosi2023online-to-pac}. 
These results may inspire improvements to our quantum generalization bounds and potential connections to quantum optimal transport \cite{depalma2021quantumoptimal,depalma2022quantumconcentration,depalma2023wassersteinspinsystems}.

In spite of the rich structure and wealth of open problems in this area of research, simply \emph{translating} these ideas to quantum learning is fraught with pitfalls: for example, there is no unique quantum analogue to the classical notion of conditioning. 
Breakthrough progress in our quantum information-theoretic understanding of learning will require proving \emph{genuinely quantum} statements which may not have classical analogues.

\section{Technical overview}
In this section we give a brief outline of the high-level conceptual ideas involved in the development of our framework, and a taste of the techniques that we use in proving our generalization bounds.
It is also natural to wonder how our framework for describing quantum learners and our quantum information-theoretic bounds on the generalization error compare with existing work. 
We provide such a comparison to information-theoretic generalization bounds in classical learning theory, starting from \cite{Xu2017} and arriving at our framework and results via an intermediate extension, which is reminiscent of information-theoretic approaches to out-of-distribution generalization \cite[Section 9.2]{hellstroem2023generalization}. 

\subsection{Classical $\to$ quantum: motivating our framework and bounds}\label{subsection:comparison-frameworks}

First, we recall the main result of \cite{Xu2017}:  Assuming that for $Z_i\sim P$ the random variable $\ell(w, Z_i)$ is $\beta$-sub-gaussian for all $w\in\mathsf{W}$ -- a special case of our assumption (locCMGF) -- \cite[Theorem 1]{Xu2017} proved that the classical expected generalization error defined in \Cref{eq:gen-err-def-xr17} is bounded as
\begin{equation}
    \lvert \operatorname{gen}_P(\alg)\rvert
    \leq \sqrt{\frac{2\beta^2}{m} I(S;W)}\, .
\end{equation}
A simple but crucial observation underlying this bound: It amounts to a statement about \emph{decoupling} two random variables. 
Namely, we can rewrite the expected true risk as $\mathbb{E}_{W\sim P^\alg_{\mathsf{W}}}[R_P(W)] = \E_{\bar{S}\sim P^m}\E_{\bar{W}\sim P^\alg_{\mathsf{W}}}[\hat{R}_{\bar{S}}(\Bar{W})]$. This has the same form as the expected empirical risk $\mathbb{E}_{(S,W)\sim P^\alg}[\hat{R}_S(W)]$, but the training data and hypothesis random variables have been replaced by independent copies thereof.
Informally speaking, \cite[Theorem 1]{Xu2017} thus tells us that decoupling training data and hypothesis comes at a cost depending on the mutual information $I(S;W)$, see \Cref{fig:table-xr}. 

\begin{figure}
    \centering
    \includegraphics[width=0.75\textwidth]{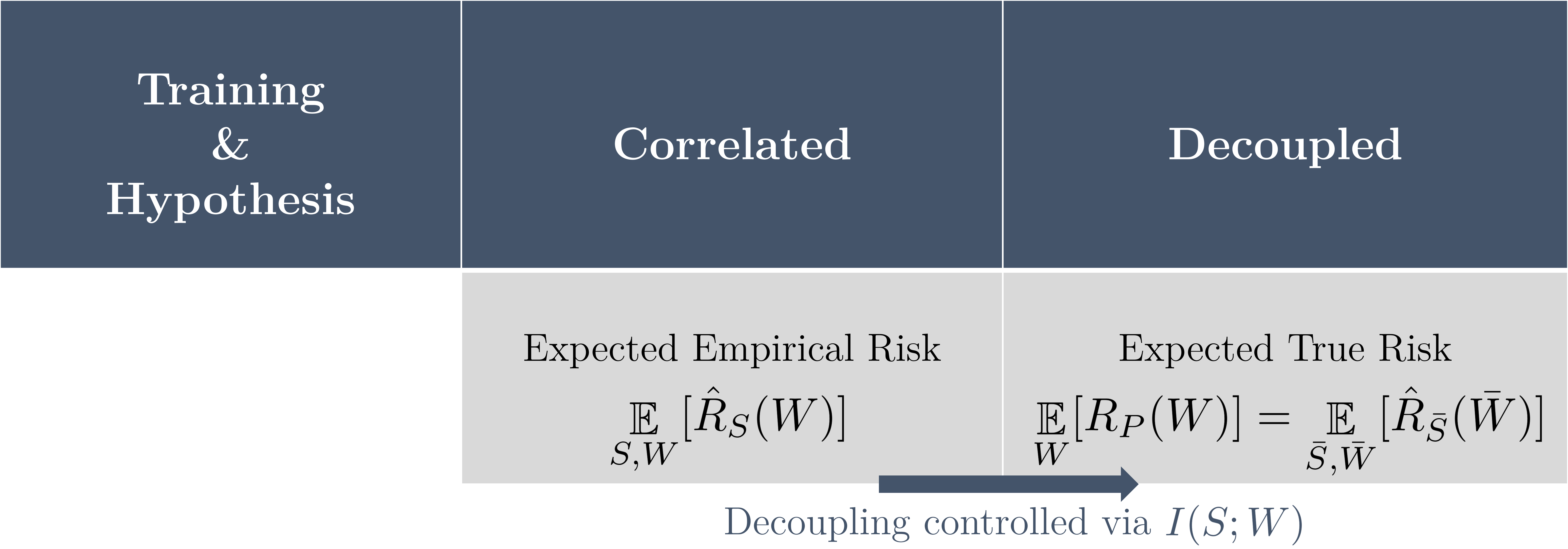}
    \caption{\textbf{{\cite{Xu2017}}'s classical framework:} The expected empirical and true risk of a classical learner differ only in whether the training data and hypothesis are correlated or not. Decoupling the two leads to a generalization bound in terms of the MI $I(S;W)$.}
    \label{fig:table-xr}
\end{figure}

Next, as an intermediate step towards our quantum framework, we introduce a variant of this result by adding test data to the classical learning-theoretic framework of \Cref{subsection:unified-framework}.
Concretely, suppose we have test data $S_{\mathrm{te}}=(Z_{\mathrm{te},i})_{i=1}^m$ and training data $S_{\mathrm{tr}}=(Z_{\mathrm{tr},i})_{i=1}^m$, where the pairs $(Z_{\mathrm{te},i}, Z_{\mathrm{tr},i})$ are drawn i.i.d.~from some probability distribution $P$ over $\mathsf{Z}\times\mathsf{Z}$. Note that while different pairs are independent, the two random variables $Z_{\mathrm{te},i}, Z_{\mathrm{tr},i}$ within any single pair may not be. 
During training, a learner $\alg$ has access to $S_{\mathrm{tr}}$ but not to $S_{\mathrm{te}}$, so its output behaviour may still be described by a conditional distribution $P(W|S_{\mathrm{tr}})$. 
However, the relevant performance measures are now taken w.r.t.~\emph{test} instead of training data. That is, we now consider the expected empirical \emph{testing} risk $\E_{S_{\mathrm{te}},S_{\mathrm{tr}},W}[\hat{R}_{S_{\mathrm{te}}}(W)]$ and the expected true \emph{testing} risk $\E_{W}[R_{P_{\mathrm{te}}}(W)]$, where $P_{\mathrm{te}}$ denotes the marginal of $P$ on the first subsystem.

Two extreme examples illustrate the utility of this setup: First, if $Z_{\mathrm{te},i}$ and $Z_{\mathrm{tr},i}$ are perfectly correlated, we recover exactly the setting considered in \cite{Xu2017}. 
In contrast, if $Z_{\mathrm{te},i}$ and $Z_{\mathrm{tr},i}$ are independent and have the same distribution, then the expected generalization error trivially vanishes. 

Also in this setting, the expected true risk can be obtained from the expected empirical risk via decoupling as before, starting with the rewriting $\E_{W}[R_{P_{\mathrm{te}}}(W)]= \E_{\bar{S}_{\mathrm{tr}},\bar{S}_{\mathrm{te}},\bar{W}}[\hat{R}_{\bar{S}_{\mathrm{te}}} (\bar{W})]$.
However, $W$ depends on $S_{\mathrm{te}}$ only through $S_{\mathrm{tr}}$, so decoupling can now be achieved in different ways: We can decouple $W$ from $S_{\mathrm{tr}}$ as before, or we can decouple $S_{\mathrm{te}}$ from $S_{\mathrm{tr}}$, or we can (unnecessarily) decouple both pairs simultaneously.
More rigorously, using \cite[Lemma 1]{Xu2017}, we can show that if $\ell(w,Z_{\mathrm{te},i})$, with $Z_{\mathrm{te},i}\sim P_{\mathrm{te}}$, is $\beta$-sub-gaussian, then the expected generalization error satisfies
\begin{equation}\label{eq:extended-xr-gen-bound}
    \left\lvert \E_{W}[R_{P_{\mathrm{te}}}(W)] - \E_{S_{\mathrm{te}},S_{\mathrm{tr}},W}[\hat{R}_{S_{\mathrm{te}}}(W)] \right\rvert
    \leq \sqrt{\frac{2\beta^2}{m} I(S_{\mathrm{te}};W)}
    \leq \sqrt{\frac{2\beta^2}{m} \min\{I(S_{\mathrm{tr}};W), I(S_{\mathrm{tr}};S_{\mathrm{te}})\}}\, ,
\end{equation}
where the last inequality follows from the data processing inequality and the chain rule.
\Cref{fig:table-classical} informally presents the different decoupling steps underlying this bound.

\begin{figure}
    \centering
    \includegraphics[width=0.85\textwidth]{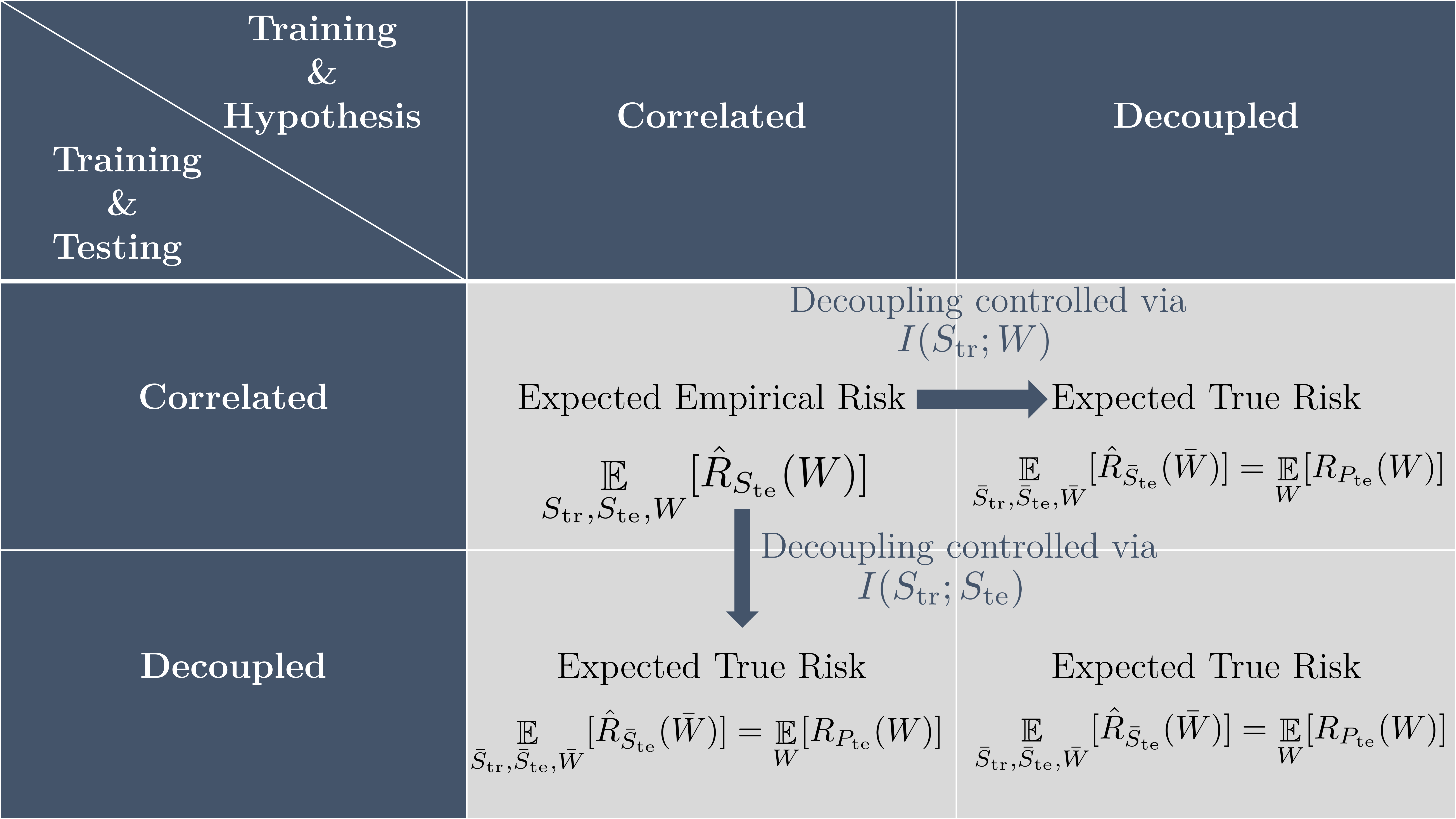}
    \caption{\textbf{Extended {\cite{Xu2017}} framework for classical learners with test data:} When taking test data into account, the expected empirical and true risk differ in whether training data and test data are correlated or decoupled and whether training data and hypothesis are correlated or decoupled. Thus, the expected generalization error can be bounded in terms of the MI quantities  $I(S_{\mathrm{tr}};S_{\mathrm{te}})$ and $I(S_{\mathrm{tr}}; W)$. Note that the resulting expected risks in three out of the four cells coincide.}
    \label{fig:table-classical}
\end{figure}

We can now transparently describe the final step towards our quantum framework. To do so, we return to the setting of \cite{Xu2017} on the classical side, assuming only training data but no test data. This is for simplicity of presentation, our bounds can be extended to the case with classical training and test data.
On the quantum side, however, we assume both a test and a training data system, which may share classical correlations or quantum entanglement.
Thus, going from the expected empirical risk $\hat{R}_{\rho}(\alg)$ to the expected true risk $R_\rho (\alg)$ now requires two decoupling steps, the first quantum -- going from a general bipartite state $\sigma^{\alg}(S,W)$ to a tensor product state $\tau^{\alg}(S,W)$ by decoupling the test and train systems before the action of the learner -- and the second classical -- going from correlated random variables $S,W$ to independent copies $\bar{S}, \bar{W}$.
Our generalization bounds make this intuition rigorous and show that the first decoupling step contributes an expected QMI plus Holevo information and the second a classical MI.
Notably, whereas a single decoupling step was already enough in the case of classical test data, our classical-quantum decoupling indeed consists of two non-trivial decoupling steps. This is reflected in our bounds having two separate terms, and in the informal depiction in \Cref{fig:table-quantum}, in which, in contrast to \Cref{fig:table-classical}, no two cells coincide.

\begin{figure}
    \centering
    \includegraphics[width=0.85\textwidth]{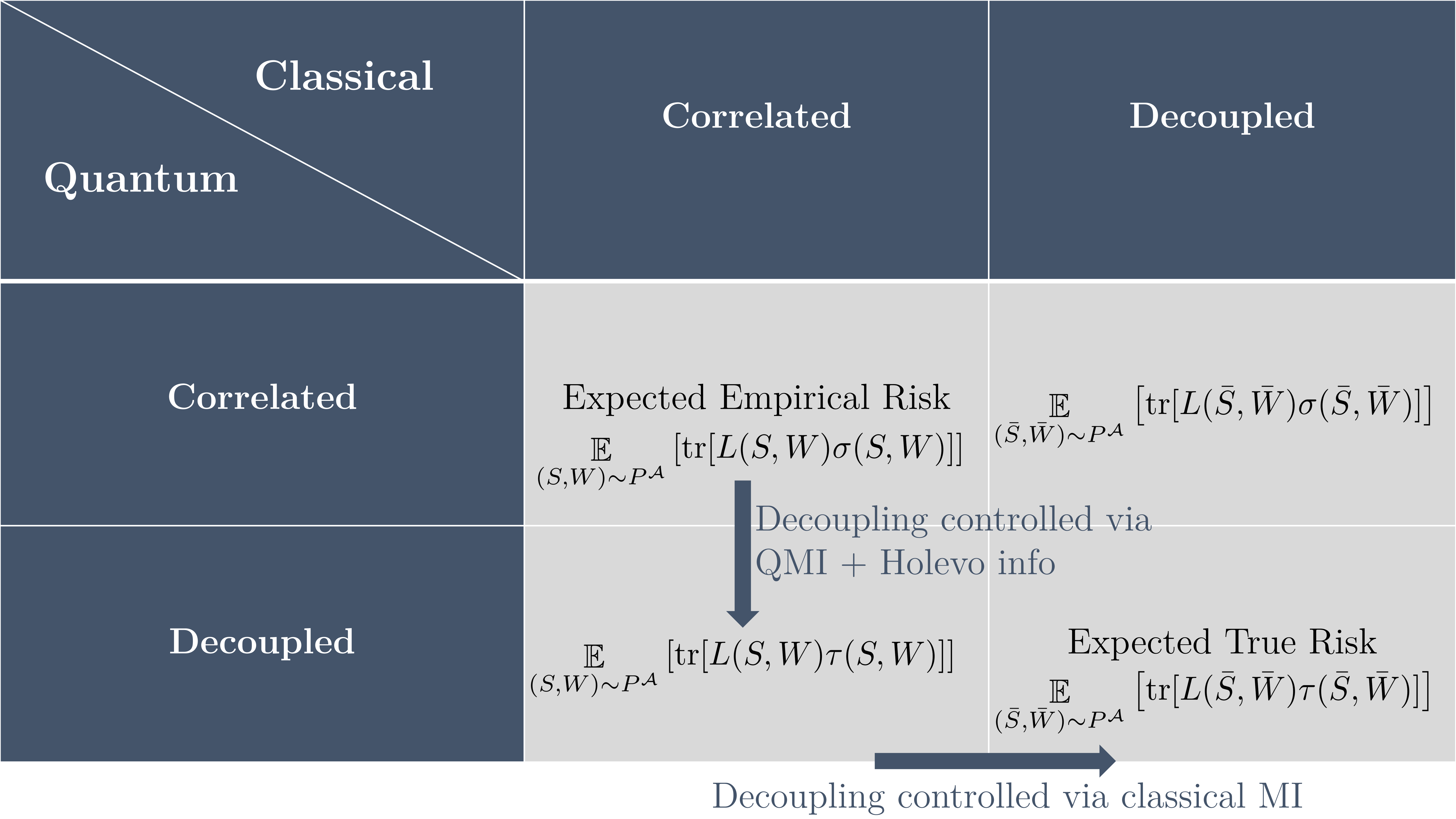}
    \caption{\textbf{Framework for quantum learners:} In our formalization of learning from CQ data, going from expected empirical to true risk requires decoupling the quantum training and test data as well as the classical hypothesis and classical training data. This leads to generalization bounds involving an average QMI plus Holevo information term and a classical MI term.}
    \label{fig:table-quantum}
\end{figure}

\subsection{Proof overview}

The proofs of classical information-theoretic generalization bounds, starting from assumptions analogous to \Cref{eqinf:cmgf}, typically proceed as follows\footnote{Starting from different sub-gaussianity assumptions, variations on this reasoning can be successful, see, e.g., \cite[Proposition 1]{bu2019tightening}.}:
First, the mutual information between data and hypothesis can be expressed as the expected relative entropy between the the distribution of the data conditioned on the hypothesis and the unconditioned distribution of the data, $I(S;W) = \mathbb{E}_{W\sim P^\alg_{\mathsf{W}}}[D(P^\alg_{\mathsf{Z}^m|W}\|P^\alg_{\mathsf{Z}^m})]$.
Next, the relative entropies are rewritten via the Donsker-Varadhan representation of the relative entropy (see, for example, \cite[Corollary 4.15]{boucheron2013concentration}), which in this case in particular implies
\begin{equation}
    D(P^\alg_{\mathsf{Z}^m|W}\|P^\alg_{\mathsf{Z}^m})
    \geq \E_{S\sim P^\alg_{\mathsf{Z}^m|W}}\left[\lambda f(W,S)\right] - \log\E_{S\sim P^\alg_{\mathsf{Z}^m}}\left[e^{\lambda f(W,S)}\right]\quad\forall \lambda\in\mathbb{R}\, ,
\end{equation}
for $f(W,S)=\tfrac{1}{m}\sum_{i=1}^m \ell(W,Z_i)$.
The second term is controlled based on assumptions on the logarithmic MGF, which in particular introduces a term $-\E_{S\sim P^\alg_{\mathsf{Z}^m}}\left[\lambda f(W,S)\right] = - \lambda \E_{S\sim P^\alg_{\mathsf{Z}^m}}[\hat{R}_S (W)]$. After an optimization over $\lambda$, one can rearrange and average over the hypothesis to obtain an information-theoretic generalization bound.

An obstacle to extending this argument to our setting with classical-quantum data is the lack of a quantum analogue of conditioning the data on the hypothesis. 
To circumvent this obstacle, we first decompose the generalization error into a classical and a quantum part. 
As highlighted in the previous subsection, this decomposition is a feature inherent to our classical-quantum setup: Even after extending the classical learning framework to include test data, it still admits a generalization bound with a single classical mutual information term, no decomposition into separate terms is needed.
In our decomposition, the classical part is a difference of two terms that differ only in whether the underlying classical data and hypothesis random variables are correlated or decoupled. Thus, it can be controlled with the classical proof strategy outlined above.
The quantum part takes the form of a classical expectation value of the difference between the quantum expectation values $\tr[L\sigma]$, of the loss observable on the state $\sigma$, and $\tr[L\tau^{\alg}]$, the expectation value on its decoupled counterpart $\tau^{\alg} = \rho_{\mathrm{test}}\otimes\sigma^{\alg}_{\mathrm{hyp}}$.
To control the error induced by this quantum decoupling, we lift the classical proof strategy to our non-commutative quantum setting, replacing Donsker-Varadhan by a combination of Petz's variational characterization of the relative entropy \cite{petz1988variational} and the Golden-Thompson inequality.
Assuming \eqref{eqinf:qmgf}, this yields the quantum relative entropy lower bound
\begin{equation}
    D(\sigma^\alg \| \tau^\alg)
    \geq \lambda\left(\tr[L\sigma^\alg] - \tr[L\tau^{\alg}]\right) - \begin{cases} \psi_+ (\lambda)\quad \textrm{if } \lambda \geq 0 \\ \psi_- (\lambda)\quad \textrm{if } \lambda <0 \end{cases}\, .
\end{equation}
Now, we can optimize over $\lambda$ and rearrange to obtain a bound on $\tr[L\sigma] - \tr[L\tau^{\alg}]$ in terms of $\mathbb{E}_{(S,W)\sim P^\alg}[D(\sigma^\alg \| \tau^\alg)]$. After showing that this expected relative entropy equals the expression $\E_{(S,W)\sim P^{\alg}}\left[ I(\mathrm{test};\mathrm{hyp})_{\sigma^\alg}\right] + \E_{S\sim P^m}\left[ \chi\left(\{P^{\mathcal{A}}_{\mathsf{W}|S}(w), \rho^{\alg}_{\mathrm{test}} (S,w)\}_{w}\right)\right]$, we combine this quantum decoupling bound with the bound on the classical part to obtain \Cref{thminf:qmi-gen-bound-qmgf-and-cmgf}.

The usefulness of classical generalization bounds depends on whether and how quickly they decay as the training data size $m$ increases. 
Typically, such a decrease is proved under an i.i.d.~assumption on the data.
To strengthen \Cref{thminf:qmi-gen-bound-qmgf-and-cmgf} for i.i.d.~quantum data, adhering to a tensor product structure, we invoke tools from quantum optimal transport \cite{depalma2021quantumoptimal,depalma2022quantumconcentration,depalma2023wassersteinspinsystems}.
On the one hand, \cite[Theorem 8.1]{depalma2023wassersteinspinsystems} (restated as \Cref{lemma:lipschitz-observables-subgaussian}) shows that Lipschitz observables have sub-gaussian QMGFs w.r.t.~any tensor product state:
\begin{equation}
    \tr\left[\exp\left(\log\left(\bigotimes_{i=1}^m \rho_i\right) + \lambda H\right)\right]
    \leq \exp\left(\frac{\lambda^2 m \lVert H\rVert_{\mathrm{Lip}}^2}{2}\right)\, .
\end{equation}
While this is weaker than bounds of the form \eqref{eqinf:qmgf} due to Golden-Thompson, we demonstrate that such a QMGF bound is still sufficient for our above proof strategy.
This then allows us to improve \Cref{thminf:qmi-gen-bound-qmgf-and-cmgf} achieve a bound that decays with $\nicefrac{1}{\sqrt{m}}$ if both quantum data and learner factorize, assuming local loss observables (\Cref{corinf:qmi-gen-bound-qmgf-and-cmgf-subgaussian-independent}).
On the other hand, the machinery of quantum Lipschitz constants allows us to go beyond quantum learners that factorize. 
In particular, it guides us to define a stability criterion for quantum learners in terms of Wasserstein-$1$ distances, a quantum version of classical replace-one stability \cite{bousquet2000algorithmic-stability, bousquet2002stability, shalev-shwartz2010stability}.
Namely, if the underlying classical data sets differ in only few data points, then the associated quantum processing channels employed by a stable learner differ only by a small amount, measured in terms of a Schatten-$1$--to--Wasserstein-$1$ norm.
Combining our newly established sub-gaussianity of Lipschitz observables w.r.t.~tensor products with the classical bounded differences concentration inequality \cite{mcdiarmid1989onthemethod}, we can then extend our generalization guarantees to stable quantum learners with a controlled increase in Wasserstein-$1$ distances (\Cref{corollary:lipschitz-gen-bound}).

\section{Preliminaries and Notation}\label{section:preliminaries}

We establish some minimal preliminaries and notation regarding quantum information and computing, and refer the reader to textbooks such as \cite{wilde2013book, Nielsen2010} for details.

We use $\mathcal{H}$ to denote a Hilbert space, and different Hilbert spaces are distinguished by subscripts. 
We denote the set of bounded operators on $\mathcal{H}$ by $\cB(\cH)$ and the trace class operators on $\mathcal{H}$ by  $\cT_1(\cH)$. 
The space of density operators (i.e., positive semidefinite trace class operators with trace $1$) on $\mathcal{H}$ is denoted by $\mathcal{S}(\cH)$. It describes the space of quantum states on $\mathcal{H}$, we will use the terms `density operator' and `quantum state' interchangeably.
Throughout the paper, we work with finite-dimensional Hilbert spaces $\mathcal{H}\cong \mathbb{C}^d$, but as we sometimes consider states with classical subsystems on a continuous alphabet, we nevertheless employ the notion of trace class operators.
When viewing multiple quantum systems with associated Hilbert spaces $\mathcal{H}_{1},\ldots,\mathcal{H}_{m}$ as a single composite quantum system, the associated Hilbert space is the tensor product $\bigotimes_{i=1}^m \mathcal{H}_{i}$.
We obtain the reduced density matrix $\rho_j$ on subsystem $j$ of a multipartite state $\rho_{1,\ldots,m}\in\mathcal{S}(\bigotimes_{i=1}^m \mathcal{H}_{i})$ via a partial trace over the remaining subsystems, $\rho_j = \tr_{1,\ldots,j-1,j+1,\ldots,m}[\rho_{1,\ldots,m}]$. 
A trace with a subscript always indicates a partial trace over the Hilbert space with the same subscript.

We next define states that have both classical and quantum subsystems:

\begin{definition}[Classical-Quantum (CQ) States]\label{definition:cq-states}
    Let $\mathsf{X}$ be a (classical) measurable space, let $\mathcal{H}$ be a Hilbert space.
    Let $P$ be a probability measure on $\mathsf{X}$ and let $\mathsf{X}\ni x\mapsto\rho(x)\in\mathcal{S}(\mathcal{H})$ be a (Borel-)measurable mapping from elements of the alphabet to quantum states.
    The associated \emph{classical-quantum (CQ)} state is given by
    \begin{equation}\label{eq:cq-state-definition}
        \E_{x\sim P}\biggl[\ketbra{x}{x}\otimes \rho(x)\biggr]\, .
    \end{equation}
\end{definition}

Here, the expectation value can be understood as a Bochner integral of a function mapping to a Banach space.
If $\mathsf{X}$ is a finite alphabet, then the expression in \Cref{eq:cq-state-definition} simplifies to 
\begin{equation}
    \E_{x\sim P}\biggl[\ketbra{x}{x}\otimes \rho(x)\biggr]
    = \sum_{x\in\mathsf{X}} P(x) \ketbra{x}{x}\otimes \rho(x)\, .
\end{equation}

Quantum information theory is a rich field and has successfully ``quantized'' a variety of notions from classical information theory. We will make use of the quantum counterpart of the classical relative entropy (also known as Kullback-Leibler divergence).

\begin{definition}[Quantum relative entropy]
    The \emph{quantum relative entropy} between a density operator $\rho\in\mathcal{S}(\cH)$ and a positive semi-definite $\sigma\in\cB(\cH)$ is given by
    \begin{equation}
        D(\rho\|\sigma)
        = \begin{cases} \tr [ \rho (\log\rho - \log\sigma)]\quad &\textrm{if } \operatorname{supp}(\rho)\subseteq\operatorname{supp}(\sigma) \\ +\infty &\textrm{else}\end{cases}.
    \end{equation}
    Here, the support of $\rho$ is, by Hermiticity, the orthogonal complement of its kernel, that is, $\operatorname{supp}(\rho) = (\operatorname{ker}(\rho))^\perp$.
\end{definition}

From the quantum relative entropy, we can now obtain the quantum mutual information. It measures how much information one subsystem in a bipartite quantum state carries about the other subsystem.

\begin{definition}[Quantum mutual information]
    Let $\rho=\rho_{A B}\in\mathcal{S}(\cH_A\otimes\cH_B)$ be a bipartite quantum state. 
    The \emph{quantum mutual information (QMI)} between subsystems $A$ and $B$ in the quantum state $\rho=\rho_{A B}$ is given by
    \begin{equation}
        I(A;B)_\rho
        = D(\rho_{AB}\|\rho_A\otimes\rho_B)
        = H(\rho_A) + H(\rho_B) - H(\rho_{AB}) \, ,
    \end{equation}
    where $H(\sigma)=-\tr[\sigma\log(\sigma)]$ denotes the von Neumann entropy.
\end{definition}

When applied to a CQ state, the QMI gives rise to the so-called Holevo information:

\begin{definition}[Holevo information]\label{definition:holevo-information}
    Let $\{P(x),\rho(x)\}_{x\in\mathsf{X}}$ be an ensemble of quantum states. The \emph{Holevo information} is given by the QMI between the classical and quantum registers in the associated CQ state:
    \begin{equation}
        \chi\left(\{P(x),\rho(x)\}_{x\in\mathsf{X}}\right)
        = I(C:Q)_{\E_{x\sim P} [\ketbra{x}{x}\otimes \rho(x) ]}\, .
    \end{equation}
\end{definition}

The Holevo information can equivalently be expressed as
\begin{align}\label{eq:holevo-information-alternate-expression}
    \chi\left(\{P(x),\rho(x)\}_{x\in\mathsf{X}}\right)
    = H\left(\E_{x\sim P} [\rho(x) ]\right) - \E_{x\sim P} [H(\rho(x) )]
    = \E_{x\sim P}\left[D\left(\rho(x)\Bigg\| \E_{\tilde{x}\sim P}[\rho(\Tilde{x})]\right)\right]\, .
\end{align}

In addition to quantum states, we need mathematical descriptions for measurements as well as for general processing of quantum systems. 
To describe measurements, we use positive operator-valued measures (POVMs).

\begin{definition}[POVMs and post-measurement states]
    The set of \emph{effect operators} $\mathcal{E}(\mathcal{H})$ on a Hilbert space $\mathcal{H}$ is given by $\mathcal{E}(\mathcal{H}) = \{E\in\mathcal{B}(\mathcal{H})~|~ E=E^\dagger~\wedge ~0\leq E\leq \mathbbm{1}_{\mathcal{H}}\}$.
    A collection $\{E_k\}_{k=1}^K\subset \mathcal{E}(\mathcal{H})$ with $\sum_{k=1}^K E_k = \mathbbm{1}_{\mathcal{H}}$ is called \emph{$K$-outcome POVM}.
    When measuring a POVM $\{E_k\}_{k=1}^K$ on a state $\rho\in\mathcal{S}(\mathcal{H})$, the probability of observing outcome $k$ is $\tr[E_k \rho]$.
    Moreover, conditioned on observing outcome $k$, the post-measurement state is given by
    \begin{equation}
        \rho_k
        \coloneqq \frac{\sqrt{E_k}\rho\sqrt{E_k}}{\tr[E_k \rho]}\, .
    \end{equation}
\end{definition}

The dynamics of quantum systems can be mathematically described by quantum channels.

\begin{definition}[Quantum channels -- Schrödinger picture]\label{definition:quantum-channels}
    A linear map $\Lambda: \mathcal{T}_1 (\mathcal{H}_{\mathrm{in}})\to\mathcal{T}_1 (\mathcal{H}_{\mathrm{out}})$ between trace class operators on Hilbert spaces $\mathcal{H}_{\mathrm{in}}$ and $\mathcal{H}_{\mathrm{out}}$ is called a \emph{quantum channel (in the Schrödinger picture)} if it is completely positive (CP) and trace-preserving (TP). Here, we call $\Lambda$ completely positive if, for any $\mathcal{H}_{\mathrm{aux}}$, $(\mathrm{id}_{\mathcal{T}_1(\mathcal{H}_{\mathrm{aux}})}\otimes \Lambda)(\rho)$ is positive-semidefinite whenever $\rho\in\mathcal{T}_1(\mathcal{H}_{\mathrm{aux}}\otimes\mathcal{H}_{\mathrm{in}})$ is positive semidefinite, and we call $\Lambda$ trace-preserving if $\tr[\Lambda(\rho)]=\tr[\rho]$ holds for all $\rho\in \mathcal{T}_1(\mathcal{H}_{\mathrm{in}})$.
\end{definition}

According to \Cref{definition:quantum-channels}, we describe a general quantum process with a CPTP map.
This is the Schrödinger picture perspective, in which we view states as evolving. 
Complementary to this, we can define the dual $\Lambda^\ast:\mathcal{B}(\mathcal{H}_{\mathrm{out}})\to \mathcal{B}(\mathcal{H}_{\mathrm{in}})$ of $\Lambda$ via the requirement $\tr[E\Lambda(\rho)] = \tr[\Lambda^\ast (E) \rho]$ $\forall \rho\in\mathcal{S}(\mathcal{H}_{\mathrm{in}}), E\in \mathcal{E}(\mathcal{H}_{\mathrm{out}})$. 
The \emph{Heisenberg picture} map $\Lambda^\ast$ is completely positive if and only if $\Lambda$ is.
Also, $\Lambda$ being TP is equivalent to $\Lambda^\ast$ being unital (U), i.e., $\Lambda^\ast(\mathbbm{1}_{\mathcal{H}_{\mathrm{out}}}) = \mathbbm{1}_{\mathcal{H}_{\mathrm{in}}}$.
Thus, quantum channels in the Heisenberg picture are linear CPU maps.

Finally, we recall two recently introduced notions from quantum optimal transport. These constitute alternatives to the trace distance between multi-qudit states and the operator norm for multi-qudit observables, respectively, and take locality into account.

\begin{definition}[Quantum Wasserstein-$1$ distance \cite{depalma2021quantumwasserstein}]
    Let $\rho,\sigma\in\mathcal{S}((\mathbb{C}^d)^{\otimes m})$ be two $m$-qudit states.
    The \emph{quantum Wasserstein-$1$ distance} $\lVert\rho - \sigma\rVert_{W_1}$ between $\rho$ and $\sigma$ is defined as
    \begin{equation}
        \lVert\rho - \sigma\rVert_{W_1}
        = \min\left\{\sum_{i=1}^m c_i~\Bigg|~
        \begin{array}{l}
             c_i\geq 0:\exists \rho^{(i)}, \sigma^{(i)}\in \mathcal{S}((\mathbb{C}^d)^{\otimes m}), 1\leq i\leq m\textrm{ s.t. }  \\
             \tr_i[\rho^{(i)}]=\tr_i[\sigma^{(i)}]\forall i ~\wedge~ \rho - \sigma = \sum_{i=1}^m c_i \left(\rho^{(i)} - \sigma^{(i)} \right)
        \end{array}
        \right\}   \, .
    \end{equation}
\end{definition}

The quantum Wasserstein-$1$ distance between quantum states induces a notion of quantum Lipschitz constant for observables via duality.

\begin{definition}[Quantum Lipschitz constant \cite{depalma2021quantumwasserstein}]
    Let $H=H^\dagger\in\mathcal{B}((\mathbb{C}^d)^{\otimes m})$ be an $m$-qudit observable.
    The \emph{quantum Lipschitz constant} $\lVert H\rVert_{\mathrm{Lip}}$ of $H$ is defined as
    \begin{align}
        \lVert H\rVert_{\mathrm{Lip}}
         &= \max\left\{\tr[HX]~|~ X=X^\dagger \in\mathcal{B}((\mathbb{C}^d)^{\otimes m}):\tr[X]=0 \wedge \lVert X\rVert_{W_1}\leq 1\right\}\\
         &= \max_{1\leq i\leq m}\max\left\{\tr[H(\rho - \sigma)]~|~\rho,\sigma\in\mathcal{S}((\mathbb{C}^d)^{\otimes m}):\tr_{i}[\rho] =\tr_{i}[\sigma]  \right\}\, .
    \end{align}
\end{definition}

\section{Framework and main result}\label{section:framework-main-result}

\subsection{Framework for learning from classical-quantum data}

We aim to provide a formalism for learning from quantum data given as a classical-quantum (CQ) state.
We suppose that the data comes in the form of a CQ state 
\begin{equation}\label{eq:data-state}
    \rho 
    = \E_{S\sim P^m}\biggl[\ketbra{S}{S}\otimes \rho (S)\biggr],
\end{equation}
with $P$ a probability measure over a classical measurable instance space $\mathsf{Z}$, and with $\rho (s)$ a density operator on a (typically composite) data Hilbert space $\cH_{\mathrm{data}}$, $\rho (s)\in \mathcal{S}(\cH_{\mathrm{data}})$, for each $s\in \mathsf{Z}^m$. 

A quantum learner $\alg$ now consists of:
\begin{enumerate}[(i)]
    \item a (possibly trivial) decomposition of the data Hilbert space into a tensor product of a test data and a training data Hilbert space, $\cH_{\mathrm{data}}=\cH_{\mathrm{test}}\otimes \cH_{\mathrm{train}}$, 
    \item a measurable hypothesis space  $\mathsf{W}$,
    \item POVMs $\{E^\alg_s (w)\}_{w\in \mathsf{W}}$ on $\cH_{\mathrm{train}}$ for each $s\in \mathsf{Z}^m$, describing the measurements used by the learner to extract classical information from the training data state and leading to probability distributions\footnote{This formulation implicitly assumes that $\mathsf{W}$ is discrete. If $\mathsf{W}$ is continuous, we can instead work with associated probability densities $q^{\alg}_s (w)=\tr[E_s (w) \tr_{\mathrm{test}}[\rho(s)]]$. Our framework and results encompass both the discrete and the continuous case. We choose discrete-case notation merely for simplicity.} $Q^{\alg}_s$ on $\mathsf{W}$ defined via $Q^{\alg}_s (w)=\tr[E_s (w) \tr_{\mathrm{test}}[\rho(s)]]$,  
    \item a quantum hypothesis Hilbert space $\cH_{\mathrm{hyp}}$,
    \item a family of quantum channels $\{\Lambda^\alg_{s,w}:\cT_1(\cH_{\mathrm{train}})\to \cT_1(\cH_{\mathrm{hyp}})\}_{(s,w)\in\mathsf{Z}^m\times \mathsf{W}}$.
\end{enumerate}
That is, the learner $\alg$ proceeds as follows: First, conditioned on the classical data $s$, $\alg$ performs the measurement described by the POVM $\{E^\alg_s (w)\}_{w\in \mathsf{W}}$ on the training data subsystem of $\rho(s)$ and classically records the measurement outcome. Second, conditioned on both the classical data $s$ and the observed measurement outcome $w$, $\mathcal{A}$ applies the quantum channel $\Lambda^\alg_{s,w}$ to the post-measurement state of the training data subsystem.
This way, the action of the learner $\alg$ on the CQ data state $\rho$ leads to the CQ output state
\begin{align}
    \sigma^{\alg} 
    &= \E_{S\sim P^m}\left[\ketbra{S}{S}\otimes \E_{W\sim Q^{\alg}_S}\left[(\id_{\mathrm{test}}\otimes\Lambda^\alg_{S,W})\left(\rho^{\alg} (S,W)\right)\otimes\ketbra{W}{W}\right]\right]\\
    &= \E_{S\sim P^m}\E_{W\sim Q^{\alg}_S}\left[\ketbra{S}{S}\otimes (\id_{\mathrm{test}}\otimes\Lambda^\alg_{S,W})\left(\rho^{\alg} (S,W)\right)\otimes\ketbra{W}{W}\right]\\
    &= \E_{S\sim P^m}\E_{W\sim Q^{\alg}_S}\left[\ketbra{S}{S}\otimes \sigma^{\alg}(S,W)\otimes\ketbra{W}{W}\right], \label{eq:learner-output-state}
\end{align}
where we have defined the post-measurement state
\begin{equation}
    \rho^{\alg} (s,w) = \frac{\left(\mathbbm{1}_{\mathrm{test}}\otimes\sqrt{E^\alg_s (w)}\right) \rho (s) \left(\mathbbm{1}_{\mathrm{test}}\otimes\sqrt{E^\alg_s (w)}\right)}{\tr[E^\alg_s (w) \rho_{\mathrm{train}} (s)]}\;,
\end{equation}
Note that $\sigma^{\alg}(s,w)\in\mathcal{S}(\cH_{\mathrm{test}}\otimes \cH_{\mathrm{hyp}})$ for every $(s,w)\in\mathsf{Z}^m\times\mathsf{W}$.
If we denote by $P^{\alg}$ the induced probability distribution over $\mathsf{Z}^m\times\mathsf{W}$ with
\begin{equation}
    P^{\alg} (s,w)
    = P^m(s)\cdot Q^{\alg}_s (w) ,
\end{equation}
denote its marginal on $\mathsf{W}$ by $P^{\alg}_{\mathsf{W}}$, and its conditional distribution for the data given the hypothesis $W$ by $P^{\alg}_{\mathrm{data}}|W$, we can interchange the order of the expectations in \Cref{eq:learner-output-state} and rewrite $\sigma^{\alg}$ as 
\begin{equation}\label{eq:learner-output-state-classical-hypothesis-expectation}
    \sigma^{\alg}
    = \E_{W\sim P^{\alg}_{\mathsf{W}}}\E_{S\sim P^{\alg}_{\mathrm{data}}|W}\left[\ketbra{S}{S}\otimes \sigma^{\alg}(S,W) \otimes\ketbra{W}{W}\right] .
\end{equation}

\begin{remark}
    In the language of \Cref{subsection:comparison-frameworks}, the setup described above assumes perfectly correlated classical training and test data. This choice was made to simplify the presentation.
    However, one may extend the framework by considering the classical part of the data to consist of (in general correlated) training and test data. Naturally, the POVMs and channels performed by the learner should only depend on the training data but not on the test data.
    This straightforward extension of our framework then also encompasses the classical extension of the \cite{Xu2017} framework with test data that we describe in \Cref{subsection:comparison-frameworks}.
    Note that including separate classical test data also enables us to describe tasks in which the training data distribution is different from the test data distribution, for example in scenarios of covariate shift, where out-of-distribution generalization becomes relevant.
    This may allow for connecting our framework to recent work on out-of-distribution generalization in learning quantum processes \cite{caro2022out-of-distribution, huang2022learning}.
\end{remark}

\begin{remark}
    Instead of describing $\alg$ in terms of POVMs and channels, we could merge these objects into a description in terms of quantum instruments (compare \cite[Chapter 5]{Heinosaari2011}). 
    We have chosen a formulation based on POVMs and channels in order to make the presentation more concrete and widely accessible.
\end{remark}

\begin{example}[Quantum state classification]\label{example:quantum-state-classification-framework}
    As an illustrative example, we consider a task of \emph{quantum state classification}, in which the quantum learner should PAC learn a two-outcome POVM that distinguishes between pairs of $d$-dimensional states weighted according to prior probabilities.
    This can be viewed as a version of the problem studied in \cite{Guta2010} but with an underlying distribution over weighted pairs of states.
    To formalize this problem, we consider a probability distribution $P_{\mathrm{weight}}\otimes P_{\mathrm{pair}}$ over the space $[0,1]\times \left(\mathcal{S}(\mathbb{C}^d)\times \mathcal{S}(\mathbb{C}^d)\right)$ of weights and pairs of states. 
    If the learner has access to $m$ labeled quantum examples generated from this distribution, the overall classical-quantum data is described by the state
    \begin{align}
        \rho
        &= \E_{\{\pi_0^{(i)}, (\sigma_0^{(i)},\sigma_1^{(i)})\}_{i=1}^m\sim (P_{\mathrm{weight}}\otimes P_{\mathrm{pair}})^m}\left[ \bigotimes_{i=1}^m  \left(\pi_0^{(i)} \ketbra{0}{0}\otimes (\sigma_0^{(i)})^{\otimes 2} + (1-\pi_0^{(i)}) \ketbra{1}{1}\otimes (\sigma_1^{(i)})^{\otimes 2}\right) \right]\\
        &= \E_{\{\pi_0^{(i)}\}_{i=1}^m\sim P_{\mathrm{weight}}^m}\left[\sum_{s=(z_1,\ldots,z_m)\in\{0,1\}^m} \left(\prod_{i=1}^m \pi_{z_i}^{(i)}\right) \ketbra{s}{s}\otimes \E_{\{(\sigma_0^{(i)},\sigma_1^{(i)})\}_{i=1}^m\sim P_{\mathrm{pair}}^m}\left[\left(\bigotimes_{i=1}^m \sigma_{z_i}^{(i)}\right)^{\otimes 2}\right]\right], 
    \end{align}
    where we used the notation $\pi_1^{(i)} = 1-\pi_0^{(i)}$.
    If we define $\mathsf{Z}=\{0,1\}$, if we let $P$ be the probability distribution on $\mathsf{Z}$ defined via
    \begin{equation}
        P(z_i)
        = \E_{\{\pi_0^{(i)}\}_{i=1}^m\sim P_{\mathrm{weight}}^m}\left[\pi_{z_i}^{(i)}\right]
        \quad \forall z_i\in\{0,1\}
        ,
    \end{equation}
    and if we further define the density operators $\rho (s)$ acting on the Hilbert space $\cH_{\mathrm{data}}=\cH_{\mathrm{test}}\otimes \cH_{\mathrm{train}}= (\mathbb{C}^d)^{\otimes m} \otimes (\mathbb{C}^d)^{\otimes m}$ as
    \begin{equation}
        \rho (s)
        = \E_{\{(\sigma_0^{(i)},\sigma_1^{(i)})\}_{i=1}^m\sim P_{\mathrm{pair}}^m}\left[\left(\bigotimes_{i=1}^m \sigma_{z_i}^{(i)}\right)^{\otimes 2}\right]\quad\forall s=(z_1,\ldots,z_m)\in\{0,1\}^m,
    \end{equation}
    then, we see that
    \begin{equation}
        \rho 
        = \E_{S\sim P^m}\biggl[\ketbra{S}{S}\otimes \rho (S)\biggr]
    \end{equation}
    in accordance with \Cref{eq:data-state}.
    
    To describe a quantum learner $\alg$ in this setting, take the quantum hypothesis space $\mathcal{H}_{\mathrm{hyp}}=\mathbb{C}$ to be trivial and consider a measurable hypothesis space $\mathsf{W}$. Here, we imagine each $w\in\mathsf{W}$ to be associated to a two-outcome qudit POVM $\{F(w), \mathds{1}_d-F(w)\}$, which describes a measurement that the learner could use for the distinguishing task.
    Now, to every $s\in\{0,1\}^m$ we associate a POVM $\{E^\alg_{s}(w)\}_{w\in\mathsf{W}}$.
    Note: As $\cT_1(\cH_{\mathrm{hyp}})=\cT_1(\mathbb{C})=\mathbb{C}$ is trivial, so is the family of quantum channels $\{\Lambda^\alg_{s,w}:\cT_1(\cH_{\mathrm{train}})\to  \cT_1(\cH_{\mathrm{hyp}})\}$ in this setting. That is, $\Lambda^\alg_{s,w}(\cdot)=\tr[(\cdot)]$ for all $s,w$.
    Thus, according to \Cref{eq:learner-output-state-classical-hypothesis-expectation}, the action of the learner $\alg$ on $\rho$ leads to the output state
    \begin{align}
        \sigma^{\alg}
        &= \E_{W\sim P^{\alg}_{\mathsf{W}}}\E_{S\sim P^{\alg}_{\mathrm{data}}|W}\left[\ketbra{S}{S}\otimes \sigma^{\alg}(S,W) \otimes\ketbra{W}{W}\right],
    \end{align}
    with the probability distribution $P^{\alg}$ on $\{0,1\}^m\times\mathsf{W}$ given by
    \begin{equation}
        P^{\alg}(s,w)
        = P^m(s)\cdot \tr\left[E_{s}(w) \rho_{\mathrm{train}}(s)\right]
    \end{equation}
    and with the post-measurement subsystem states
    \begin{equation}
        \rho_{\mathrm{test}}(s)
        = \rho_{\mathrm{train}}(s)
        = \E_{\{(\sigma_0^{(i)},\sigma_1^{(i)})\}_{i=1}^m\sim P_{\mathrm{pair}}^m}\left[\bigotimes_{i=1}^m \sigma_{z_i}^{(i)}\right]\quad\forall s=(z_1,\ldots,z_m)\in\{0,1\}^m.
    \end{equation}
    This concludes the example, we now return to the discussion of our general framework.
\end{example}

Given a learner $\alg$ and a data CQ state as described above, we now define relevant notions of risk/error. 
In classical notion theory, the most commonly used such notions are those of empirical and true risk. 
As discussed in \Cref{subsection:comparison-frameworks}, the expected empirical risk arises as an average of losses with correlated training data and hypothesis random variables. In contrast, the expected true risk can be understood as an average of losses after decoupling training data and hypothesis.
To define analogous notions for quantum learning, we go from loss functions to loss observables. Moreover, we extend the intuition that decoupling makes the difference between empirical and true risk to a decoupling on both the classical and the quantum level.

For the next three definitions, $\rho$, $\alg$, $P^{\alg}$, and $\sigma^{\alg}$ are as introduced above, and we consider a family of self-adjoint loss observables $\{L(s,w)\}_{(s,w)\in\mathsf{Z}^m\times\mathsf{W}}$ with $L(s,w)\in\mathcal{B}(\mathcal{H}_{\mathrm{test}}\otimes \mathcal{H}_{\mathrm{hyp}})$.

\begin{definition}[Expected empirical risk]\label{definition:expected-empirical-risk}
    The \emph{expected empirical risk} of $\alg$ w.r.t.~$\rho$ as measured by $\{L(s,w)\}_{(s,w)\in\mathsf{Z}^m\times\mathsf{W}}$ is defined as
    \begin{equation}
        \hat{R}_\rho(\alg)
        := \E_{(S,W)\sim P^{\alg}}\left[\tr[L(S,W) \sigma^{\alg}(S,W)] \right] .
    \end{equation}
\end{definition}

\begin{definition}[Expected true risk]\label{definition:expected-true-risk}
    The \emph{expected true risk} of $\alg$ w.r.t.~$\rho$ as measured by $\{L(s,w)\}_{(s,w)\in\mathsf{Z}^m\times\mathsf{W}}$ is defined as
    \begin{equation}
        R_\rho(\alg)
        := \E_{(\bar{S},\bar{W})\sim P^m\otimes P^{\alg}_{\mathsf{W}}}\left[ \tr\left[L(\bar{S},\bar{W}) \left(\rho_{\mathrm{test}}(\bar{S}) \otimes \sigma^{\alg}_{\mathrm{hyp}}(\bar{S},\bar{W})\right)\right] \right] .
    \end{equation}
    As before, here we let $\bar{S},\bar{W}$ denote independent copies of $S$ and $W$.
\end{definition}

\begin{definition}[Expected generalization error]\label{definition:expected-generalization-error}
    The \emph{expected generalization error} of $\alg$ w.r.t.~$\rho$ as measured by $\{L(s,w)\}_{(s,w)\in\mathsf{Z}^m\times\mathsf{W}}$ is defined to be
    \begin{equation}
        \gen_\rho (\alg) := R_\rho(\alg)-\hat{R}_\rho(\alg) ,
    \end{equation}
    the difference between the expected true and empirical risks of $\alg$ w.r.t.~$\rho$ as measured by $\{L(s,w)\}_{(s,w)\in\mathsf{Z}^m\times\mathsf{W}}$.
\end{definition}

\begin{remark}\label{remark:channel-absorbed-in-loss}
    Note that there is some freedom in our definition of channels $\Lambda_{s,w}^\alg$ and loss observables $L(s,w)$ because of the duality of Schrödinger and Heisenberg pictures. 
    Concretely, if $\Lambda_{s,w}^\alg= \Lambda_{s,w}^{''\alg}\circ \Lambda_{s,w}^{'\alg}$ and if we define $L^{'}(s,w) = (\operatorname{id}_{\mathrm{test}}\otimes \Lambda_{s,w}^{''\alg})^\ast (L(s,w))$, then the expected empirical and true risks obtained by considering $\Lambda_{s,w}^{'\alg}$ and $L^{'}(s,w)$ coincide with those originally obtained from $\Lambda_{s,w}^\alg$ and $L(s,w)$.
\end{remark}

Next, we illustrate these definitions in two concrete examples. First, we demonstrate how they recover the classical case, before continuing the discussion of our state classification application.

\begin{example}\label{example:recovering-classical-risks}
    Starting from \Cref{definition:expected-empirical-risk,definition:expected-true-risk,definition:expected-generalization-error}, we can reproduce the corresponding classical notions of expected empirical risk, expected true risk, and expected generalization error in (at least) the following two ways:
    On the one hand, if we assume all involved quantum systems to be trivial (i.e., $\mathcal{H}_{\mathrm{data}} = \mathcal{H}_{\mathrm{hyp}} = \mathbb{C}$), then the loss observables are real scalars. Interpreting these as classical loss functions, we recover the notions familiar from the classical case.
    On the other hand, even when (some of) the involved quantum systems are non-trivial, if we consider loss observables $L(s,w) = \ell(s,w)\cdot\mathbbm{1}_{\mathrm{test},\mathrm{hyp}}$ given by multiples of the identity, with classical loss function values $\ell(s,w)=\tfrac{1}{m}\sum_{i=1}^m \ell(w,z_i)$, then the trace-normalization of $\sigma^{\alg}(s,w)$ and $\rho_{\mathrm{test}}(s) \otimes \sigma^{\alg}_{\mathrm{hyp}}(s,w)$ ensures that we again obtain the same quantities as in the classical case. As we will see later, 
    our results for this latter setting indeed reproduce the classical bounds of \cite{Xu2017}.
\end{example}

\begin{example}[Quantum state classification -- \Cref{example:quantum-state-classification-framework} continued]\label{example:quantum-state-classification-risks}
    To obtain reasonable notions of risk in the quantum state classification setting of \Cref{example:quantum-state-classification-framework}, we can take the loss observables for $s=(z_1,\ldots,z_m)\in\{0,1\}^m$ and $w\in\mathsf{W}$ to be
    \begin{equation}
        L(s,w)
        = \frac{1}{m}\sum_{i=1}^m \mathbbm{1}_d^{\otimes (i-1)}\otimes\left((1-z_i)(\mathbbm{1}_d-F(w)) + z_i F(w)\right)\otimes \mathbbm{1}_d^{\otimes (m-i)} .
    \end{equation}
    With this choice, the expected empirical risk from \Cref{definition:expected-empirical-risk} becomes
    \small
    \begin{align}
        \hat{R}_{\rho} (\alg)
        &=\E_{(S,W)\sim P^{\alg}}\left[\tr[L(S,W) \sigma^{\alg}(S,W)] \right]\\
        &= \E_{(S,W)\sim P^{\alg}}\left[\frac{1}{m}\sum_{i=1}^m \tr\left[\left((1-Z_i)(\mathbbm{1}_d-F(W)) + Z_i F(W)\right) \E_{(\sigma_0^{(i)},\sigma_1^{(i)})\sim P_{\mathrm{pair}}}\left[ \sigma_{Z_i}^{(i)}\right] \right]\right]\\
        &= \E_{(\{\pi_0^{(i)}, (\sigma_0^{(i)},\sigma_1^{(i)})\}_{i=1}^m, W) \sim P^{\alg}}\left[\frac{1}{m}\sum_{i=1}^m \left(\pi_0^{(i)} \tr[(\mathbbm{1}_d-F(W))\sigma_0^{(i)}] + \pi_1^{(i)} \tr[F(W)\sigma_1^{(i)}]\right)\right], \label{eq:state-discrimination-expected-empirical-risk}
    \end{align}
    \normalsize
    where the last step uses the definition of $P$ from \Cref{example:quantum-state-classification-framework} and, in a slight abuse of notation, uses $P^\alg$ to also denote the induced joint distribution over weighted pairs of states and hypotheses.
    This induced distribution can explicitly be written as
    \begin{equation}
        \begin{split}
            &P^\alg \left(\{\pi_0^{(i)}, (\sigma_0^{(i)},\sigma_1^{(i)})\}_{i=1}^m, w\right)\\
            &= \left(\prod_{i=1}^m P_{\mathrm{weight}} (\pi_0^{(i)})\right) \cdot \left(\prod_{i=1}^m P_{\mathrm{pair}} (\sigma_0^{(i)},\sigma_1^{(i)})\right)\cdot \sum_{s\in\{0,1\}^m} \left(\prod_{i=1}^m  \pi_{z_i}^{(i)}\right)\cdot\tr\left[E_s^\alg (w) \rho_{\mathrm{train}}(s)\right] .
        \end{split}
    \end{equation}
    Thus, \Cref{eq:state-discrimination-expected-empirical-risk} is exactly the expected probability that the quantum learner misclassifies an unknown state, where the average is over the joint distribution of training data and hypothesis. This is the natural notion of expected empirical risk in this scenario. 
    
    The expected true risk according to \Cref{definition:expected-true-risk} is
    \small
    \begin{align}
        R_\rho (\alg)
        &=\E_{(\bar{S},\bar{W})\sim P^m\otimes P^{\alg}_{\mathsf{W}}}\left[ \tr\left[L(\bar{S},\bar{W}) \left(\rho_{\mathrm{test}}(\bar{S}) \otimes \sigma^{\alg}_{\mathrm{hyp}}(\bar{S},\bar{W})\right)\right] \right] \\
        &= \E_{(\{\bar{\pi}_0^{(i)}, (\bar{\sigma}_0^{(i)},\bar{\sigma}_1^{(i)})\}_{i=1}^m, \bar{W}) \sim (P_{\mathrm{weight}}\otimes P_{\mathrm{pair}})^m\otimes P^\alg_{\mathsf{W}} }\left[\frac{1}{m}\sum_{i=1}^m \left(\bar{\pi}_0^{(i)} \tr[(\mathbbm{1}_d-F(\bar{W}))\bar{\sigma}_0^{(i)}] + \bar{\pi}_1^{(i)} \tr[F(\bar{W})\bar{\sigma}_1^{(i)}]\right)\right]\\
        &= \E_{(\bar{\pi}_0, (\bar{\sigma}_0,\bar{\sigma}_1), \bar{W}) \sim P_{\mathrm{weight}}\otimes P_{\mathrm{pair}}\otimes P^\alg_{\mathsf{W}}}\left[\bar{\pi}_0 \tr\left[(\mathbbm{1}_d-F(\bar{W}))\bar{\sigma}_0] + \bar{\pi}_1 \tr[F(\bar{W})\bar{\sigma}_1\right]\right] ,
    \end{align}
    \normalsize
    where the random variables with bars again denote independent copies of the respective unbarred random variables.
    Thus, the expected true risk is exactly the expected probability that the quantum learner misclassifies an unknown state from a new, independently drawn weighted pair, a natural choice of expected true risk in this setting.
    Hence, the expected generalization error from \Cref{definition:expected-generalization-error} indeed reproduces the natural expression, namely the difference between the expected misclassification probability on a randomly drawn new data point and the expected average misclassification probability over the training data.
    This concludes the discussion of risks for our state classification tasks.
\end{example}

In \Cref{sec:applications}, we demonstrate that the general notions of risks introduced in \Cref{definition:expected-empirical-risk,definition:expected-true-risk,definition:expected-generalization-error} reproduce further natural performance measures for suitably chosen loss observable $L$ in learning scenarios such as PAC learning quantum states, learning classical functions from entangled quantum data, and quantum parameter estimation, among others.

\subsection{Generalization bounds for learning from classical-quantum data}\label{sec:quantum-info-gen-bounds}

The remainder of this section is concerned with proving that classical and quantum moment generating function assumptions lead to expected generalization error bounds in terms of quantities measuring the classical and quantum information between the data and the output of the learner.
This lifts the following intuition from classical to quantum learning: Learners generalize well (in distribution) if they produce hypotheses that do not depend too strongly on the specific dataset that they were trained on.

\begin{table}
  \begin{center}
      \renewcommand{\arraystretch}{1.4}
      \begin{tabular}{|l|c|}
        \toprule
        Object & Notation \\
        \midrule
        Probability density of classical data & $P$ \\
        Input data CQ state & $\rho= \E_{S\sim P^m}[\ketbra{S}{S}\otimes \rho (S)]$\\
        POVMs associated with learner $\alg$ & $E^{\alg}_s(w)$ \\
        Joint distribution induced by learner $\alg$ & $P^\alg$\\
        CPTP maps associated with learner $\alg$ & $\Lambda^{\alg}_{s,w}$ \\
        Learner output & $\sigma^\alg = \E_{(S,W)\sim P^\alg}\left[\ketbra{S}{S}\otimes \sigma^{\alg}(S,W)\otimes\ketbra{W}{W}\right]$\\
        Loss observables & $L(s,w)$ \\
        Quantum mutual information & $I(\cdot;\cdot)_{\bullet}$\\
        Holevo information & $\chi(\{\cdot,\cdot\})$\\
        Quantum log-MGF bound & $\psi_{\pm}$ \\
        Classical log-MGF bound & $\phi_{\pm}$ \\
        \bottomrule
      \end{tabular}
  \end{center}
  \caption{Notation for the various mathematical objects appearing in this section.}
  \label{table:notation-details}
\end{table}

\Cref{table:notation-details} compiles relevant notation for the formulation of our results.
Before stating them, we recall the following definition from convex analysis and a lemma about the quantum relative entropy:

\begin{definition}[Fenchel-Legendre dual]
    Let $\psi:\mathbb{R}\to\mathbb{R}$ be lower-semi-continuous and convex.
    The Fenchel-Legendre dual $\psi^\ast:\mathbb{R}\to\mathbb{R}$ is defined as
    \begin{align}
        \psi^\ast (t)
        = \sup_{\lambda\in\mathbb{R}} \{\lambda t - \psi(\lambda)\}.
    \end{align}
\end{definition}

\begin{lemma}[Petz's variational characterization of the quantum relative entropy~{\cite{petz1988variational}}]\label{lemma:variational-quantum-relative-entropy}    
    Let $\sigma_1,\sigma_2\in\mathcal{S}(\cH)$ be two quantum states. Then, the relative entropy between $\sigma_1$ and $\sigma_2$ can be rewritten as follows:
    \begin{equation}
        D(\sigma_1\|\sigma_2)
        =\sup_{H=H^\dagger\in \cB(\cH)} \{ \tr[ \sigma_1 H] - \log \tr[\exp\left(\log(\sigma_2) + H\right)]\}.
    \end{equation}
\end{lemma}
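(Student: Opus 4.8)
The plan is to prove the identity by establishing the two matching inequalities, obtaining the ``$\geq$'' direction from the non-negativity of quantum relative entropy applied to a suitably perturbed Gibbs state, and the ``$\leq$'' direction by exhibiting an (asymptotically) optimal choice of $H$.

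\textbf{Step 1: the supremum is at most $D(\sigma_1\|\sigma_2)$.} Fix any Hermitian $H\in\mathcal{B}(\mathcal{H})$ and set $\sigma_H := \exp(\log\sigma_2 + H)/Z_H$ with $Z_H := \tr[\exp(\log\sigma_2 + H)]$, a density operator. Since $\log\sigma_H = \log\sigma_2 + H - (\log Z_H)\mathbbm{1}$ on the relevant support, and since $D(\sigma_1\|\sigma_H)\geq 0$, a one-line expansion gives
\begin{equation}
0 \leq D(\sigma_1\|\sigma_H) = D(\sigma_1\|\sigma_2) - \tr[\sigma_1 H] + \log Z_H ,
\end{equation}
that is, $\tr[\sigma_1 H] - \log\tr[\exp(\log\sigma_2 + H)] \leq D(\sigma_1\|\sigma_2)$. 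Taking the supremum over $H$ yields one direction, trivially consistent with the case $D(\sigma_1\|\sigma_2)=+\infty$.

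\textbf{Step 2: the supremum equals $D(\sigma_1\|\sigma_2)$.} When $\sigma_1$ and $\sigma_2$ are full rank the supremum is attained at $H^\star := \log\sigma_1 - \log\sigma_2$: then $\exp(\log\sigma_2 + H^\star) = \sigma_1$, so $Z_{H^\star}=1$ and $\tr[\sigma_1 H^\star] = \tr[\sigma_1(\log\sigma_1-\log\sigma_2)] = D(\sigma_1\|\sigma_2)$. For the degenerate cases one argues by approximation. If $\mathrm{supp}(\sigma_1)\subseteq\mathrm{supp}(\sigma_2)$, replace $\sigma_1$ by $\sigma_1^{(\epsilon)} := (1-\epsilon)\sigma_1 + \epsilon\sigma_2$, which is full rank on $\mathcal{K}:=\mathrm{supp}(\sigma_2)$; apply the full-rank result on $\mathcal{K}$ and let $\epsilon\to 0$, using continuity of $D(\cdot\|\sigma_2)$ on states supported in $\mathcal{K}$ together with $\tr[\sigma_1^{(\epsilon)}H]\to\tr[\sigma_1 H]$. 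If instead $\mathrm{supp}(\sigma_1)\not\subseteq\mathrm{supp}(\sigma_2)$, pick a unit vector $\ket{v}\in\mathrm{supp}(\sigma_1)\cap\ker(\sigma_2)$ and take $H = H_t := t\,\proj{v}$; since $\ket{v}\perp\mathrm{supp}(\sigma_2)$ one has $Z_{H_t}=Z_0=1$, while $\tr[\sigma_1 H_t] = t\bra{v}\sigma_1\ket{v}\to +\infty$ because $\sigma_1$ is strictly positive on its support. Hence the objective tends to $+\infty = D(\sigma_1\|\sigma_2)$.

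\textbf{Alternative route and main obstacle.} A more conceptual variant is to first establish the dual Gibbs variational principle $\log\tr[\exp(\log\sigma_2 + H)] = \sup_{\rho\in\mathcal{S}(\mathcal{H})}\{\tr[\rho H] - D(\rho\|\sigma_2)\}$, i.e.\ that the log-partition functional is the Fenchel dual of $\rho\mapsto D(\rho\|\sigma_2)$, and then invoke Fenchel biduality, using the convexity and lower semicontinuity of $\rho\mapsto D(\rho\|\sigma_2)$ on the compact convex set of states. In either approach the only genuinely delicate point is the bookkeeping around rank deficiency: interpreting $\exp(\log\sigma_2 + H)$ (equivalently, taking the limit of $\exp(\log\sigma_2^{(\delta)} + H)$ for full-rank regularizations $\sigma_2^{(\delta)}$) and $\log\sigma_1$ when the states are not full rank, and matching the $+\infty$ convention of $D(\sigma_1\|\sigma_2)$ when $\mathrm{supp}(\sigma_1)\not\subseteq\mathrm{supp}(\sigma_2)$. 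This is handled by the support-restriction and regularization arguments sketched above, and is the portion I would write out most carefully.
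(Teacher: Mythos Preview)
The paper does not give its own proof of this lemma; it is quoted from \cite{petz1988variational}, and in fact only the inequality $D(\sigma_1\|\sigma_2)\geq \tr[\sigma_1 H]-\log\tr[\exp(\log\sigma_2+H)]$ is used downstream (in the proof of \Cref{theorem:qmi-gen-bound-qmgf-and-cmgf}). Your argument is the standard one; Step~1 and the full-rank case of Step~2 are correct as written.

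One small slip in the $+\infty$ case: when $\mathrm{supp}(\sigma_1)\not\subseteq\mathrm{supp}(\sigma_2)$ it does \emph{not} follow that $\mathrm{supp}(\sigma_1)\cap\ker(\sigma_2)\neq\{0\}$; take $\sigma_1=\proj{+}$ and $\sigma_2=\proj{0}$ in $\mathbb{C}^2$. What you actually need is a unit vector $\ket{v}\in\ker(\sigma_2)$ with $\bra{v}\sigma_1\ket{v}>0$, and such a vector exists precisely because the support inclusion fails (otherwise $Q\sigma_1 Q=0$ for $Q$ the projector onto $\ker\sigma_2$, forcing $\mathrm{supp}(\sigma_1)\subseteq\mathrm{supp}(\sigma_2)$). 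With this $\ket{v}$ and the convention that $\exp(\log\sigma_2+H)$ is supported on $\mathrm{supp}(\sigma_2)$, your $H_t=t\proj{v}$ argument goes through. For the rank-deficient finite case your regularization idea is right, but the sentence ``$\tr[\sigma_1^{(\epsilon)}H]\to\tr[\sigma_1 H]$'' is misleading since the optimal $H$ depends on $\epsilon$; cleaner is to plug $H_\epsilon=\log\sigma_1^{(\epsilon)}-\log\sigma_2$ into the objective \emph{for $\sigma_1$}, note $Z_{H_\epsilon}=1$, and use $\tr[\sigma_1\log\sigma_1^{(\epsilon)}]\to\tr[\sigma_1\log\sigma_1]$, which follows from $D(\sigma_1\|\sigma_1^{(\epsilon)})\leq \epsilon\, D(\sigma_1\|\sigma_2)\to 0$ by joint convexity.
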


We can now state and prove our main result:

\begin{theorem}[Expected generalization error bound via quantum mutual information]\label{theorem:qmi-gen-bound-qmgf-and-cmgf}
    Assume that, for every $(s,w)\in\mathsf{Z}^m\times\mathsf{W}$,
    \small
    \begin{equation}\label{eq:quantum-moment-generating-function-bound-assumption}
        \log \tr\left[(\rho_{\mathrm{test}}(s) \otimes \sigma^{\alg}_{\mathrm{hyp}}(s,w)) e^{\lambda \left( L(s,w) - \tr[L(s,w) (\rho_{\mathrm{test}}(s) \otimes \sigma^{\alg}_{\mathrm{hyp}}(s,w))]\mathbbm{1}_{\mathrm{test},\mathrm{hyp}}\right)}\right]
        \leq \begin{cases} \psi_+ (\lambda)\quad \textrm{if } \lambda \geq 0 \\ \psi_- (\lambda)\quad \textrm{if } \lambda <0 \end{cases} ,\tag{\texttt{QMGF}}
    \end{equation}
    \normalsize
    where $\psi_+, \psi_-:\mathbb{R}\to\mathbb{R}$ are convex, differentiable at $0$, and satisfy $\psi_\pm (0)=\psi'_\pm (0)=0$.
    Moreover, assume that, for every $w\in \mathsf{W}$,
    \small
    \begin{equation}\label{eq:classical-moment-generating-function-bound-assumption}
        \log \E_{S\sim P^m}\left[e^{\lambda (\tr[L(S,w) (\rho_{\mathrm{test}}(S) \otimes \sigma^{\alg}_{\mathrm{hyp}}(S,w))] - \E_{\tilde{S}\sim P^m}\left[\tr[L(\tilde{S},w)(\rho_{\mathrm{test}}(\tilde{S}) \otimes \sigma^{\alg}_{\mathrm{hyp}}(\tilde{S},w))]\right])}\right]
        \leq \begin{cases} \phi_+ (\lambda)\quad \textrm{if } \lambda \geq 0 \\ \phi_- (\lambda)\quad \textrm{if } \lambda <0 \end{cases}, \tag{\texttt{CMGF}}
    \end{equation}
    \normalsize
    where $\phi_+, \phi_-:\mathbb{R}\to\mathbb{R}$ are convex, differentiable at $0$, and satisfy $\phi_\pm (0)=\phi'_\pm (0)=0$. 
    Then,
    \begin{equation}\label{eq:qmi-gen-bound-qmgf-and-cmgf}
        \begin{split}
            \pm \gen_\rho (\alg)
            &\leq \psi_\mp^{\ast -1}\left(\E_{(S,W)\sim P^{\mathcal{A}}}[I(\mathrm{test};\mathrm{hyp})_{\sigma(S,W)}] + \E_{S\sim P^m}\left[ \chi\left(\{P^{\mathcal{A}}_{\mathsf{W}|S}(w), \rho^{\alg}_{\mathrm{test}} (S,w)\}_{w\in\mathsf{W}}\right)\right]\right) \\
            &\hphantom{\leq }~~+ \phi_\mp^{\ast -1}\left(I(S; W)\right) \, .
        \end{split}
    \end{equation}
\end{theorem}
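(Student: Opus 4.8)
The plan is to interpolate between the expected empirical and true risks through a single ``half-decoupled'' quantity, splitting the generalization error into a genuinely quantum gap and a classical gap, and then to handle each by the same Donsker--Varadhan/change-of-measure scheme — the quantum version built from Petz's variational characterization (\Cref{lemma:variational-quantum-relative-entropy}) and the Golden--Thompson inequality. Concretely, I would introduce the \emph{decoupled empirical risk} $\tilde R_\rho(\alg):=\E_{(S,W)\sim P^\alg}[\tr[L(S,W)\tau^\alg(S,W)]]$, so that $\gen_\rho(\alg)=\big(R_\rho(\alg)-\tilde R_\rho(\alg)\big)+\big(\tilde R_\rho(\alg)-\hat R_\rho(\alg)\big)$. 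The first summand is the \emph{classical gap} $\Delta_c:=\E_{(\bar S,\bar W)\sim P^m\otimes P^\alg_{\mathsf W}}[g(\bar S,\bar W)]-\E_{(S,W)\sim P^\alg}[g(S,W)]$ for the scalar loss $g(s,w):=\tr[L(s,w)\tau^\alg(s,w)]$, which has exactly the shape of the classical generalization error of \cite{Xu2017}; the second summand is the \emph{quantum gap} $\E_{(S,W)\sim P^\alg}[\tr[L(S,W)(\tau^\alg(S,W)-\sigma^\alg(S,W))]]$.

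\emph{Quantum gap.} Fix $(s,w)$, abbreviate $L,\sigma,\tau$ and set $L':=L-\tr[L\tau]\,\mathbbm{1}$. Feeding the ansatz $H=\lambda L'$ into \Cref{lemma:variational-quantum-relative-entropy}, then using Golden--Thompson to bound $\tr[\exp(\log\tau+\lambda L')]\le\tr[\tau\,e^{\lambda L'}]$, and finally invoking \eqref{eq:quantum-moment-generating-function-bound-assumption}, yields for every $\lambda\in\mathbb{R}$
\[
D(\sigma\|\tau)\;\ge\;\lambda\big(\tr[L\sigma]-\tr[L\tau]\big)-\begin{cases}\psi_+(\lambda)&\lambda\ge0\\\psi_-(\lambda)&\lambda<0\end{cases},
\]
where if $\operatorname{supp}(\sigma)\not\subseteq\operatorname{supp}(\tau)$ the left side is $+\infty$ and the bound is vacuous. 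Averaging over $(S,W)\sim P^\alg$ (using $\tr[L\sigma]-\tr[L\tau]=\tr[\sigma L']$) and invoking the elementary convex-analysis fact that $\lambda t\le D+\psi(\lambda)$ for all $\lambda\ge0$ (with $\psi$ convex, $\psi(0)=\psi'(0)=0$) implies $t\le\psi^{\ast-1}(D)$, together with its mirror for $\lambda\le0$, gives
\[
\pm\,\E_{(S,W)\sim P^\alg}\big[\tr[L\sigma^\alg]-\tr[L\tau^\alg]\big]\;\le\;\psi_\pm^{\ast-1}\!\Big(\E_{(S,W)\sim P^\alg}\big[D(\sigma^\alg(S,W)\|\tau^\alg(S,W))\big]\Big).
\]

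\emph{Identifying the averaged relative entropy.} Since $\Lambda^\alg_{s,w}$ acts only on the training register and is trace-preserving, the reduced state of $\sigma^\alg(s,w)$ on the test register equals the post-measurement test state $\rho^\alg_{\mathrm{test}}(s,w)$. Combining this with the chain rule $D(\omega_{AB}\|\theta_A\otimes\omega_B)=I(A;B)_\omega+D(\omega_A\|\theta_A)$ for $\tau^\alg(s,w)=\rho_{\mathrm{test}}(s)\otimes\sigma^\alg_{\mathrm{hyp}}(s,w)$ gives $D(\sigma^\alg(s,w)\|\tau^\alg(s,w))=I(\mathrm{test};\mathrm{hyp})_{\sigma^\alg(s,w)}+D(\rho^\alg_{\mathrm{test}}(s,w)\|\rho_{\mathrm{test}}(s))$. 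Because $\sum_w(\mathbbm{1}\otimes\sqrt{E^\alg_s(w)})\rho(s)(\mathbbm{1}\otimes\sqrt{E^\alg_s(w)})$ has test-marginal $\rho_{\mathrm{test}}(s)$, one has $\E_{W\sim P^\alg_{\mathsf W\mid S=s}}[\rho^\alg_{\mathrm{test}}(s,W)]=\rho_{\mathrm{test}}(s)$, so averaging the second term over $W$ turns it into the Holevo information $\chi(\{P^\alg_{\mathsf W\mid S}(w),\rho^\alg_{\mathrm{test}}(s,w)\}_w)$ via \eqref{eq:holevo-information-alternate-expression}; averaging further over $S\sim P^m$ identifies $\E_{(S,W)\sim P^\alg}[D(\sigma^\alg\|\tau^\alg)]$ with the bracketed QMI-plus-Holevo expression in \eqref{eq:qmi-gen-bound-qmgf-and-cmgf}.

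\emph{Classical gap and assembly.} For $\Delta_c$ I would run the classical argument: write $I(S;W)=\E_{W\sim P^\alg_{\mathsf W}}[D(P^\alg_{\mathsf Z^m\mid W}\|P^m)]$ (the $S$-marginal of $P^\alg$ being $P^m$), lower-bound each relative entropy by Donsker--Varadhan with test function $\lambda g(\cdot,w)$, absorb the centered log-MGF into $\phi_\pm(\lambda)$ by \eqref{eq:classical-moment-generating-function-bound-assumption}, average over $W$ so the linear term becomes $-\lambda\Delta_c$, and optimize over $\lambda$ of a fixed sign with the same convex-analysis fact to obtain $\pm\Delta_c\le\phi_\pm^{\ast-1}(I(S;W))$. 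Adding this to the quantum-gap bound with matched signs — $+\gen_\rho(\alg)$ pairing with the $\lambda<0$ branches ($\psi_-,\phi_-$) and $-\gen_\rho(\alg)$ with the $\lambda>0$ branches — yields \eqref{eq:qmi-gen-bound-qmgf-and-cmgf}. I expect the crux to be the quantum steps: Golden--Thompson is precisely the device that bridges Petz's $\tr[e^{\log\tau+H}]$ to the assumed $\tr[\tau e^{\lambda L'}]$-type moment generating function, and the chain-rule/averaging argument — in particular the observation that the learner's channel leaves the test marginal untouched — is what makes the quantum contribution split cleanly into an averaged QMI plus a Holevo term rather than an opaque relative entropy. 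The remaining work is bookkeeping: aligning the two-sided Fenchel inverses $\psi_\mp^{\ast-1},\phi_\mp^{\ast-1}$ with the correct sign of $\gen_\rho(\alg)$, and dispatching support/measurability technicalities that only ever enlarge the right-hand side or render the left side $+\infty$.
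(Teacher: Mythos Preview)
Your proposal is correct and follows essentially the same route as the paper: split the generalization error into a quantum decoupling gap and a classical decoupling gap, handle the quantum gap via Petz's variational formula plus Golden--Thompson combined with \eqref{eq:quantum-moment-generating-function-bound-assumption}, handle the classical gap via Donsker--Varadhan plus \eqref{eq:classical-moment-generating-function-bound-assumption}, and identify the averaged quantum relative entropy $\E_{(S,W)}[D(\sigma^\alg\|\tau^\alg)]$ as the expected QMI plus Holevo term. The only cosmetic difference is that you average the pointwise inequality $D\ge\lambda t-\psi_\pm(\lambda)$ over $(S,W)$ \emph{before} optimizing in $\lambda$, whereas the paper optimizes pointwise and then pulls the expectation inside $\psi_\mp^{*-1}$ via Jensen's inequality; both orderings yield the same bound.
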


Our proof is inspired by the reasoning used in the classical case, for instance in \cite{Xu2017, raginsky2019information, bu2020tightening}, but differs from it in three non-trivial ways.
First, one central ingredient in the classical argument, namely the Donsker-Varadhan representation of the classical relative entropy, has to be replaced by its quantum counterpart, \Cref{lemma:variational-quantum-relative-entropy}. 
Second, to deal with potential complications about matrix exponentials arising from non-commutativity, we rely on the Golden-Thompson inequality.
Finally, while there is only one decoupling step in the classical proof, our scenario requires both a classical and a quantum decoupling. Thus, our analysis uses an additional decomposition of the expected generalization error compared to the classical case.
    
\begin{proof}
    When combined with the Golden-Thompson inequality \cite[see, e.g.,][Section IX.3]{bhatia1997matrix}, which tells us that $\tr[e^{A+B}]\leq\tr[e^A e^B]$ for Hermitian matrices $A$ and $B$, \Cref{lemma:variational-quantum-relative-entropy} implies, for every $(s,w)\in\mathsf{Z}^m\times\mathsf{W}$ and for all $\lambda\in\mathbb{R}$,
    \begin{align}
        &D(\sigma^{\alg}(s,w)\|\rho_{\mathrm{test}}(s) \otimes \sigma^{\alg}_{\mathrm{hyp}}(s,w))\\
        &\geq \lambda \tr[L(s,w) \sigma^{\alg}(s,w)] - \log\tr\left[\exp\left(\log(\rho_{\mathrm{test}}(s) \otimes \sigma^{\alg}_{\mathrm{hyp}}(s,w))+\lambda L(s,w)\right)\right]\\
        &\geq \lambda \tr[L(s,w) \sigma^{\alg}(s,w)] - \log \tr\left[(\rho_{\mathrm{test}}(s) \otimes \sigma^{\alg}_{\mathrm{hyp}}(s,w)) \exp\left(\lambda L(s,w)\right)\right]\\
        \begin{split}
            &= \lambda\left(\tr[L(s,w) \sigma^{\alg}(s,w)] - \tr[L(s,w) (\rho_{\mathrm{test}}(s) \otimes \sigma^{\alg}_{\mathrm{hyp}}(s,w))]\right)\\
            &\hphantom{=}~ - \log \tr\left[(\rho_{\mathrm{test}}(s) \otimes \sigma^{\alg}_{\mathrm{hyp}}(s,w)) e^{\lambda \left( L(s,w) - \tr[L(s,w) (\rho_{\mathrm{test}}(s) \otimes \sigma^{\alg}_{\mathrm{hyp}}(s,w))]\mathbbm{1}_{\mathrm{test},\mathrm{hyp}}\right)}\right]
        \end{split}
        \\
        &\geq \lambda\left(\tr[L(s,w) \sigma^{\alg}(s,w)] - \tr[L(s,w) (\rho_{\mathrm{test}}(s) \otimes \sigma^{\alg}_{\mathrm{hyp}}(s,w))]\right) - \begin{cases} \psi_+ (\lambda)\quad \textrm{if } \lambda \geq 0 \\ \psi_- (\lambda)\quad \textrm{if } \lambda <0 \end{cases} .
    \end{align}
    Here, the first step uses \Cref{lemma:variational-quantum-relative-entropy}, the second is due to the Golden-Thompson inequality, the third step is a simple rewriting, and the final step consists in plugging in \Cref{eq:quantum-moment-generating-function-bound-assumption}.
    
    We can now rearrange this inequality and optimize over $\lambda$ to obtain:
    \small
    \begin{align}
        \tr[L(s,w) \sigma^{\alg}(s,w)] - \tr[L(s,w) (\rho_{\mathrm{test}}(s) \otimes \sigma^{\alg}_{\mathrm{hyp}}(s,w))]
        &\leq \inf_{\lambda >0} \frac{D(\sigma^{\alg}(s,w)\|\rho_{\mathrm{test}}(s) \otimes \sigma^{\alg}_{\mathrm{hyp}}(s,w)) + \psi_+ (\lambda)}{\lambda},\\
        -\left(\tr[L(s,w) \sigma^{\alg}(s,w)] - \tr[L(s,w) (\rho_{\mathrm{test}}(s) \otimes \sigma^{\alg}_{\mathrm{hyp}}(s,w))]\right)
        &\leq \inf_{\lambda <0} \frac{D(\sigma^{\alg}(s,w)\|\rho_{\mathrm{test}}(s) \otimes \sigma^{\alg}_{\mathrm{hyp}}(s,w)) + \psi_- (\lambda)}{\lambda}.
    \end{align}
    \normalsize
    Using {\cite[Lemma 2.4]{boucheron2013concentration}}, we can rewrite the infima in terms of the generalized inverses $\psi_\pm^{\ast -1} (s) = \inf\{t\geq 0~|~ \psi_\pm^{\ast}(t) > s\}$ of the Fenchel-Legendre duals of $\psi_\pm$ to obtain
    \small
    \begin{align}
        \tr[L(s,w) \sigma^{\alg}(s,w)] - \tr[L(s,w) (\rho_{\mathrm{test}}(s) \otimes \sigma^{\alg}_{\mathrm{hyp}}(s,w))]
        &\leq \psi_+^{\ast -1}(D(\sigma^{\alg}(s,w)\|\rho_{\mathrm{test}}(s) \otimes \sigma^{\alg}_{\mathrm{hyp}}(s,w))),\label{eq:intermediate-fenchel-legendre-bound-plus-v1}\\
        -\left(\tr[L(s,w) \sigma^{\alg}(s,w)] - \tr[L(s,w) (\rho_{\mathrm{test}}(s) \otimes \sigma^{\alg}_{\mathrm{hyp}}(s,w))]\right)
        &\leq \psi_-^{\ast -1}(D(\sigma^{\alg}(s,w)\|\rho_{\mathrm{test}}(s) \otimes \sigma^{\alg}_{\mathrm{hyp}}(s,w))).\label{eq:intermediate-fenchel-legendre-bound-minus-v1}
    \end{align}
    \normalsize    
    Next, we rewrite the expression of interest as
    \small
    \begin{align}
        &\pm \gen_\rho (\alg)\\
        &= \pm \E_{W\sim P^{\alg}_{\mathsf{W}}}\left[\E_{\bar{S}\sim P^m}\left[ \tr[L(\bar{S},W) \left(\rho_{\mathrm{test}}(\bar{S}) \otimes \sigma^{\alg}_{\mathrm{hyp}}(\bar{S},W)\right)] \right] - \E_{S\sim P^{\alg}_{\mathrm{data}}|W}\left[\tr[L(S,W) \sigma^{\alg}(S,W)] \right]\right]\\
        &= \E_{W\sim P^{\alg}_{\mathsf{W}}}\E_{S\sim P^{\alg}_{\mathrm{data}}|W}\left[ \pm\left(\tr[L(S,W) \left(\rho_{\mathrm{test}}(S) \otimes \sigma^{\alg}_{\mathrm{hyp}}(S,W)\right)] -  \tr[L(S,W) \sigma^{\alg}(S,W)] \right)\right]\\
        \begin{split}
            &\hphantom{=}~ + \E_{W\sim P^{\alg}_{\mathsf{W}}}\Bigg[\pm\Bigg(\E_{\bar{S}\sim  P^m}\left[\tr[L(\bar{S},W) \left(\rho_{\mathrm{test}}(\bar{S}) \otimes \sigma^{\alg}_{\mathrm{hyp}}(\bar{S},W)\right)]\right] \\
            &\hphantom{=+ \E_{W\sim P^{\alg}_{\mathsf{W}}}\Bigg[\pm\Bigg(}~~ - \E_{S\sim P^{\alg}_{\mathrm{data}}|W}\left[\tr[L(S,W) \left(\rho_{\mathrm{test}}(S) \otimes \sigma^{\alg}_{\mathrm{hyp}}(S,W)\right)]\right]\Bigg) \Bigg] 
        \end{split}
    \end{align}
    \normalsize
    For the first summand, we can use \Cref{eq:intermediate-fenchel-legendre-bound-plus-v1,eq:intermediate-fenchel-legendre-bound-minus-v1} to obtain:
    \begin{align}
        &\E_{W\sim P^{\alg}_{\mathsf{W}}}\E_{S\sim P^{\alg}_{\mathrm{data}}|W}\left[ \pm\left(\tr[L(S,W) \left(\rho_{\mathrm{test}}(S) \otimes \sigma^{\alg}_{\mathrm{hyp}}(S,W)\right)] -  \tr[L(S,W) \sigma^{\alg}(S,W)] \right)\right]\\
        &\leq \E_{W\sim P^{\alg}_{\mathsf{W}}}\E_{S\sim P^{\alg}_{\mathrm{data}}|W}\left[ \psi_\mp^{\ast -1}\left(D(\sigma^{\alg}(S,W)\|\rho_{\mathrm{test}}(S) \otimes \sigma^{\alg}_{\mathrm{hyp}}(S,W))\right)\right] .
    \end{align}
    For the second summand, thanks to \Cref{eq:classical-moment-generating-function-bound-assumption}, we can apply \cite[Theorem 2]{jiao2017dependence} or \cite[Theorem 1]{bu2020tightening} (see also \cite[p.~22]{raginsky2019information} for a pedagogical presentation) to the classical random variable $\tr[L(S,W) \left(\rho_{\mathrm{test}}(S) \otimes \sigma^{\alg}_{\mathrm{hyp}}(S,W)\right) ]$ and obtain:
    \begin{align}
        \begin{split}
            &\E_{W\sim P^{\alg}_{\mathsf{W}}}\Bigg[\pm\Bigg(\E_{\bar{S}\sim  P^m}\left[\tr[L(\bar{S},W) \left(\rho_{\mathrm{test}}(\bar{S}) \otimes \sigma^{\alg}_{\mathrm{hyp}}(\bar{S},W)\right)]\right] \\
            &\hphantom{=+ \E_{W\sim P^{\alg}_{\mathsf{W}}}\Bigg[\pm\Bigg(}~~ - \E_{S\sim P^{\alg}_{\mathrm{data}}|W}\left[\tr[L(S,W)\left(\rho_{\mathrm{test}}(S) \otimes \sigma^{\alg}_{\mathrm{hyp}}(S,W)\right)]\right]\Bigg) \Bigg] 
        \end{split}
        \\
        &\leq \phi_\mp^{\ast -1}(I(S;W)) \, .
    \end{align}
    Thus, we have shown the inequalities
    \begin{equation}
        \pm \gen_\rho (\alg)
        \leq \E_{(S,W)\sim P^{\alg}}\left[ \psi_\mp^{\ast -1}\left(D(\sigma^{\alg}(S,W)\|\rho_{\mathrm{test}}(S) \otimes \sigma^{\alg}_{\mathrm{hyp}}(S,W))\right)\right] + \phi_\mp^{\ast -1}\left(I(S; W)\right) \, .
    \end{equation}
    As the $\psi_\mp^{\ast -1}$ are concave (since $\psi_\mp^\ast$ are convex), we can pull the expectation value inside the $\psi_\mp^{\ast -1}$ without making the right-hand side smaller, by Jensen's inequality. 
    Then, it remains to observe that 
    \begin{align}
        &\E_{(S,W)\sim P^{\mathcal{A}}}[D(\sigma^\alg(S,W)\|\rho_{\mathrm{test}}(S)\otimes \sigma^\alg_{\mathrm{hyp}}(s,W))]\\
        &= \E_{(S,W)\sim P^{\mathcal{A}}}\left[-H(\sigma^\alg(S,W)) + H(\sigma^\alg_{\mathrm{hyp}}(S,W)) - \tr\left[\sigma^\alg_{\mathrm{test}}(S,W)\log\left(\rho_{\mathrm{test}}(S)\right)\right]\right]\\
        &= \E_{(S,W)\sim P^{\mathcal{A}}}\left[I(\mathrm{test};\mathrm{hyp})_{\sigma^\alg(S,W)} - H(\sigma^\alg_{\mathrm{test}}(S,W)) - \tr\left[\sigma^\alg_{\mathrm{test}}(S,W)\log\left(\rho_{\mathrm{test}}(S)\right)\right]\right]\\
        &= \E_{(S,W)\sim P^{\mathcal{A}}}\left[I(\mathrm{test};\mathrm{hyp})_{\sigma^\alg(S,W)} - H(\rho^\alg_{\mathrm{test}}(S,W)) - \tr\left[\rho^\alg_{\mathrm{test}}(S,W)\log\left(\rho_{\mathrm{test}}(S)\right)\right]\right]\\
        &= \E_{(S,W)\sim P^{\mathcal{A}}}\left[I(\mathrm{test};\mathrm{hyp})_{\sigma(S,W)}\right] - \E_{(S,W)\sim P^{\mathcal{A}}}\left[H(\rho^\alg_{\mathrm{test}}(S,W))\right] + \E_{S\sim P^m}\left[H(\rho_{\mathrm{test}}(S))\right]\\
        &= \E_{(S,W)\sim P^{\mathcal{A}}}\left[I(\mathrm{test};\mathrm{hyp})_{\sigma(S,W)}\right] +    \E_{S\sim P^m}\left[\chi\left(\left\{P^\alg_{\mathsf{W}|S}(w),\rho^\alg_{\mathrm{test}}(S,w)\right\}_{w\in\mathsf{W}}\right)\right]\, .
    \end{align}
    Here, the third equality used that $\sigma^\alg_{\mathrm{test}}(s,w)=\rho^\alg_{\mathrm{test}}(s,w)$, because $\sigma^\alg(s,w)$ and $\rho^\alg(s,w)$ differ only by a CPTP map applied on the $\mathrm{train}$ subsystem. The fourth and fifth equalities used that $\E_{W\sim P^{\alg}_{\mathsf{W}|S}}[\rho^\alg_{\mathrm{test}}(S,W)]=\rho_{\mathrm{test}}(S)$. This holds because the state $\E_{W\sim P^{\alg}_{\mathsf{W}|S}}[\rho^\alg(S,W)]$ is obtained from $\rho(S)$ by applying the CPTP map $\operatorname{id}_{\mathrm{test}}\otimes \left(\sum_{w} \sqrt{E_S^\alg(w)} (\cdot )\sqrt{E_S^\alg(w)}\right)$, which acts non-trivially only on the training data register and thus leaves the test data marginal invariant.
    The fifth step also used \Cref{eq:holevo-information-alternate-expression}.
    Thus, after using Jensen to pull the expectation value inside $\psi_\mp^{\ast -1}$ and then rewriting the expected relative entropy as above, we have completed the proof.
\end{proof}

\begin{remark}
    Our framework and \Cref{theorem:qmi-gen-bound-qmgf-and-cmgf} also encompass cases where classical and quantum side information can be generated during the learning process. 
    If the risks and sub-gaussianity assumptions depend only on the data and hypothesis but not on the side information random variables and quantum registers, then we recover \Cref{eq:qmi-gen-bound-qmgf-and-cmgf}.
    That is, despite having more objects to take into account, the final bound remains the same and in particular only depends on the data and the hypothesis, not on additional side information. 
\end{remark}

\begin{remark}\label{remark:proof-modifications}
    Having presented the proof of \Cref{theorem:qmi-gen-bound-qmgf-and-cmgf}, we comment on some modifications.
    On the one hand, if we change the assumed \Cref{eq:quantum-moment-generating-function-bound-assumption} by allowing for $(s,w)$-dependent functions $\psi_{\pm;s,w}$, we can follow the same proof strategy. The obtained expected generalization error bound will differ from \Cref{eq:qmi-gen-bound-qmgf-and-cmgf} only in the first term on the r.h.s., which gets replaced by $\E_{(S,W)\sim P^{\alg}}\left[ \psi_{\mp;S,W}^{\ast -1}\left(D(\sigma^{\alg}(S,W)\|\rho_{\mathrm{test}}(S) \otimes \sigma^{\alg}_{\mathrm{hyp}}(S,W))\right)\right]$.

    On the other hand, if we change \Cref{eq:quantum-moment-generating-function-bound-assumption} to the (by Golden-Thompson weaker) assumption that
    \begin{equation}\label{eq:weakened-qmgf-assumption}
        \tr\left[e^{ \log (\rho_{\mathrm{test}}(s) \otimes \sigma^{\alg}_{\mathrm{hyp}}(s,w))  + \lambda \left( L(s,w) - \tr[L(s,w) (\rho_{\mathrm{test}}(s) \otimes \sigma^{\alg}_{\mathrm{hyp}}(s,w))]\mathbbm{1}_{\mathrm{test},\mathrm{hyp}}\right)}\right]\\
        \leq \begin{cases} \psi_+ (\lambda)\quad \textrm{if } \lambda \geq 0 \\ \psi_- (\lambda)\quad \textrm{if } \lambda <0 \end{cases},
    \end{equation}
    we can still recover \Cref{eq:qmi-gen-bound-qmgf-and-cmgf}. This can be seen by noticing that the second step in the proof of \Cref{theorem:qmi-gen-bound-qmgf-and-cmgf} was exactly to apply Golden-Thompson.

    Finally, \Cref{theorem:qmi-gen-bound-qmgf-and-cmgf} and its proof simplify in different scenarios, for instance for learners that produce either only a classical or only a quantum hypothesis. Concretely, if $\mathsf{W}$ is trivial, then we obtain a variant of \Cref{eq:qmi-gen-bound-qmgf-and-cmgf} without the Holevo information term and without the second summand on the r.h.s. In this case, the assumption \Cref{eq:classical-moment-generating-function-bound-assumption} is not needed.
    Furthermore, if $\mathcal{H}_{\mathrm{hyp}}$ is trivial, then we obtain a variant of \Cref{eq:qmi-gen-bound-qmgf-and-cmgf} without the first summand on the right-hand side. In this case, the assumption \Cref{eq:quantum-moment-generating-function-bound-assumption} is not needed.
    Similarly, if $\mathsf{Z}$ is trivial, the second summand vanishes, whereas if $\mathcal{H}_{\mathrm{data}}$ is trivial, the first summand vanishes, so that we recover \cite[Lemma 1]{Xu2017}.
    Moreover, if $ \sigma^{\alg}(s,w)= \sigma^{\alg}_{\mathrm{test}}(s,w) \otimes \sigma^{\alg}_{\mathrm{hyp}}(s,w)$ is already a tensor product state -- for example if each $\rho^{\alg} (s,w)$ factorizes or if each $E_s^\alg (w)$ is a pure state projector (so that monogamy of entanglements forbids the pure post-measurement state on the training system from being correlated or entangled with the test system) --, then we get a variant of \Cref{eq:qmi-gen-bound-qmgf-and-cmgf} without the QMI term.
    Finally, if $\rho(s)=\rho_{\mathrm{test}}(s)\otimes\rho_{\mathrm{train}}(s)$ factorizes, then both the QMI and the Holevo information contribution vanish and the assumption \Cref{eq:quantum-moment-generating-function-bound-assumption} is not needed.
\end{remark}

\begin{remark}\label{remark:measured-relative-entropy}
    As a consequence of \cite[Lemma 1 and Theorem 2]{berta2017variational} -- who applied Golden-Thompson in \cite[Proposition 5]{berta2017variational} similarly to our use in the proof of \Cref{theorem:qmi-gen-bound-qmgf-and-cmgf} --, we have in fact established an expected generalization error bound in terms of measured quantum information quantities.
    Namely, relying on \cite{berta2017variational}, we can tighten the initial inequality in our proof to 
    \begin{align}
        &D^{\mathbb{M}}(\sigma^{\alg}(s,w)\|\rho_{\mathrm{test}}(s) \otimes \sigma^{\alg}_{\mathrm{hyp}}(s,w))\\
        &\geq \lambda \tr[L(s,w) \sigma^{\alg}(s,w)] - \log \tr\left[(\rho_{\mathrm{test}}(s) \otimes \sigma^{\alg}_{\mathrm{hyp}}(s,w)) \exp\left(\lambda L(s,w)\right)\right] \, ,
    \end{align}
    where $D^{\mathbb{M}}(\rho\|\sigma)$ denotes the measured relative entropy. The quantum relative entropy $D(\rho\|\sigma)$ upper bounds  $D^{\mathbb{M}}(\rho\|\sigma)$, but there can be a gap between these two quantities.
\end{remark}

\begin{example}[\Cref{example:recovering-classical-risks} continued]
    The loss observables $L(s,w)=\ell(s,w)\cdot\mathbbm{1}_{\mathrm{test},\mathrm{hyp}}$ considered in \Cref{example:recovering-classical-risks} trivially satisfy \Cref{eq:quantum-moment-generating-function-bound-assumption} even for $\psi_\pm$ given by the $0$-function. With this choice, $\psi_\pm^\ast (t) = +\infty$ for all $t$ and $\psi_\pm^{\ast -1} (s) = 0$ for all $s$, so the first term in our bound vanishes. 
    Thus, \Cref{theorem:qmi-gen-bound-qmgf-and-cmgf} reproduces \cite[Lemma 1]{Xu2017} in this special case.
\end{example}

\Cref{theorem:qmi-gen-bound-qmgf-and-cmgf} takes a particularly simple and appealing form if the assumptions on the moment-generating functions are sub-gaussianity assumptions. Before stating the corresponding result, we recall the notions of sub-gaussianity in the cases of observables and random variables: 

\begin{definition}[Sub-gaussianity for observables]\label{definition:subgaussian-observables}
    Let $\alpha>0$. A self-adjoint loss observable $L\in\cB(\cH)$ is called $\alpha$-sub-gaussian with respect to~a quantum state $\sigma\in\mathcal{S}(\cH)$ if 
    \begin{equation}
        \log \tr\left[ \sigma \cdot e^{\lambda\left(L-\tr[L \sigma]\mathds{1}\right)}  \right]
        \leq \frac{\alpha^2 \lambda^2}{2} \label{eq:quantum-moment-generating-function-subgaussianity-assumption}
    \end{equation}
    holds for all $\alpha\in\mathbb{R}$.
\end{definition}

\begin{example}\label{example:quantum-concentration-inequalities}
    Quantum concentration inequalities recently received considerable attention in the literature. Prominent examples of classes of states for which bounds on the MGF are known include the following:
    \begin{enumerate}
        \item Local observables w.r.t.~high-temperature Gibbs states~\cite{kuwahara2020gaussian} and, more generally, Lipschitz observables w.r.t.~high temperature commuting Gibbs states~\cite{depalma2022quantumconcentration,capel2020modified} or $1D$-commuting Gibbs states~\cite{Bardet2021}, are known to satisfy sub-gaussianity with $\alpha=\mathcal{O}(1)$.
        \item Local observables w.r.t.~outcomes of shallow circuits also satisfy sub-gaussianity with $\alpha=\mathcal{O}(1)$~\cite{anshu2022concentration}.
        \item Lipschitz observables w.r.t.~tensor product states, up to a weakening à la Golden-Thompson analogously to \Cref{eq:weakened-qmgf-assumption}, satisfy sub-gaussianity with $\alpha=\mathcal{O}(1)$ \cite[Theorem 8.1]{depalma2023wassersteinspinsystems}.
        \item More generally, \cite{anshu2016concentration} proved concentration bounds for local observables w.r.t.~states with finite correlation length by bounding the MGF. However, they are weaker than sub-gaussian concentration and depend on the dimension of the underlying lattice.
    \end{enumerate}
\end{example}

\begin{definition}[Sub-gaussianity for random variables~{\cite[Section 2.5]{vershynin2018high-dimensional}}]\label{definition:subgaussian-random-variables}
    Let $\alpha>0$. 
    A real-valued random variable $X$ is $\alpha$-sub-gaussian if 
    \begin{equation}
        \log\E\left[e^{\lambda (X - \E[X])}\right]
        \leq \frac{\alpha^2 \lambda^2}{2}
    \end{equation}
    holds for all $\alpha\in\mathbb{R}$.
\end{definition}

\begin{example}
    Trivially, a gaussian random variable with variance $\beta^2$ is $\beta$-sub-gaussian.
    By Hoeffding's Lemma \cite{hoeffding1963probability}, any random variable that almost surely takes values in a bounded interval $[a,b]$ is $(\tfrac{b-a}{2})$-sub-gaussian.
    Finally, any $L$-Lipschitz function of a Haar-random variable on the unit sphere in $\mathbb{R}^n$ is $(\tfrac{CL}{\sqrt{n}})$-sub-gaussian for a suitable $C>0$ (see, e.g., \cite[Chapter 5]{vershynin2018high-dimensional}).
\end{example}

With these definitions, we can now compactly state the sub-gaussian versions of \Cref{theorem:qmi-gen-bound-qmgf-and-cmgf}:

\begin{corollary}\label{corollary:qmi-gen-bound-qmgf-and-cmgf-subgaussian}
    Let $\alpha,\beta >0$.
    Assume that the loss observable $L(s,w)$ is $\alpha$-sub-gaussian w.r.t.~$\rho_{\mathrm{test}}(s) \otimes \sigma^{\alg}_{\mathrm{hyp}}(s,w)$ for every $(s,w)\in\mathsf{Z}^m\times\mathsf{W}$. 
    Moreover, assume that the random variable $\tr[L(S,w)(\rho_{\mathrm{test}}(S) \otimes \sigma^{\alg}_{\mathrm{hyp}}(S,w))]$, with $S\sim P^m$, is $\beta$-sub-gaussian for every $w\in\mathsf{W}$. 
    Then,
    \begin{equation}        
        \begin{split}
            \lvert \gen_\rho (\alg)\rvert
            &\leq \sqrt{2\alpha^2 \left(\E_{(S,W)\sim P^{\mathcal{A}}}[I(\mathrm{test};\mathrm{hyp})_{\sigma(S,W)}] + \E_{S\sim P^m}\left[ \chi\left(\{P^{\mathcal{A}}_{\mathsf{W}|S}(w), \rho^{\alg}_{\mathrm{test}}(S,w)\}_{w\in\mathsf{W}}\right)\right]\right)}\\
            &\hphantom{\leq}~~+ \sqrt{2 \beta^2 I(S;W)} \, .
        \end{split}
    \end{equation}
\end{corollary}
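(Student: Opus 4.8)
The plan is to specialize \Cref{theorem:qmi-gen-bound-qmgf-and-cmgf} to the quadratic case. First I would observe that the $\alpha$-sub-gaussianity of $L(s,w)$ with respect to $\rho_{\mathrm{test}}(s)\otimes\sigma^{\alg}_{\mathrm{hyp}}(s,w)$ in the sense of \Cref{definition:subgaussian-observables} is literally the hypothesis \eqref{eq:quantum-moment-generating-function-bound-assumption} with $\psi_+(\lambda)=\psi_-(\lambda)=\tfrac{\alpha^2\lambda^2}{2}$, and that this function is convex, differentiable at $0$, and satisfies $\psi_\pm(0)=\psi'_\pm(0)=0$. Likewise, the $\beta$-sub-gaussianity of the random variable $\tr[L(S,w)(\rho_{\mathrm{test}}(S)\otimes\sigma^{\alg}_{\mathrm{hyp}}(S,w))]$, $S\sim P^m$, in the sense of \Cref{definition:subgaussian-random-variables} is exactly \eqref{eq:classical-moment-generating-function-bound-assumption} with $\phi_+(\lambda)=\phi_-(\lambda)=\tfrac{\beta^2\lambda^2}{2}$, which again is convex, differentiable at $0$, and vanishes together with its derivative at $0$. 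Hence the hypotheses of \Cref{theorem:qmi-gen-bound-qmgf-and-cmgf} are met, and it produces the bound \eqref{eq:qmi-gen-bound-qmgf-and-cmgf}; it only remains to evaluate the two inverse Legendre transforms occurring there for these explicit quadratic choices.

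Next I would compute the Fenchel-Legendre duals: for $\psi(\lambda)=\tfrac{\alpha^2\lambda^2}{2}$ a one-line optimization over $\lambda$ (optimum at $\lambda=t/\alpha^2$) gives $\psi^\ast(t)=\tfrac{t^2}{2\alpha^2}$, and similarly $\phi^\ast(t)=\tfrac{t^2}{2\beta^2}$. Since $t\mapsto\tfrac{t^2}{2\alpha^2}$ is a strictly increasing bijection of $[0,\infty)$ onto $[0,\infty)$, the generalized inverse $\psi^{\ast -1}(s)=\inf\{t\geq 0\,:\,\psi^\ast(t)>s\}$ used in the proof of \Cref{theorem:qmi-gen-bound-qmgf-and-cmgf} agrees on $[0,\infty)$ with the ordinary inverse, so $\psi^{\ast -1}(s)=\sqrt{2\alpha^2 s}$ and $\phi^{\ast -1}(s)=\sqrt{2\beta^2 s}$ for $s\geq 0$; as $\psi_+=\psi_-$ and $\phi_+=\phi_-$, the choice of sign in $\psi_\mp$, $\phi_\mp$ is immaterial. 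Nonnegativity of the quantum mutual information, the Holevo information, and the classical mutual information guarantees that the inverses in \eqref{eq:qmi-gen-bound-qmgf-and-cmgf} are always evaluated on $[0,\infty)$, so these closed-form expressions indeed apply.

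Substituting into \eqref{eq:qmi-gen-bound-qmgf-and-cmgf} then yields, for both signs simultaneously,
\begin{equation*}
    \pm\gen_\rho(\alg)\leq\sqrt{2\alpha^2\left(\E_{(S,W)\sim P^{\mathcal{A}}}[I(\mathrm{test};\mathrm{hyp})_{\sigma(S,W)}]+\E_{S\sim P^m}\left[\chi\left(\{P^{\mathcal{A}}_{\mathsf{W}|S}(w),\rho^{\alg}_{\mathrm{test}}(S,w)\}_{w\in\mathsf{W}}\right)\right]\right)}+\sqrt{2\beta^2 I(S;W)}\,,
\end{equation*}
and taking the larger of the two sides gives the asserted bound on $\lvert\gen_\rho(\alg)\rvert$. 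I do not anticipate any genuine obstacle: the whole argument is a direct specialization, and the only point deserving a sentence of care is the identification of the generalized inverse $\psi^{\ast -1}$ with the elementary square-root formula, which rests on strict monotonicity of the quadratic dual on the nonnegative half-line together with the nonnegativity of the information-theoretic quantities being inverted.
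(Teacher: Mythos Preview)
Your proposal is correct and follows exactly the paper's approach: apply \Cref{theorem:qmi-gen-bound-qmgf-and-cmgf} with $\psi_\pm(\lambda)=\tfrac{\alpha^2\lambda^2}{2}$ and $\phi_\pm(\lambda)=\tfrac{\beta^2\lambda^2}{2}$, then compute $\psi_\pm^{\ast -1}(\xi)=\sqrt{2\alpha^2\xi}$ and $\phi_\pm^{\ast -1}(\xi)=\sqrt{2\beta^2\xi}$. You have simply written out in more detail the verification of the hypotheses and the computation of the inverse Legendre transforms that the paper compresses into a single sentence.
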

\begin{proof}
    This follows from~\Cref{theorem:qmi-gen-bound-qmgf-and-cmgf} with the log-MGF bounds $\psi_\pm:\mathbb{R}\to\mathbb{R}$, $\psi_\pm(x)=\tfrac{\alpha^2 x^2}{2}$ and $\phi_\pm:\mathbb{R}\to\mathbb{R}$, $\phi_\pm(x)=\tfrac{\beta^2 x^2}{2}$. 
    This leads to $\psi_\pm^{\ast -1}(\xi) =\sqrt{2\alpha^2\xi}$ and $\phi_\pm^{\ast -1}(\xi) =\sqrt{2\beta^2\xi}$.
\end{proof}

So far, our generalization error bounds do not explicitly depend on the training data size $m$.
To achieve such a dependence, we now impose an i.i.d.~structure on the quantum data, in addition to the already assumed (but not yet fully exploited) i.i.d.~structure on the classical training data $S\sim P^m$.
Namely, we assume that the data Hilbert space and states factorize as
\begin{align}
    \cH_{\mathrm{data}}
    &=\cH_{\mathrm{test}}\otimes \cH_{\mathrm{train}} = \bigotimes_{i=1}^m (\cH_{\mathrm{test},i}\otimes \cH_{\mathrm{train},i}) = \bigotimes_{i=1}^m \cH_{\mathrm{data},i}\, ,\\
    \rho(s) 
    &= \rho(z_1,\ldots,z_m) = \bigotimes_{i=1}^m \rho_i (z_i)\, , \quad\textrm{with } \rho_i(z_i)\in\mathcal{S}(\cH_{\mathrm{test},i}\otimes \cH_{\mathrm{train},i})\, .
\end{align}
For our next result, we consider learners and loss observables that adhere to this factorization. 
On the one hand, we assume that the POVMs and channels used by the learner $\alg$ factorize as $E_{s}^\alg(w) = E_{z_1,\ldots,z_m}^\alg(w) = \bigotimes_{i=1}^m E_{z_i}^\alg(w)$ and $\Lambda_{s,w}^\alg = \Lambda_{z_1,\ldots,z_m,w}^\alg = \bigotimes_{i=1}^m \Lambda_{z_i, w}^\alg$ with $E_{z_i}^\alg(w)\in\mathcal{E}(\cH_{\mathrm{train},i})$ and $\Lambda_{z_i, w}^\alg:\mathcal{T}_{1}(\cH_{\mathrm{train},i})\to \mathcal{T}_{1}(\cH_{\mathrm{hyp},i})$. 
Note that this in particular comes with factorizations $\cH_{\mathrm{hyp}} = \bigotimes_{i=1}^m \cH_{\mathrm{hyp},i}$ of the hypothesis Hilbert space and $\sigma^{\alg}(s,w) = \bigotimes_{i=1}^m \sigma^{\alg}_i(z_i, w)$ of the state after the action of $\alg$, with $\sigma^{\alg}_i(z_i, w)\in\mathcal{S}(\cH_{\mathrm{test},i}\otimes \cH_{\mathrm{hyp},i})$.
On the other hand, we assume the loss observables to be of the local form $L(s,w)=\tfrac{1}{m}\sum_{i=1}^m L_i(z_i,w)$, with $L_i(z_i,w)\in\mathcal{B}(\cH_{\mathrm{test},i}\otimes \cH_{\mathrm{hyp},i})$ acting only on the $i$th test and hypothesis subsystems. (For readability, we notationally suppress identities on the remaining subsystems when convenient.)
In this setting, \Cref{corollary:qmi-gen-bound-qmgf-and-cmgf-subgaussian} gives the following result:

\begin{corollary}\label{corollary:qmi-gen-bound-qmgf-and-cmgf-subgaussian-independent}
    Assume the above factorization for the quantum data and the learner $\alg$ as well as the above local structure of the loss observables.
    Moreover, assume that $L_i(z_i, w)$ is $\alpha_i$-sub-gaussian w.r.t.~$\rho_{\mathrm{test},i}(z_i) \otimes \sigma^{\alg}_{\mathrm{hyp},i}(z_i,w)$ for every $(z_i,w)\in\mathsf{Z}\times \mathsf{W}$ and $1\leq i\leq m$, and that the random variable $\tr\left[L_i(Z_i, w) ( \rho_{\mathrm{test},i}(Z_i) \otimes \sigma^{\alg}_{\mathrm{hyp},i}(Z_i,w))\right]$, with $Z_i\sim P$, is $\beta_i$-sub-gaussian for every $w\in\mathsf{W}$ and $1\leq i\leq m$.
    Then,
    \small
    \begin{equation}
        \begin{split}
            \lvert \gen_\rho (\alg)\rvert
            &\leq  \sqrt{\frac{2 \sum_{i=1}^m \alpha_i^2}{m^2}\left( \E_{(S,W)\sim P^{\alg}}\left[ \sum_{i=1}^m I(\mathrm{test};\mathrm{hyp})_{\sigma^{\alg}_i(Z_i,W)}\right]+ \E_{S\sim P^m}\left[ \chi\left(\{P^{\mathcal{A}}_{\mathsf{W}|S}(w), \rho^{\alg}_{\mathrm{test}} (S,w)\}_{w\in\mathsf{W}}\right)\right]\right)} \\
            &\hphantom{\leq}~~+ \sqrt{\frac{2 \sum_{i=1}^m \beta_i^2}{m^2} I(S;W)}\,  .
        \end{split}
    \end{equation}
    \normalsize
    In particular, if $\alpha_i = \alpha_0$ and $\beta_i=\beta_0$ for all $1\leq i\leq m$, then 
    \small
    \begin{equation}
        \begin{split}
            \lvert \gen_\rho (\alg)\rvert
            &\leq  \sqrt{\frac{2\alpha_0^2}{m} \left(\E_{(S,W)\sim P^{\alg}}\left[ \sum_{i=1}^m I(\mathrm{test};\mathrm{hyp})_{\sigma^{\alg}_i(Z_i,W)}\right] + \E_{S\sim P^m}\left[ \chi\left(\{P^{\mathcal{A}}_{\mathsf{W}|S}(w), \rho^{\alg}_{\mathrm{test}} (S,w)\}_{w\in\mathsf{W}}\right)\right]\right)} \\
            &\hphantom{\leq}~~+ \sqrt{\frac{2 \beta_0^2}{m} I(S;W)}\, .
        \end{split}
    \end{equation}
    \normalsize
\end{corollary}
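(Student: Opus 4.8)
The plan is to derive this corollary directly from \Cref{corollary:qmi-gen-bound-qmgf-and-cmgf-subgaussian} by computing the effective sub-gaussianity parameters $\alpha,\beta$ of the \emph{global} loss observable and loss expectation value out of the local ones, and by re-expressing the global quantum mutual information as a sum of local contributions.

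First I would unpack how the assumed factorizations propagate through the learner. From $\rho(s)=\bigotimes_{i=1}^m\rho_i(z_i)$, $E_s^\alg(w)=\bigotimes_i E_{z_i}^\alg(w)$, and $\Lambda_{s,w}^\alg=\bigotimes_i\Lambda_{z_i,w}^\alg$ one gets $\sigma^\alg(s,w)=\bigotimes_i\sigma_i^\alg(z_i,w)$ and $\tau^\alg(s,w)=\rho_{\mathrm{test}}(s)\otimes\sigma^\alg_{\mathrm{hyp}}(s,w)=\bigotimes_i\tau_i^\alg(z_i,w)$ with $\tau_i^\alg(z_i,w)=\rho_{\mathrm{test},i}(z_i)\otimes\sigma^\alg_{\mathrm{hyp},i}(z_i,w)$, where, after regrouping the factors, the $i$-th tensor factor lives precisely on $\cH_{\mathrm{test},i}\otimes\cH_{\mathrm{hyp},i}$, the subsystem on which $L_i(z_i,w)$ acts. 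Since $\tr[L(s,w)\tau^\alg(s,w)]=\tfrac1m\sum_i\tr[L_i(z_i,w)\tau_i^\alg(z_i,w)]$, the centered observable $L(s,w)-\tr[L(s,w)\tau^\alg(s,w)]\mathbbm{1}$ equals $\tfrac1m\sum_i\bigl(L_i-\tr[L_i\tau_i^\alg]\mathbbm{1}_i\bigr)$, a sum of operators supported on distinct tensor factors and hence mutually commuting; therefore $\exp$ of this sum factorizes as the tensor product of the single-factor exponentials, and $\log\tr[\tau^\alg(s,w)e^{\lambda(\cdots)}]=\sum_i\log\tr[\tau_i^\alg(z_i,w)e^{(\lambda/m)(L_i-\tr[L_i\tau_i^\alg]\mathbbm{1}_i)}]\le\sum_i\alpha_i^2\lambda^2/(2m^2)$ using the local $\alpha_i$-sub-gaussianity. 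Thus \Cref{eq:quantum-moment-generating-function-bound-assumption} holds, for every $(s,w)$, with $\psi_\pm(\lambda)=\tfrac{\alpha^2\lambda^2}{2}$ for $\alpha=\sqrt{\sum_i\alpha_i^2}/m$. On the classical side, because $S=(Z_1,\dots,Z_m)\sim P^m$ is i.i.d., $\tr[L(S,w)\tau^\alg(S,w)]=\tfrac1m\sum_i\tr[L_i(Z_i,w)\tau_i^\alg(Z_i,w)]$ is an average of $m$ independent $\beta_i$-sub-gaussian random variables, so by the standard additivity of log-MGFs under independence it satisfies \Cref{eq:classical-moment-generating-function-bound-assumption}, for every $w$, with $\phi_\pm(\lambda)=\tfrac{\beta^2\lambda^2}{2}$ for $\beta=\sqrt{\sum_i\beta_i^2}/m$.

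Next I would handle the information term: since $\sigma^\alg(s,w)=\bigotimes_i\sigma_i^\alg(z_i,w)$ is, after regrouping, a tensor product of states that are bipartite across the $\mathrm{test}_i$--$\mathrm{hyp}_i$ cut, additivity of the quantum mutual information under tensor products gives $I(\mathrm{test};\mathrm{hyp})_{\sigma^\alg(s,w)}=\sum_{i=1}^m I(\mathrm{test}_i;\mathrm{hyp}_i)_{\sigma_i^\alg(z_i,w)}$, and the same identity survives taking $\E_{(S,W)\sim P^\alg}$. Feeding $\alpha=\sqrt{\sum_i\alpha_i^2}/m$, $\beta=\sqrt{\sum_i\beta_i^2}/m$, and this rewriting into \Cref{corollary:qmi-gen-bound-qmgf-and-cmgf-subgaussian} produces the first displayed inequality; the ``in particular'' case is just the substitution $\alpha_i\equiv\alpha_0$, $\beta_i\equiv\beta_0$, giving $\sum_i\alpha_i^2=m\alpha_0^2$ and $\sum_i\beta_i^2=m\beta_0^2$. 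I expect no serious obstacle: the only step with genuine quantum content is the factorization of the quantum log-MGF across tensor factors, and this is immediate because the centered local loss operators commute, so there is no Golden--Thompson-type correction to worry about; everything else is bookkeeping --- matching the $\tfrac1m$ prefactor with a $\lambda\mapsto\lambda/m$ rescaling, and invoking textbook additivity of classical sub-gaussianity and of quantum mutual information.
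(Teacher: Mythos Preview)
Your proposal is correct and follows essentially the same route as the paper's proof: compute the effective global sub-gaussianity parameters $\alpha=m^{-1}\sqrt{\sum_i\alpha_i^2}$ and $\beta=m^{-1}\sqrt{\sum_i\beta_i^2}$ by factorizing the quantum and classical log-MGFs across the tensor product (using commutativity of the local centered losses on the quantum side and independence of the $Z_i$ on the classical side), then invoke \Cref{corollary:qmi-gen-bound-qmgf-and-cmgf-subgaussian} and rewrite $I(\mathrm{test};\mathrm{hyp})_{\sigma^\alg(s,w)}=\sum_i I(\mathrm{test}_i;\mathrm{hyp}_i)_{\sigma_i^\alg(z_i,w)}$ by additivity. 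The paper's appendix proof is line-for-line the same argument.
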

\begin{proof}
    See \Cref{appendix:proofs}.
\end{proof}

We point out that the factorization assumption on the POVM 
elements $E_s^{\alg}(w)$ is not needed if $\alg$ produces only a classical hypothesis. In this case, the $\mathrm{hyp}$ quantum system is trivial. Thus, $\rho_{\mathrm{test}}(s)\otimes\sigma^\alg_{\mathrm{hyp}}(s,w)=\rho_{\mathrm{test}}(s)=\bigotimes_{i=1}^m \rho_{\mathrm{test},i}(z_i)$ factorizes by assumption, which is sufficient for the proof of \Cref{corollary:qmi-gen-bound-qmgf-and-cmgf-subgaussian-independent}. Even in this setting, the Holevo information term in
the bound is an in general non-trivial quantum contribution.

If, however, the learner produces a non-trivial quantum hypothesis, our current proof strategy does rely on the factorization assumption. Notice, however, that \Cref{example:quantum-concentration-inequalities} already contains QMGF bounds w.r.t.~non-product states. Thus, insights into CMGF bounds w.r.t.~non-product states may allow future work to improve upon our proof of Corollary 24, extending it to more general (non-product) learners.

Let us return to our continuing example of quantum state classification and see the implications of our generalization bounds in that setting. 

\begin{example}[Quantum state classification -- \Cref{example:quantum-state-classification-framework,example:quantum-state-classification-risks} continued]\label{example:quantum-state-classification-gen-bound}
    As the learner $\alg$ in our quantum state classification example produces only a classical hypothesis and as the initial quantum data states $\rho(s)$ factorize across the test-train bipartition, it suffices to verify a suitable classical sub-gaussianity assumption. 
    Observe that, for every $(s,w)\in\{0,1\}^m\times\mathsf{W}$, the state
    \begin{equation}
        \rho_{\mathrm{test}}(s)
        = \E_{\{(\sigma_0^{(i)},\sigma_1^{(i)})\}_{i=1}^m\sim P_{\mathrm{pair}}^m}\left[\bigotimes_{i=1}^m \sigma_{z_i}^{(i)}\right]
        = \bigotimes_{i=1}^m \E_{\{(\sigma_0^{(i)},\sigma_1^{(i)})\}_{i=1}^m\sim P_{\mathrm{pair}}^m}\left[\sigma_{z_i}^{(i)}\right]
        = \bigotimes_{i=1}^m\rho_{\mathrm{test},i}(z_i)
    \end{equation}
    is an $m$-fold tensor product.
    Moreover, the loss observables defined in \Cref{example:quantum-state-classification-risks} are local w.r.t.~this tensor factorization.
    So, to apply \Cref{corollary:qmi-gen-bound-qmgf-and-cmgf-subgaussian-independent}, we consider the sub-gaussianity parameter $\beta_0$ of the random variable $\tr[((1-Z_i)(\mathds{1}_d-F(w)) + Z_i F(w)) \rho_{\mathrm{test}}(Z_i)]$, with $Z_i\sim P$.
    Without any prior assumptions on the distribution $P$ and on the mapping $z\mapsto \rho(z)$, the random variable of interest takes values in $[0,1]$ because $0\leq F(w), 1-F(w)\leq 1$.
    Thus, Hoeffding's Lemma \cite{hoeffding1963probability} implies $\beta_0\leq\nicefrac{1}{2}$ for every $w\in\mathsf{W}$.
    Therefore, \Cref{corollary:qmi-gen-bound-qmgf-and-cmgf-subgaussian-independent} yields
    \begin{equation}\label{eq:gen-bound-quantum-state-classification}
        \lvert R_\rho (\alg) - \hat{R}_\rho (\alg)\rvert
        \leq \sqrt{\frac{1}{2m} I(S;W)} \, .
    \end{equation}
    If $\mathsf{W}$ is finite, then we immediately have the mutual information upper bound $I(S;W) \leq\log\lvert\mathsf{W}\rvert$. Hence, our above bound implies that we can guarantee a small expected generalization error as soon as the training data size $m$ is of the same order as the number of bits needed to describe the classical hypotheses.
    If $\mathsf{W}$ is infinite, we may first discretize and then apply the bound. 
    Concretely, if $\varepsilon>0$ and if $\mathsf{W}_\varepsilon\subseteq\mathsf{W}$ is an $\varepsilon$-covering net for $\mathsf{W}$ w.r.t.~the sup-norm, then $\lvert R_\rho (\alg) - \hat{R}_\rho (\alg)\rvert \leq \varepsilon + \sqrt{\frac{1}{2m} \log\lvert \mathsf{W}_\varepsilon\rvert}$. 
    If there are no prior assumptions on the admissible effect operators $\{F(w)\}_{w\in\mathsf{W}}$, then we cannot expect better bounds on the cardinality of an $\varepsilon$-covering net for $\mathsf{W}$ than $\log\lvert \mathsf{W}_\varepsilon\rvert \leq \tilde{\mathcal{O}}\left(\min\{\nicefrac{d}{\varepsilon^2}, d^2 \log(\nicefrac{1}{\varepsilon})\}\right)$ \cite[Section 4]{cheng2016learnability}\footnote{Here, the $\tilde{\mathcal{O}}$ hides non-leading logarithmic factors.}.
    In the case of $n$ qubits, we have $d=2^n$ and the resulting bound scales exponentially with $n$. This can be improved if $\{F(w)\}_{w\in\mathsf{W}}$ is limited. 
    For example, if $F(w)$ is a sum of $k$-local Pauli terms for every $w\in\mathsf{W}$, where $k=\mathcal{O}(1)$, then, since there are at most $\mathcal{O}(n^k)$ such terms, one can obtain an improved covering number bound of $\log\lvert \mathsf{W}_\varepsilon\rvert \leq \tilde{\mathcal{O}} ( n^k \log(\nicefrac{1}{\varepsilon}))$, which scales polynomially in $n$. This can be improved further if the locality assumption is strengthened to geometric locality.
    Note that these bounds on $I(S;W)$ are worst-case and we expect tighter algorithm-dependent bounds to be possible when taking the POVMs $\{E_s^\alg (w)\}_{w\in\mathsf{W}}$ chosen by the learner into account. This concludes the discussion of our state classification example.
\end{example}

As it concerns a special case with only a classical hypothesis, \Cref{eq:gen-bound-quantum-state-classification} can already be deduced from the classical generalization bounds of \cite{Xu2017}.
In the next section, we demonstrate the applicability of our general framework and our generalization error bounds for a variety of quantum learning problems, including scenarios that cannot be studied with the purely classical framework. 
Before this discussion, we conclude this section with an extension of \Cref{corollary:qmi-gen-bound-qmgf-and-cmgf-subgaussian-independent} to stable learners that use channels leading to a controlled increase of Lipschitz constants:

\begin{corollary}\label{corollary:lipschitz-gen-bound}
    Assume the above factorization for the quantum data and the POVMs used by the learner as well as the above local structure for the loss observables.
    Furthermore, assume that the Heisenberg picture duals $(\Lambda_{s,w}^\alg)^\ast$ of the channels $\Lambda_{s,w}^\alg$ used by $\alg$ satisfy $\|(\Lambda_{s,w}^\alg)^\ast\|_{\mathrm{Lip}\to\mathrm{Lip}}\leq C_{1}$ as well as $\max_{s\sim s', w}\lVert (\Lambda_{s,w}^\alg - \Lambda_{s',w}^\alg)^\ast \rVert_{\mathrm{Lip}\to \infty}\leq C_2$, where $s\sim s'$ denotes neighboring training data sets (i.e., training data sets that differ only in a single data point).
    Then,
    \small
    \begin{equation}
        \begin{split}
            &\lvert \gen_\rho (\alg)\rvert\\
            &\leq \frac{2\sqrt{2} \displaystyle\max_{i,z_i,w}\| L_i(z_i,w)\|}{\sqrt{m}}\Bigg(\sqrt{C_1\left(\E_{(S,W)\sim P^{\alg}}\left[ I(\mathrm{test};\mathrm{hyp})_{\sigma^{\alg}(S,W)}\right]+ \E_{S\sim P^m}\left[ \chi\left(\{P^{\mathcal{A}}_{\mathsf{W}|S}(w), \rho^{\alg}_{\mathrm{test}}(S,w)\}_{w\in\mathsf{W}}\right)\right]\right)}\\
            &\hphantom{\frac{2\sqrt{2} \displaystyle\max_{i,z_i,w}\| L_i(z_i,w)\|}{\sqrt{m}}\Bigg(}~~~+ \sqrt{(1 + C_1 (1+C_2)) I(S;W)} \Bigg)\, .
        \end{split}
    \end{equation}
    \normalsize
\end{corollary}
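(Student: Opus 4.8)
The plan is to derive \Cref{corollary:lipschitz-gen-bound} from \Cref{theorem:qmi-gen-bound-qmgf-and-cmgf} --- more precisely, from its sub-gaussian form \Cref{corollary:qmi-gen-bound-qmgf-and-cmgf-subgaussian} --- by exhibiting suitable sub-gaussianity parameters $\alpha,\beta$. Two structural observations drive everything. First, since the POVMs factorize, the post-measurement state $\rho^{\alg}(s,w)=\bigotimes_{i=1}^m\rho^{\alg}_i(z_i,w)$ is a tensor product even though the channels $\Lambda^{\alg}_{s,w}$ are not required to; hence $\tau^{\alg}(s,w)=\rho_{\mathrm{test}}(s)\otimes\sigma^{\alg}_{\mathrm{hyp}}(s,w)=(\id_{\mathrm{test}}\otimes\Lambda^{\alg}_{s,w})\bigl(\omega(s,w)\bigr)$ with $\omega(s,w)=\bigotimes_{i=1}^m\bigl(\rho_{\mathrm{test},i}(z_i)\otimes\rho^{\alg}_{\mathrm{train},i}(z_i,w)\bigr)$ a genuine $m$-fold tensor product. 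Second, because the local losses $L_i$ act on pairwise disjoint sites, $\|L(s,w)\|_{\mathrm{Lip}}=\tfrac1m\max_i\|L_i(z_i,w)\|_{\mathrm{Lip}}\le\tfrac2m\max_{i,z_i,w}\|L_i(z_i,w)\|$; it is this disjointness that keeps the Lipschitz constant of the averaged loss of order $1/m$ rather than order $1$, and it must not be discarded prematurely. Note that we deliberately keep $\tau^{\alg}$ as it is rather than absorbing $\Lambda^{\alg}_{s,w}$ into the loss via \Cref{remark:channel-absorbed-in-loss}: the latter would spread each $L_i$ over all sites and would replace the tight quantity $I(\mathrm{test};\mathrm{hyp})_{\sigma^{\alg}}$ appearing in the claim by the larger $I(\mathrm{test};\mathrm{train})_{\rho^{\alg}}$.

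For the quantum condition \eqref{eq:quantum-moment-generating-function-bound-assumption} I would show that $L(s,w)$ is $\alpha$-sub-gaussian with respect to $\tau^{\alg}(s,w)$ with $\alpha^2=\mathcal{O}\!\bigl(C_1\,\max_{i,z_i,w}\|L_i(z_i,w)\|^2/m\bigr)$. By the Golden--Thompson-weakened version of \eqref{eq:quantum-moment-generating-function-bound-assumption} noted in \Cref{remark:proof-modifications}, it suffices to control $\tr\bigl[\exp(\log\tau^{\alg}(s,w)+\lambda L(s,w))\bigr]$, and since $\tau^{\alg}(s,w)$ is a local channel applied to the tensor product $\omega(s,w)$, this should follow from a channel-adjusted version of the sub-gaussianity of Lipschitz observables with respect to tensor products (\cite[Theorem 8.1]{depalma2023wassersteinspinsystems}, stated here as \Cref{lemma:lipschitz-observables-subgaussian} and recalled in item $3$ of \Cref{example:quantum-concentration-inequalities}): running de~Palma's recursive estimate for the reference state $\omega(s,w)$, the only new ingredient is that each step passes the relevant Lipschitz ``test operator'' through $(\Lambda^{\alg}_{s,w})^{\ast}$, which inflates its Lipschitz constant by at most $C_1$ thanks to $\|(\Lambda^{\alg}_{s,w})^{\ast}\|_{\mathrm{Lip}\to\mathrm{Lip}}\le C_1$ (together with tensorisation of Lipschitz constants under $\id_{\mathrm{test}}\otimes(\cdot)$). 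This yields $\alpha^2=\mathcal{O}\bigl(C_1\cdot m\|L(s,w)\|_{\mathrm{Lip}}^2\bigr)=\mathcal{O}(C_1\max_i\|L_i\|^2/m)$ by the second observation above, which is exactly the coefficient in the first summand of the claimed bound.

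For the classical condition \eqref{eq:classical-moment-generating-function-bound-assumption} I would bound the variation of the real random variable $S\mapsto\tr[L(S,w)\,\tau^{\alg}(S,w)]$ under a single-coordinate change $z_i\to z_i'$ and apply the bounded-differences inequality \cite{mcdiarmid1989onthemethod}. The variation splits into three pieces, each of size $\mathcal{O}(\mathrm{poly}(C_1,C_2)\,\max_i\|L_i\|/m)$: (i) the change of the local term $\tfrac1m L_i$ itself, bounded by $\tfrac2m\|L_i\|$; (ii) the change of the pre-channel state $\omega(\cdot,w)$, which affects only its $i$-th tensor factor, so the induced perturbation of $\tau^{\alg}$ has quantum Wasserstein-$1$ norm $\mathcal{O}(C_1)$ (using $\|\Lambda^{\alg}_{s,w}\|_{W_1\to W_1}=\|(\Lambda^{\alg}_{s,w})^{\ast}\|_{\mathrm{Lip}\to\mathrm{Lip}}\le C_1$ by duality) and hence, paired against $L(\cdot,w)$ via $\|L(\cdot,w)\|_{\mathrm{Lip}}$, contributes $\mathcal{O}(C_1\max_i\|L_i\|/m)$; and (iii) the change of the channel, $\Lambda^{\alg}_{s,w}\to\Lambda^{\alg}_{s',w}$, controlled in the Heisenberg picture by the replace-one stability assumption $\|(\Lambda^{\alg}_{s,w}-\Lambda^{\alg}_{s',w})^{\ast}\|_{\mathrm{Lip}\to\infty}\le C_2$ (combined with $C_1$ for the interchange between the two pictures). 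Summing the $m$ squared differences and taking a square root gives $\beta^2=\mathcal{O}\bigl((1+C_1(1+C_2))\max_i\|L_i\|^2/m\bigr)$, matching the second summand; feeding these $\alpha,\beta$ into \Cref{corollary:qmi-gen-bound-qmgf-and-cmgf-subgaussian} then yields the statement, the QMI and Holevo terms being untouched since $\tau^{\alg}$ is treated directly.

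The main obstacle is step two: \cite[Theorem 8.1]{depalma2023wassersteinspinsystems} is stated only for tensor-product reference states, whereas $\tau^{\alg}(s,w)$ is merely a local channel applied to one. Pushing that channel onto the loss observable before estimating would wreck the disjoint-support structure of the $L_i$ and reinstate an $\Omega(1)$ Lipschitz constant, destroying the $1/\sqrt m$ decay; so one genuinely has to re-run de~Palma's argument with the channel in place and track how its $\mathrm{Lip}\to\mathrm{Lip}$ norm enters, correctly interfacing this with the Golden--Thompson step and with the tensorisation lemma for $\id_{\mathrm{test}}\otimes(\cdot)$. By comparison, the bounded-differences bookkeeping in step three and the final substitution into \Cref{corollary:qmi-gen-bound-qmgf-and-cmgf-subgaussian} are routine.
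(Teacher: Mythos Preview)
Your central claim --- that absorbing the channel into the loss would ``reinstate an $\Omega(1)$ Lipschitz constant'' and thereby destroy the $1/\sqrt{m}$ decay --- is the key misconception. The hypothesis $\|(\Lambda_{s,w}^\alg)^\ast\|_{\mathrm{Lip}\to\mathrm{Lip}}\le C_1$ says exactly that
\[
\bigl\|(\Lambda_{s,w}^\alg)^\ast\bigl(L(s,w)\bigr)\bigr\|_{\mathrm{Lip}}\le C_1\|L(s,w)\|_{\mathrm{Lip}}\le \frac{2C_1}{m}\max_{i,z_i,w}\|L_i(z_i,w)\|,
\]
so the absorbed observable retains an $O(1/m)$ Lipschitz constant even though it is no longer local. \Cref{lemma:lipschitz-observables-subgaussian} asks only for a Lipschitz bound and a tensor-product reference state; both hold after absorption, since $\rho_{\mathrm{test}}(s)\otimes\rho^{\alg}_{\mathrm{train}}(s,w)=\bigotimes_i\bigl(\rho_{\mathrm{test},i}(z_i)\otimes\rho^{\alg}_{\mathrm{train},i}(z_i,w)\bigr)$ factorizes by the POVM assumption. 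So the (Golden--Thompson-weakened) quantum sub-gaussianity follows directly --- there is no need to ``re-run de~Palma's recursive estimate with the channel in place,'' which you yourself flag as the main obstacle and do not carry out.

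This is precisely the paper's route: invoke \Cref{remark:channel-absorbed-in-loss}, bound $\|(\Lambda_{s,w}^\alg)^\ast L\|_{\mathrm{Lip}}$ via the $C_1$-assumption, apply \Cref{lemma:lipschitz-observables-subgaussian} to the tensor product $\rho_{\mathrm{test}}\otimes\rho^{\alg}_{\mathrm{train}}$, then run the three-piece bounded-differences argument for the classical part (your sketch of that part is essentially the paper's) and plug into \Cref{corollary:qmi-gen-bound-qmgf-and-cmgf-subgaussian}. Your secondary objection --- that absorption replaces $I(\mathrm{test};\mathrm{hyp})_{\sigma^{\alg}}$ by the larger $I(\mathrm{test};\mathrm{train})_{\rho^{\alg}}$ --- is correctly observed: the absorption route naturally yields the latter quantity, and data processing goes the wrong way to recover the former. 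The paper's proof nonetheless proceeds via absorption, so this is a genuine subtlety in matching the constants and the precise QMI term in the stated bound; but it does not justify abandoning the absorption trick, which is what turns the quantum sub-gaussianity step from your ``main obstacle'' into a one-line application of \Cref{lemma:lipschitz-observables-subgaussian}.
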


In the assumed bound $\|(\Lambda_{s,w}^\alg)^\ast\|_{\mathrm{Lip}\to\mathrm{Lip}}\leq C_{1}$, the Lipschitz constants considered are w.r.t.~the factorizations $\mathcal{H}_{\mathrm{test}}\otimes \mathcal{H}_{\mathrm{hyp}} = \bigotimes_{i=1}^m (\mathcal{H}_{\mathrm{test},i}\otimes \mathcal{H}_{\mathrm{hyp},i})$ and $\mathcal{H}_{\mathrm{test}}\otimes \mathcal{H}_{\mathrm{train}} = \bigotimes_{i=1}^m (\mathcal{H}_{\mathrm{test},i}\otimes \mathcal{H}_{\mathrm{train},i})$.
Similarly, the Lipschitz constants relevant for the stability assumption $\max_{s\sim s', w}\lVert (\Lambda_{s,w}^\alg - \Lambda_{s',w}^\alg)^\ast\rVert_{\mathrm{Lip}\to \infty}\leq C_2$ are w.r.t.~$\mathcal{H}_{\mathrm{test}}\otimes \mathcal{H}_{\mathrm{hyp}} = \bigotimes_{i=1}^m (\mathcal{H}_{\mathrm{test},i}\otimes \mathcal{H}_{\mathrm{hyp},i})$.
Again, the POVM factorization assumption is not needed if the learner only produces a classical hypothesis.

\begin{proof}
    Recall from \Cref{remark:channel-absorbed-in-loss} that we obtain the same notions of risk when absorbing the channels $\Lambda_{s,w}^\alg$ into the loss observables via the Heisenberg picture. Thus, instead of proving sub-gaussianity of $L(s,w)$ w.r.t.~$\rho_{\mathrm{test}}(s) \otimes \sigma^{\alg}_{\mathrm{hyp}}(s,w)$, we can also establish sub-gaussianity of $(\Lambda_{s,w}^\alg)^\ast (L(s,w))$ w.r.t.~$\rho_{\mathrm{test}}(s) \otimes \rho^\alg_{\mathrm{train}}(s,w)$. We do this in the first part of the proof.
    As $\|(\Lambda_{s,w}^\alg)^\ast\|_{\mathrm{Lip}\to\mathrm{Lip}}\leq C_1$, we have 
    \begin{equation}
        \|(\Lambda_{s,w}^\alg)^\ast (L(s,w))\|_{\mathrm{Lip}}
        \leq C_1\| L(s,w)\|_{\mathrm{Lip}}
        \leq \frac{2C_1 \max_{i,z_i,w}\| L_i(z_i,w)\|}{m}\, ,
    \end{equation}
    where the last step used \cite[Proposition 8]{depalma2021quantumwasserstein}.
    Therefore, according to \cite[Theorem 8.1]{depalma2023wassersteinspinsystems}, which we restate as \Cref{lemma:lipschitz-observables-subgaussian}, the observable $(\Lambda_{s,w}^\alg)^\ast (L(s,w))$ satisfies a version of $(m^{-1/2}\cdot 2 C_1 \max_{i,z_i,w}\| L_i(z_i,w)\|)$-sub-gaussianity w.r.t.~the $m$-fold tensor product $\bigotimes_{i=1}^m \rho_{\mathrm{test},i} (z_i)\otimes\rho^\alg_{\mathrm{train},i} (z_i,w)$ weakened analogously to \Cref{eq:weakened-qmgf-assumption}. As argued in \Cref{remark:proof-modifications}, this weaker version is a sufficient quantum sub-gaussianity for our purposes.

    Next, we establish a suitable classical sub-gaussianity. To this end, take two training data sets $s=(z_1,\ldots,z_m),s'=(z'_1,\ldots,z'_m)\in\mathsf{Z}^m$ that differ in exactly one data point, i.e., $\exists 1\leq i\leq m$ such that $z_i\neq z'_i$ and $z_j=z'_j$ for all $j\neq i$. For this relation, we use the shorthand $s\sim s'$.
    Then, because of our assumed factorization of the quantum data states and of the POVMs used by the learner, the post-measurement states $\rho_{\mathrm{test}}(s) \otimes \rho^\alg_{\mathrm{train}}(s,w)$ and $\rho_{\mathrm{test}}(s) \otimes \rho_{\mathrm{train}}(s',w)$ agree after tracing out the $i$th subsystem, i.e., $\tr_{\mathrm{test},i;\mathrm{hyp},i}[\rho_{\mathrm{test}}(s) \otimes \rho^\alg_{\mathrm{train}}(s,w)] = \tr_{\mathrm{test},i;\mathrm{hyp},i}[\rho_{\mathrm{test}}(s') \otimes \rho^\alg_{\mathrm{train}}(s',w)]$ for all $w\in\mathsf{W}$.
    Hence, by definition of the quantum Lipschitz constant (compare \cite[Definition 8]{depalma2021quantumwasserstein}), we obtain the bound
    \begingroup
    \allowdisplaybreaks
    \small
    \begin{align}
        &\left\lvert \tr[L(s,w) \left(\rho_{\mathrm{test}}(s) \otimes \sigma^{\alg}_{\mathrm{hyp}}(s,w)\right)] - \tr[L(s',w) \left(\rho_{\mathrm{test}}(s') \otimes \sigma^{\alg}_{\mathrm{hyp}}(s',w)\right)] \right\rvert\\
        &= \left\lvert  \tr[(\Lambda_{s,w}^\alg)^\ast (L(s,w)) \left(\rho_{\mathrm{test}}(s) \otimes \rho^{\alg}_{\mathrm{train}}(s,w)\right)] - \tr[(\Lambda_{s',w}^\alg)^\ast (L(s',w)) \left(\rho_{\mathrm{test}}(s') \otimes \rho^{\alg}_{\mathrm{train}}(s',w)\right)]\right\rvert\\
        &\leq \left\lvert  \tr[(\Lambda_{s,w}^\alg)^\ast (L(s,w)) \left(\rho_{\mathrm{test}}(s) \otimes \rho^{\alg}_{\mathrm{train}}(s,w)\right)] - \tr[(\Lambda_{s,w}^\alg)^\ast (L(s',w)) \left(\rho_{\mathrm{test}}(s) \otimes \rho^{\alg}_{\mathrm{train}}(s,w)\right)]\right\rvert\\
        &\hphantom{\leq }~~ + \left\lvert \tr[(\Lambda_{s,w}^\alg)^\ast (L(s',w)) \left(\rho_{\mathrm{test}}(s) \otimes \rho^{\alg}_{\mathrm{train}}(s,w)\right)] - \tr[(\Lambda_{s',w}^\alg)^\ast (L(s',w)) \left(\rho_{\mathrm{test}}(s) \otimes \rho^{\alg}_{\mathrm{train}}(s,w)\right)]\right\rvert\\
        &\hphantom{\leq }~~ + \left\lvert \tr[(\Lambda_{s',w}^\alg)^\ast (L(s',w)) \left(\rho_{\mathrm{test}}(s) \otimes \rho^{\alg}_{\mathrm{train}}(s,w)\right)] - \tr[(\Lambda_{s',w}^\alg)^\ast (L(s',w)) \left(\rho_{\mathrm{test}}(s') \otimes \rho^{\alg}_{\mathrm{train}}(s',w)\right)]\right\rvert\\
        &\leq \left\lvert \tr[(L(s,w) - L(s',w))\left(\rho_{\mathrm{test}}(s) \otimes \sigma^{\alg}_{\mathrm{hyp}}(s,w)\right) ] \right\rvert\\
        &\hphantom{\leq }~~ +\left\lvert \tr[(\Lambda_{s,w}^\alg - \Lambda_{s',w}^\alg)^\ast (L(s',w)) \left(\rho_{\mathrm{test}}(s) \otimes \rho^{\alg}_{\mathrm{train}}(s,w)\right)] \right\rvert\\
        &\hphantom{\leq }~~ +  \frac{2C_1 \max_{i,z_i,w}\| L_i(z_i,w)\|}{m}\\
        &\leq \lVert L(s,w) - L(s',w)\rVert\cdot\lVert\rho_{\mathrm{test}}(s) \otimes \sigma^{\alg}_{\mathrm{hyp}}(s,w)\rVert_1\\
        &\hphantom{\leq }~~ +\lVert (\Lambda_{s,w}^\alg - \Lambda_{s',w}^\alg)^\ast (L(s',w))\rVert\cdot \lVert\rho_{\mathrm{test}}(s) \otimes \rho^{\alg}_{\mathrm{train}}(s,w)\rVert_1\\
        &\hphantom{\leq }~~ +  \frac{2C_1 \max_{i,z_i,w}\| L_i(z_i,w)\|}{m}\\
        &\leq \frac{2 \max_{i,z_i,w}\| L_i(z_i,w)\|}{m} + \frac{2C_1 \max_{i,z_i,w}\| L_i(z_i,w)\|}{m}\cdot \max_{s\sim s', w}\lVert (\Lambda_{s,w}^\alg - \Lambda_{s',w}^\alg)^\ast\rVert_{\mathrm{Lip}\to \infty}\\
        &\hphantom{\leq }~~ +  \frac{2C_1 \max_{i,z_i,w}\| L_i(z_i,w)\|}{m}\\
        &= \frac{2 \max_{i,z_i,w}\| L_i(z_i,w)\|}{m}\left(1 + C_1 \left(1 + \max_{s\sim s', w}\lVert (\Lambda_{s,w}^\alg - \Lambda_{s',w}^\alg)^\ast\rVert_{\mathrm{Lip}\to \infty}\right)\right)\\
        &\leq \frac{2 \max_{i,z_i,w}\| L_i(z_i,w)\|}{m}\left(1 + C_1 \left(1 + C_2\right)\right)\, .
    \end{align}
    \normalsize
    \endgroup
    Therefore, the random variable $\tr[L(S,w) \left(\rho_{\mathrm{test}}(S) \otimes \sigma^{\alg}_{\mathrm{hyp}}(S,w)\right)]$, with $S\sim P^m$, is sub-gaussian with sub-gaussianity parameter $\left(m^{-1/2} \cdot 2\max_{i,z_i,w}\| L_i(z_i,w)\|\left(1 + C_1 \left(1 + C_2\right)\right)\right)$, by McDiarmid's bounded differences inequality \cite{mcdiarmid1989onthemethod}.
    
    We can now apply \Cref{corollary:qmi-gen-bound-qmgf-and-cmgf-subgaussian} with the classical and quantum sub-gaussianity parameters established above and obtain the claimed generalization bound.
\end{proof}

A short discussion of the assumptions made on the channels $\Lambda^\alg_{s,w}$ is in order.
On the one hand, we assume that their Heisenberg duals $(\Lambda^\alg_{s,w})^\ast$ lead to a bounded increase in quantum Lipschitz constants, namely that $\|(\Lambda_{s,w}^\alg)^\ast\|_{\mathrm{Lip}\to\mathrm{Lip}}\leq C_{1}$. 
Equivalently, the maps $\Lambda^\alg_{s,w}$ should lead to a limited increase of quantum Wasserstein-$1$ norms, that is, $\|\Lambda_{s,w}^\alg\|_{W_1\to W_1}\leq C_{1}$.
This is satisfied for approximately locality-preserving channels such as constant-depth circuits or short-time evolutions under a local Lindblad generator~\cite{depalma2023wassersteinspinsystems,PRXQuantum.4.010309}(with associated Lieb-Robinson bound).
Also, as the proof of \cite[Theorem 8.1]{depalma2023wassersteinspinsystems} shows, this property is satisfied with $C_1 = 1$ for $m$-fold tensor products of single-qudit channels.
Moreover, for channels described by quantum circuits with local depolarizing noise, we obtain a $C_1$ that decays exponentially with the circuit depth for large enough noise strength compared to the size of the light-cone of each layer (compare the proof of \cite[Proposition IV.8.]{hirche2023quantum}).

On the other hand, we assume that $\lVert (\Lambda_{s,w}^\alg - \Lambda_{s',w}^\alg)^\ast \rVert_{\mathrm{Lip}\to \infty}\leq C_2$ for any neighboring data sets $s\sim s'$. Note that, as a consequence of \cite[Proposition 9]{depalma2021quantumwasserstein} we can rewrite this as $\lVert (\Lambda_{s,w}^\alg - \Lambda_{s',w}^\alg)^\ast \rVert_{\mathrm{Lip}\to \infty} = \lVert \Lambda_{s,w}^\alg - \Lambda_{s',w}^\alg \rVert_{1\to W_1}\leq C_2$.
This is a stability assumption: When the two classical data sets $s,s'$ differ in only a single data point, the quantum channels $\Lambda_{s,w}^\alg$ and $\Lambda_{s',w}^\alg$ employed by the learner $\alg$ must not differ too much.
It is reminiscent of classical replace-one stability \cite{bousquet2000algorithmic-stability, bousquet2002stability, shalev-shwartz2010stability}.
Using \cite[Corollary 2]{depalma2021quantumwasserstein}, we see that this quantum stability assumption is for example satisfied if, for every $s\sim s'$ and for every $w$, we can write $\Lambda_{s,w}^\alg - \Lambda_{s',w}^\alg = (\mathcal{N}_{s,w} - \mathcal{N}_{s',w}) \mathcal{M}_w $ with $\mathcal{M}_w$ an arbitrary CPTP map and with CPTP maps $\mathcal{N}_{s,w}, \mathcal{N}_{s',w}$ that act non-trivially only on a constant number of training data subsystems.
As this is in particular satisfied for learners that factorize, we can indeed view \Cref{corollary:lipschitz-gen-bound} as an extension of \Cref{corollary:qmi-gen-bound-qmgf-and-cmgf-subgaussian-independent}.

One strength of the results presented in this section is that they encompass a variety of learning tasks. However, when applied to a specific scenario, they do not necessarily lead to optimal bounds. For instance, our bounds in \Cref{corollary:qmi-gen-bound-qmgf-and-cmgf-subgaussian-independent,corollary:lipschitz-gen-bound} have a ``slow rate'' of $\nicefrac{1}{\sqrt{m}}$, which is to be contrasted with the ``fast rate'' of $\nicefrac{1}{m}$ recently achieved by, among others, \cite{hellstrom2021fast, grunwald2021pac-bayes, wang2023tighter} for classical information-theoretic generalization bounds and by \cite{mai2017pseudo} in the context of PAC-Bayesian quantum state tomography w.r.t.~squared Frobenius norm.
We leave proving improved quantum information-theoretic generalization bounds with fast rates to future work.

\section{Applications}\label{sec:applications}
\numberwithin{equation}{subsection}

\subsection{PAC learning quantum states}\label{section:pac-learning-quantum-states}

For our first application, we consider a setting of PAC learning quantum states, going back to \cite{Aaronson2007}. Here, the goal is to predict expectation values w.r.t.~an unknown state on average over an unknown distribution over effect operators.
Take the data Hilbert space
\begin{equation}
    \cH_{\mathrm{data}}
    = \cH_{\mathrm{test}}\otimes \cH_{\mathrm{train}}
    =((\mathbb{C}^d)^{\otimes m_{\mathrm{test}}})^{\otimes m}\otimes ((\mathbb{C}^d)^{\otimes m_{\mathrm{train}}})^{\otimes m}
\end{equation}
for some $d\in\mathbb{N}$ and $m, m_{\mathrm{test}}, m_{\mathrm{train}}\in\mathbb{N}$.  
Let the quantum data state $\rho$ be the CQ state given by
\begin{equation}
    \rho
    = \E_{S=(Z_1,\ldots,Z_{2m}) \sim P^{2m}}\left[ \left( \bigotimes_{i=1}^{2m} \ketbra{Z_i}{Z_i}\right) \otimes (\rho_0^{\otimes m_{\mathrm{test}}})^{\otimes m} \otimes (\rho_0^{\otimes m_{\mathrm{train}}})^{\otimes m}\right] ,
\end{equation}
where $\rho_0\in\mathcal{S}(\mathbb{C}^d)$ is the unknown qudit state to be PAC-learned, we imagine that each $z\in\mathsf{Z}$ comes with an associated qudit effect operator $E(z)\in\mathcal{E}(\mathbb{C}^d)$, and $P$ is an unknown probability distribution over $\mathsf{Z}$.
That is, the CQ data consists of independent copies of an unknown state that we are trying to learn, as well as of (classical descriptions of) random two-outcome POVM measurements drawn i.i.d.~from $P$.

We describe a simple quantum learner $\alg$ for this scenario as follows: Take $\cH_{\mathrm{hyp}}$ to be trivial, and take $\mathsf{W}$ to be some measurable hypothesis space. 
Here, we imagine each classical hypothesis $w\in\mathsf{W}$ to be associated to some hypothesis state $\rho_0 (w)\in\mathcal{S}(\mathbb{C}^d)$ that the learner could output.
Upon seeing the classical data $s=(z_1,\ldots,z_{2m})\in\mathsf{Z}^{2m}$, the learner performs a two-step procedure:
Let $\tilde{\varepsilon}>0$ be an auxiliary accuracy parameter, which we determine later.
First, the learner takes $\mathsf{W}_1 \subset \mathsf{W}$ to be a $\tilde{\varepsilon}$-covering of the hypothesis space $\mathsf{W}$ w.r.t.~the empirical seminorm $\Norm{\cdot}_{2, \{z_j\}_{j=1}^m}$ defined as 
\begin{equation}
    \Norm{w}_{2, \{z_j\}_{j=1}^m} 
    = \sqrt{\frac{1}{m}\sum_{j=1}^m \lvert \tr[E(z_j) \rho_0 (w)]\rvert^2}\, .
\end{equation}
Second, for each $m+1\leq i\leq 2m$, the learner measures the $2$-outcome POVM $\{E(z_i), \mathbbm{1}-E(z_i)\}$ separately on $m_{\mathrm{train}}$ copies of $\rho_0$, obtaining outcomes $b_\ell^{(i)}, 1\leq \ell\leq m_{\mathrm{train}}$, and then uses the empirical average $\tilde{b}^{(i)}:=\tfrac{1}{m_{\mathrm{train}}}\sum_{\ell =1}^{m_{\mathrm{train}}} b_\ell^{(i)}$ as an estimate of $\tr[ E(z_i) \rho_0]$. The quantum learner then outputs an empirical risk minimizing hypothesis
\begin{equation}
    \hat{w}\in\argmin_{w\in\mathsf{W}_1} \frac{1}{m}\sum_{i=m+1}^{2m} \left\lvert \tr[ E(z_i) \rho_0(w)] - \tilde{b}^{(i)}\right\rvert
    =: \argmin_{w\in\mathsf{W}_1} \hat{R}_{s_{(m+1):2m}, b_\ell^{(i)} }^{\mathrm{train}} (w).
\end{equation}
If there are multiple empirical risk minimizers, the tie is broken arbitrarily (but, for simplicity of notation, deterministically). 
Note: Both building the empirical covering net and performing empirical risk minimization over that net are computationally inefficient in general. Here, we focus on information-theoretic aspects and ignore computational complexity.

As in \Cref{example:quantum-state-classification-framework}, the family of quantum channels associated to this quantum learner is trivial, since there is no quantum hypothesis.
Thus, following \Cref{eq:learner-output-state-classical-hypothesis-expectation}, when letting the learner $\alg$ act on the quantum data state $\rho$, we obtain the output state
\begin{equation}
    \sigma
    = \E_{W\sim P^{\alg}_{\mathsf{W}}}\E_{S\sim P^{\alg}_{\mathrm{data}}|W} \left[ \ketbra{S}{S} \otimes \rho_{\mathrm{test}}\otimes \ketbra{\hat{w}}{\hat{w}} \right] ,
\end{equation}
with quantum test state $\rho_{\mathrm{test}}= (\rho_0^{\otimes m_{\mathrm{test}}})^{\otimes m}$ and with the probability distribution $P^{\alg}$ on $\mathsf{Z}^m\times \mathsf{W}$ given by
\begin{equation}
    P^{\alg} (s, \hat{w})
    = P^m(s)\cdot P^{\alg} (\hat{w}\rvert s)
    = P^m(s)\cdot \mathbb{P}_{B_\ell^{(i)}\rvert s}\left[ \hat{w}\in\argmin_{w\in\mathsf{W}_1} \hat{R}_{s_{(m+1):2m}, B_{\ell}^{(i)}}^{\mathrm{train}} (w)\right] ,
\end{equation}
where the $B_\ell^{(i)}$ are $\{0,1\}$-valued random variables which become independent when conditioned on $s$, with probability distributions 
\[
    \mathbb{P}_{B_\ell^{(i)}\rvert s}[B_\ell^{(i)} = 1] = \tr[ E(z_i) \rho_0] = 1 - \mathbb{P}_{B_\ell^{(i)}\rvert s}[B_\ell^{(i)} = 0]
\] 
for all $1\leq\ell\leq m_{\mathrm{train}}$ and for all $m+1\leq i\leq 2m$.
While more general quantum learners are possible, for instance by allowing for general $s$-dependent POVM elements, the simple quantum learner presented here is similar in spirit to \cite{Aaronson2007} and \cite[Section 4.2]{Xu2017}. As we show below, we can make guarantees on its performance based on \Cref{corollary:qmi-gen-bound-qmgf-and-cmgf-subgaussian-independent}.

Given that our quantum learner is based on empirical risk minimization, we define the loss observables in analogy to the notion of empirical risk used above. Namely, for each $1\leq i\leq m$ and $c_\ell^{(i)}$, we set
\begin{equation}
    L_{c_\ell^{(i)}}^{(i)}(z_i, w)
    = L_{c_\ell^{(i)}}^{(i)}(z_i)
    = \bigotimes_{\ell =1}^{m_{\mathrm{test}}}\left( c_{\ell}^{(i)} E(z_i) + (1-c_{\ell}^{(i)}) (\mathbbm{1}_d-E(z_i))\right)
\end{equation}
and
\begin{equation}
    L(s, w)
    = L(s)
    = \sum_{c_{\ell}^{(i)}\in \{0,1\}} \hat{R}_{s_{(m+1):2m}, c_\ell^{(i)} }^{\mathrm{test}} (w)\cdot (\mathbbm{1}_d^{\otimes m_{\mathrm{test}}})^{\otimes (i-1)}\otimes L_{c_\ell^{(i)}}^{(i)}(z_i)\otimes (\mathbbm{1}_d^{\otimes m_{\mathrm{test}}})^{\otimes (m-i)} ,
\end{equation}
with $\hat{R}_{s_{(m+1):2m}, c_\ell^{(i)} }^{\mathrm{test}} (w) = \frac{1}{m}\sum_{i=m+1}^{2m} \left\lvert \tr[ E(z_i) \rho_0(w)] - \frac{1}{m_{\mathrm{test}}}\sum_{\ell =1}^{m_{\mathrm{test}}} c_\ell^{(i)}\right\rvert$, 
Plugging these choices into \Cref{definition:expected-empirical-risk}, we obtain the expected empirical risk 
\begin{equation}
    \hat{R}_\rho (\alg)
    = \E_{S\sim P^{2m}} \E_{C_\ell^{(i)}\rvert S} \E_{\hat{W}}\left[ \hat{R}_{S_{(m+1):2m}, C_\ell^{(i)} }^{\mathrm{test}} (\hat{W}) \right] ,
\end{equation}
where the $C_\ell^{(i)}$ are $\{0,1\}$-valued random variables which become independent when conditioned on $s$, with probability distributions 
\begin{equation}
    \mathbb{P}_{C_\ell^{(i)}\rvert s}[C_\ell^{(i)} = 1] = \tr[ E(z_i) \rho_0] = 1 - \mathbb{P}_{C_\ell^{(i)}\rvert s}[C_\ell^{(i)} = 0]
\end{equation}
for all $1\leq\ell\leq m_{\mathrm{test}}$ and for all $m+1\leq i\leq 2m$. Note that the $\hat{W}$ in this expression depends on the random variables $B_\ell^{(i)}$, which in turn depend on the random variables $z_{i}$.
Similarly, by \Cref{definition:expected-true-risk}, the expected true risk is
\begin{align}
    R_\rho (\alg)
    &= \E_{\bar{S}\sim P^{2m}} \E_{\bar{C}_\ell^{(i)}\rvert \bar{S}} \E_{\bar{\hat{W}}}\left[ \hat{R}_{\bar{S}_{(m+1):2m}, \bar{C}_\ell^{(i)} }^{\mathrm{test}} (\bar{\hat{W}}) \right]\\
    &= \E_{\bar{Z}_{m+1}\sim P} \E_{\bar{C}_\ell^{(m+1)}\rvert \bar{S}_{m+1}} \E_{\bar{\hat{W}}}\left[ \left\lvert \tr[ E(\bar{Z}_{m+1}) \rho_0(\bar{\hat{W}})] - \frac{1}{m_{\mathrm{test}}}\sum_{\ell =1}^{m_{\mathrm{test}}} \bar{C}_\ell^{(m+1)}\right\rvert \right],
\end{align}
where $(\bar{Z}_{m+1}, \bar{C}_\ell^{(m+1)})$ has the same distribution as $(Z_{m+1}, C_\ell^{(m+1)})$, $\bar{\hat{W}}$ has the same distribution as $\hat{W}$ (induced via the random variables $\bar{B}_\ell^{(i)}$), but $(\bar{Z}_{m+1}, \bar{C}_\ell^{(m+1)})$ and $\bar{\hat{W}}$ are independent.

Next, we apply \Cref{corollary:qmi-gen-bound-qmgf-and-cmgf-subgaussian-independent}. 
As there is no quantum hypothesis and as the initial quantum data factorizes across the test-train bipartition, it suffices to verify the classical sub-gaussianity assumption. We can rewrite
\begin{equation}
    \tr\left[L_{c_\ell^{(i)}}^{(i)}(Z_i, w) \rho_0^{\otimes m_{\mathrm{test}}}\right]
    = \mathbb{E}_{C_\ell^{(i)}\rvert Z_i}\left[ \left\lvert \tr[ E(Z_i) \rho_0(w)] - \frac{1}{m_{\mathrm{test}}}  \sum_{\ell =1}^{m_{\mathrm{test}}} C_\ell^{(i)}\right\rvert \right] ,
\end{equation}
where, for any $m+1\leq i\leq 2m$, conditioned on $Z_i$ the random variables $C_1^{(i)},\ldots, C_{m_{\mathrm{test}}}^{(i)}$ are i.i.d., take values in $\{0,1\}$, and have mean $\tr[ E(Z_i) \rho_0]$.
So, Hoeffding's inequality \cite{hoeffding1963probability} implies that the random variable $\tr[ E(Z_i) \rho_0] - \tfrac{1}{m_{\mathrm{test}}}  \sum_{\ell =1}^{m_{\mathrm{test}}} C_\ell^{(i)}$ is $\tfrac{C}{\sqrt{m_{\mathrm{test}}}}$-sub-gaussian conditioned on $Z_i$. Here and below, we use $C$ to denote a constant that may change with each occurrence.
Next, using a triangle inequality and the equivalent formulation of sub-gaussianity in terms of $L_p$-norm bounds (compare \cite[Proposition 2.5.2]{vershynin2018high-dimensional}, we obtain the bound
\begin{align}
    &\mathbb{E}_{C_\ell^{(i)}\rvert Z_i}\left[ \left\lvert \tr[ E(Z_i) \rho_0(w)] - \frac{1}{m_{\mathrm{test}}}  \sum_{\ell =1}^{m_{\mathrm{test}}} C_\ell^{(i)}\right\rvert \right]\\
    &\leq \left\lvert \tr[ E(Z_i) \rho_0(w)] - \tr[ E(Z_i) \rho_0]\right\rvert + \mathbb{E}_{C_\ell^{(i)}\rvert Z_i}\left[ \left\lvert \tr[ E(Z_i) \rho_0] - \frac{1}{m_{\mathrm{test}}}  \sum_{\ell =1}^{m_{\mathrm{test}}} C_\ell^{(i)}\right\rvert \right]\\
    &\leq 2 + \frac{C}{\sqrt{m_{\mathrm{test}}}}
\end{align}
almost surely. 
So, the random variable $\tr[L_{c_\ell^{(i)}}^{(i)}(Z_i, w) \rho_0^{\otimes m_{\mathrm{test}}}]$, with $Z_i\sim P$, is $\left(C(1 + \tfrac{1}{\sqrt{m_{\mathrm{test}}}})\right)$-sub-gaussian by Hoeffding's Lemma \cite{hoeffding1963probability}.
Notice also that this sub-gaussianity remains true if we further condition on $Z_1,\ldots,Z_m$, since $\tr[L(Z, w) \rho_{\mathrm{test}}]$ is independent of these random variables.
Thus, first conditioning on $Z_1,\ldots,Z_m$ and then applying \Cref{corollary:qmi-gen-bound-qmgf-and-cmgf-subgaussian-independent}, we obtain the following expected generalization error bound:
\begin{align}
    \lvert \operatorname{gen}_\rho (\alg)\rvert
    &= \left\lvert \mathbb{E}_{Z_1,\ldots,Z_m}\left[\operatorname{gen}_\rho (\alg)\rvert Z_1,\ldots,Z_m\right]\right\rvert\\
    &\leq \mathbb{E}_{Z_1,\ldots,Z_m}\left[\lvert \operatorname{gen}_\rho (\alg)\rvert~\rvert Z_1,\ldots,Z_m\right]\\
    &\leq \mathbb{E}_{Z_1,\ldots,Z_m}\left[\sqrt{\left(\frac{C}{m} \left(1 + \frac{1}{\sqrt{m_{\mathrm{test}}}}\right)^2\right) I(S; \hat{W}\rvert Z_1,\ldots,Z_m)}\right] .
\end{align}

Next, we bound the conditional mutual information $I(S; \hat{W}\rvert Z_1,\ldots,Z_m)$. By construction, conditioned on $Z_1,\ldots,Z_m$, the output hypothesis random variable $\hat{W}$ takes values in $\mathsf{W}_1$. Thus, $I(S; \hat{W}\rvert Z_1,\ldots,Z_m)\leq\log_2 (\lvert\mathsf{W}_1\rvert)$. 
We can control $\lvert\mathsf{W}_1\rvert$ using bounds from classical learning theory. Notice that $\mathsf{W}_1$ is an empirical $\tilde{\varepsilon}$-covering net for (a subset of) the function class $\mathcal{F}_{\mathcal{S}(\mathbb{C}^d)}$ of $d$-dimensional quantum states viewed as functionals on effect operators, that is,
\begin{equation}
    \mathcal{F}_{\mathcal{S}(\mathbb{C}^d)}
    = \left\{ \mathcal{E}(\mathbb{C}^d)\ni E\mapsto \tr[E\rho]\in [0,1] \right\}_{\rho\in \mathcal{S}(\mathbb{C}^d)}\subseteq [0,1]^{\mathcal{E}(\mathbb{C}^d)} .
\end{equation}
By {\cite[Theorem $1$]{mendelson2003entropy} (see also \cite[Sections $12$ and $18$]{anthony1999neural},~\cite[Sections $4.2.2$ and $4.2.4$]{vidyasagar2003theory}, or \cite[Section 3.3]{caro2022phdthesis})}, we can find such a covering net of cardinality $\lvert \mathsf{W}_1\rvert \leq \left(\nicefrac{2}{\tilde{\varepsilon}}\right)^{C\cdot \operatorname{fat}(\mathcal{F}_{\mathcal{S}(\mathbb{C}^d)}, c \tilde{\varepsilon})}$, where $c,C>0$ are some constants and $\operatorname{fat}(\mathcal{F},\alpha)$ denotes the $\alpha$-fat-shattering dimension of a real-valued function class $\mathcal{F}$, introduced in \cite{kearns1994efficient}. 
For our purposes, it suffices to know that the fat-shattering dimension of $\mathcal{F}_{\mathcal{S}(\mathbb{C}^d)}$ scales logarithmically in $d$: As shown in \cite[Corollary 2.7]{Aaronson2007}, $\operatorname{fat}(\mathcal{F}_{\mathcal{S}(\mathbb{C}^d)}, \gamma)\leq \nicefrac{C\log(d)}{\gamma^2}$ holds for all $\gamma>0$, with $C>0$ some constant.
Therefore, we can take our covering net $\mathsf{W}_1$ to have cardinality $\lvert \mathsf{W}_1\rvert \leq \left(\nicefrac{2}{\tilde{\varepsilon}}\right)^{C\log(d)/\tilde{\varepsilon}^2}$, for some constant $C>0$.
This gives the conditional mutual information bound
\begin{equation}
    I(S; \hat{W}\rvert Z_1,\ldots,Z_m)
    \leq \log_2 (\lvert\mathsf{W}_1\rvert)
    \leq \frac{C\log(d)}{\tilde{\varepsilon}^2}\cdot \log\left(\frac{2}{\tilde{\varepsilon}}\right) .
\end{equation}
Plugging this back into our expected generalization error bound, we have shown:
\begin{equation}\label{eq:gen-bound-aaronson-like}
    \operatorname{gen}_\rho (\alg)
    \leq \sqrt{\left(\frac{C}{m} \left(1 + \frac{1}{\sqrt{m_{\mathrm{test}}}}\right)^2\right) \frac{\log(d)}{\tilde{\varepsilon}^2}\cdot \log\left(\frac{2}{\tilde{\varepsilon}}\right) } .
\end{equation}
This shows that we can achieve good expected generalization performance with a training data size $m$ scaling only logarithmically in the dimension $d$.

We now demonstrate the usefulness of this expected generalization error bound as a tool in bounding the expected excess prediction error of $\mathcal{A}$, which we denote by $\operatorname{excess}_\rho (\alg)$ and which is defined as the difference between the expected prediction error of $\alg$, given by 
\begin{equation}
    \E_{\bar{\hat{W}}} \E_{\bar{Z}_{m+1}} \left[ \left\lvert  \tr[E(\bar{Z}_{m+1})\rho_0] - \tr[E(\bar{Z}_{m+1})\rho_0(\bar{\hat{W}})] \right\rvert \right],
\end{equation}
and the optimal achievable expected prediction error, given by
\begin{equation}
    \inf_{w\in\mathsf{W}} \E_{\bar{Z}_{m+1}} \left[ \left\lvert  \tr[E(\bar{Z}_{m+1})\rho_0] - \tr[E(\bar{Z}_{m+1})\rho_0(w)] \right\rvert \right].
\end{equation}
Namely, based on \Cref{eq:gen-bound-aaronson-like}, we show in \Cref{appendix:proofs}: 

\begin{corollary}\label{corollary:pac-learning-states-excess-risk-bound}
    The quantum learning algorithm described above satisfies the excess prediction error bound
    \begin{equation}
        \operatorname{excess}_\rho (\alg)
        \leq \tilde{\varepsilon} + \tilde{\mathcal{O}}\left(\sqrt{\frac{\log (d)}{m \tilde{\varepsilon}^2}} + \frac{1}{\sqrt{m_{\mathrm{train}}}} + \frac{1}{\sqrt{m_{\mathrm{test}}}}\right) .
    \end{equation}
\end{corollary}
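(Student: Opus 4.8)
The plan is to combine the in-expectation generalization bound \eqref{eq:gen-bound-aaronson-like} with a standard empirical-risk-minimization (ERM) decomposition of the expected excess prediction error, while carefully tracking the extra error from replacing true expectation values $\tr[E(z_i)\rho_0]$ by finite-sample estimates built from $m_{\mathrm{train}}$ (resp.\ $m_{\mathrm{test}}$) copies of $\rho_0$. Write $R(w)=\E_{\bar Z\sim P}[\, |\tr[E(\bar Z)\rho_0]-\tr[E(\bar Z)\rho_0(w)]|\, ]$ for the true prediction error of a hypothesis $w$, so that $\operatorname{excess}_\rho(\alg)=\E_{\bar{\hat W}}[R(\bar{\hat W})]-\inf_{w\in\mathsf{W}}R(w)$. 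First I would relate the two ``endpoints'' to objects in our framework. On the one hand, $R_\rho(\alg)$ differs from $\E_{\bar{\hat W}}[R(\bar{\hat W})]$ only in that $\tr[E(\bar Z_{m+1})\rho_0]$ is replaced by the $m_{\mathrm{test}}$-sample average $\frac{1}{m_{\mathrm{test}}}\sum_\ell \bar C^{(m+1)}_\ell$, whose conditional mean equals $\tr[E(\bar Z_{m+1})\rho_0]$; since the expected absolute deviation of a mean of $m_{\mathrm{test}}$ Bernoulli variables is $\mathcal{O}(m_{\mathrm{test}}^{-1/2})$, the triangle inequality gives $\E_{\bar{\hat W}}[R(\bar{\hat W})]\le R_\rho(\alg)+\mathcal{O}(m_{\mathrm{test}}^{-1/2})$. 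On the other hand, \eqref{eq:gen-bound-aaronson-like} gives $R_\rho(\alg)-\hat R_\rho(\alg)\le\tilde{\mathcal{O}}(\sqrt{\log(d)/(m\tilde\varepsilon^{2})})$. Hence it remains to bound the expected empirical risk $\hat R_\rho(\alg)$ by $\inf_{w\in\mathsf{W}}R(w)+\tilde\varepsilon+\tilde{\mathcal{O}}(m_{\mathrm{train}}^{-1/2}+m_{\mathrm{test}}^{-1/2}+\sqrt{\log(d)/(m\tilde\varepsilon^{2})})$.

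For that last bound I would use the following ERM chain, all inequalities in expectation over the internal randomness, with $\widehat R(w)=\frac1m\sum_{i=m+1}^{2m}|\tr[E(z_i)\rho_0]-\tr[E(z_i)\rho_0(w)]|$ the empirical prediction error on the fresh copies $z_{m+1},\dots,z_{2m}$ evaluated with true expectation values. (i) Passing between $\hat R^{\mathrm{test}}$ (the quantity the loss observables compute, built from $m_{\mathrm{test}}$-sample estimates) and $\widehat R$, and between the learner's own ERM objective $\hat R^{\mathrm{train}}$ (built from $m_{\mathrm{train}}$-sample estimates) and $\widehat R$, each costs at most the averaged absolute deviation of the corresponding Bernoulli means, of expectation $\mathcal{O}(m_{\mathrm{test}}^{-1/2})$ and $\mathcal{O}(m_{\mathrm{train}}^{-1/2})$ respectively. (ii) Since $\hat W$ minimizes $\hat R^{\mathrm{train}}$ over the net $\mathsf{W}_1$, for any fixed $w_1\in\mathsf{W}_1$ we get $\E[\widehat R(\hat W)]\le\E[\widehat R(w_1)]+\mathcal{O}(m_{\mathrm{train}}^{-1/2})$. (iii) Choosing $w_1=w_1^{\star}$ to be a net element within empirical seminorm $\Norm{\cdot}_{2,\{z_j\}_{j=1}^m}$ at most $\tilde\varepsilon$ of a (near-)minimizer $w^{\star}$ of $R$, Cauchy--Schwarz bounds the empirical $L_1$-distance of $w_1^{\star}$ and $w^{\star}$ on $z_1,\dots,z_m$ by $\tilde\varepsilon$; then, using that $z_{m+1},\dots,z_{2m}$ are independent of $\mathsf{W}_1$ (so $\E[\widehat R(w_1^{\star})]=\E[R(w_1^{\star})]$), the $1$-Lipschitzness of $R$ in the $L_1(P)$ seminorm, and a uniform-convergence estimate passing from the empirical seminorm on $z_1,\dots,z_m$ to the true one, one gets $\E[\widehat R(w_1^{\star})]\le\inf_{w\in\mathsf{W}}R(w)+\tilde\varepsilon+\tilde{\mathcal{O}}(\sqrt{\log(d)/(m\tilde\varepsilon^{2})})$. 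The dimension dependence here enters solely through $\log|\mathsf{W}_1|=\tilde{\mathcal{O}}(\log(d)/\tilde\varepsilon^{2})$, controlled via the fat-shattering bound $\operatorname{fat}(\mathcal{F}_{\mathcal{S}(\C^d)},\gamma)\le\tilde{\mathcal{O}}(\log(d)/\gamma^{2})$ of \cite[Corollary 2.7]{Aaronson2007}. Chaining (i)--(iii) and adding the two contributions from the first paragraph yields the claimed excess prediction error bound.

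The step I expect to be the main obstacle is (iii): the covering net $\mathsf{W}_1$ is \emph{data-dependent}, being built from $z_1,\dots,z_m$, whereas the ERM and the loss observables live on the fresh copies $z_{m+1},\dots,z_{2m}$, so closeness of $w_1^{\star}$ to $w^{\star}$ in the empirical seminorm on the first half does not by itself transfer to closeness of their true risks. The resolution is to exploit the independence of the two halves (conditioned on $z_1,\dots,z_m$ the net is a fixed finite set, and empirical risks on $z_{m+1},\dots,z_{2m}$ equal true risks in expectation) together with a uniform-convergence argument over the relevant difference class -- taking appropriate localized/Bernstein-type care to avoid a loss in the rate -- which incurs no dimension dependence beyond the $\log|\mathsf{W}_1|$ already present in \eqref{eq:gen-bound-aaronson-like}. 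Everything else is routine bookkeeping of Bernoulli-averaging errors via Hoeffding's lemma and inequality \cite{hoeffding1963probability}, and of which random variables are independent of which.
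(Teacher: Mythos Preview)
Your proposal is correct and follows the same overall scaffolding as the paper: relate $\E[R(\bar{\hat W})]$ to $R_\rho(\alg)$ via the $m_{\mathrm{test}}$-sample Bernoulli deviation, invoke the generalization bound \eqref{eq:gen-bound-aaronson-like}, and then bound $\hat R_\rho(\alg)$ by an ERM chain with $m_{\mathrm{train}}$- and $m_{\mathrm{test}}$-sample corrections. Where you diverge is precisely in your step (iii), and the paper's treatment of that step is worth knowing.

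You keep the empirical risk $\widehat R$ on the \emph{second} half $z_{m+1},\dots,z_{2m}$ throughout, which forces you to transfer the $\tilde\varepsilon$-closeness of $w_1^\star$ and $w^\star$ from the empirical seminorm on the \emph{first} half to the true $L_1(P)$ seminorm; your proposed fix (uniform convergence over the difference class) works but is the most delicate part of the argument. The paper instead first pulls $\E_{Z_{m+1},\dots,Z_{2m}}$ inside the infimum over $\mathsf{W}_1$ (legal since $\mathsf{W}_1$ depends only on the first half), reducing the target to $\E_{Z_1,\dots,Z_m}[\inf_{w\in\mathsf{W}_1}R(w)]$, and then introduces an auxiliary empirical risk $\hat R$ computed on the \emph{first} half $z_1,\dots,z_m$ --- the same data that defines the net. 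The covering property now directly yields $\hat R(\hat w_{\mathsf{W}_1})-\hat R(\hat w_{\mathsf{W}})\le\tilde\varepsilon$ (no transfer needed), and the only remaining term is $\E[\sup_{w\in\mathsf{W}_1}(R(w)-\hat R(w))]$, handled by a finite union bound of size $\lvert\mathsf{W}_1\rvert$. This sidesteps your ``main obstacle'' entirely: rather than moving the covering to where the risk lives, the paper moves the risk to where the covering lives. Your route is valid but the paper's is shorter and avoids the localized uniform-convergence step you flag.
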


In particular, picking $\tilde{\varepsilon} = \nicefrac{\varepsilon}{2}$, our procedure achieves an expected excess prediction error of at most $\varepsilon$ for $m\leq\tilde{\mathcal{O}} (\nicefrac{\log (d)}{\varepsilon^4})$ and $m_{\mathrm{train}}, m_{\mathrm{test}} \leq\tilde{\mathcal{O}} (\nicefrac{1}{\varepsilon^2})$.
This way, our information-theoretic approach reproduces the essential feature of \cite[Theorem 1.1]{Aaronson2007}, namely the favorable dimension-dependence, as well as the $(\nicefrac{1}{\varepsilon^4})$-scaling.
Moreover, whereas \cite{Aaronson2007} starts from classical training data obtained by measuring copies of the unknown state, our analysis begins with the quantum data and thereby simultaneously leads to bounds on $m$, $m_{\mathrm{train}}$, and $m_{\mathrm{test}}$.
Here, $m_{\mathrm{train}}$ and $m_{\mathrm{test}}$ are $d$-independent. 
Note: If we consider $m$, $m_{\mathrm{train}}$, and $m_{\mathrm{test}}$ as fixed, determining our resources, then we can achieve an excess prediction error of order $\max\{\sqrt[4]{\nicefrac{\log(d)}{m}}, \sqrt{\nicefrac{1}{m_{\mathrm{train}}}}, \sqrt{\nicefrac{1}{m_{\mathrm{test}}}}\}$.

\begin{remark}
    From our reasoning leading to \Cref{corollary:pac-learning-states-excess-risk-bound}, one can extract a proof that extends the reasoning from \cite[Section 4.2]{Xu2017} beyond binary classification to regression with a continuous target space.
    This then shows how to recover in-expectation versions of known generalization bounds in terms of the fat-shattering dimension \cite{bartlett1998prediction, anthony2000function} via an information-theoretic approach to generalization and may be of independent interest.
\end{remark}

\paragraph{Extension to entangled quantum data.}
The above discussion of PAC learning quantum states assumed access to independent copies of the unknown state $\rho_0$.
We now discuss how our framework and results can be applied if the copies of $\rho_0$ are correlated/entangled across the test-train bipartition. 
This should be viewed as a proof-of-principle demonstration, similar extensions beyond the case of independent quantum data are possible also for the applications discussed in the following subsections. 
Moreover, our framework can be modified to incorporate entanglement inside the test and train subsystems, respectively, upon suitably redefining the expected true risk.

Consider CQ data of the form 
\begin{equation}
    \rho
    = \E_{S=(Z_1,\ldots,Z_{2m}) \sim P^{2m}}\left[ \left( \bigotimes_{i=1}^{2m} \ketbra{Z_i}{Z_i}\right) \otimes \tilde{\rho}\right] ,
\end{equation} 
where $\tilde{\rho}\in\mathcal{S}(\mathcal{H}_{\mathrm{data}})$ satisfies $\tr_{\mathrm{test}}[\tilde{\rho}] =(\rho_0^{\otimes m_{\mathrm{train}}})^{\otimes m}$ and $\tr_{\mathrm{train}}[\tilde{\rho}] = (\rho_0^{\otimes m_{\mathrm{train}}})^{\otimes m}$.
Let us analyze the same learning strategy as discussed above with the same choice of loss observable.
The expected empirical risk now becomes
\begin{equation}
    \hat{R}_\rho (\alg)
    = \E_{S\sim P^{2m}} \E_{D_\ell^{(i)}\rvert S} \E_{\hat{W}}\left[ \hat{R}_{S_{(m+1):2m}, D_\ell^{(i)} }^{\mathrm{test}} (\hat{W}) \right] ,
\end{equation}
where the $D_\ell^{(i)}$ are $\{0,1\}$-valued random variables that conditioned on $s$ have the joint distribution
\small
\begin{equation}
    \mathbb{P}_{\{D_\ell^{(i)}\}|s}[(D_\ell^{(i)})_{\ell,i}=(d_\ell^{(i)})_{\ell,i}]
    = \tr\left[\left(\bigotimes_{i=1}^m\bigotimes_{\ell=1}^{m_{\mathrm{test}}} (d_\ell^{(i)} E(z_i) + (1-d_\ell^{(i)})(\mathbbm{1}_d-E(z_i))\right) \rho^\alg_{\mathrm{test}}(s,\{b_\ell^{(i)}\}_{\ell,i})\right] , 
\end{equation}
\normalsize
where the $b_\ell^{(i)}$ are the measurement outcomes obtained by measuring for each $m+1\leq i\leq 2m$, the $2$-outcome POVM $\{E(z_i), \mathbbm{1}-E(z_i)\}$ on the $i^{th}$ set of $m_{\mathrm{train}}$ subsystems of $\tilde{\rho}$.
Crucially, whereas in our previous analysis the expected empirical risk depended on random variables $C_\ell^{(i)}$ that, conditioned on $s$, were independent of the outcome random variables $B_\ell^{(i)}$ seen during training (and thus of the induced hypothesis $\hat{W}$), now it depends on random variables $D_\ell^{(i)}$ that may depend on the $B_\ell^{(i)}$. 
This occurs because, due to the initially present correlations and entanglement, the collapsing measurement performed by the learner on the training data subsystem may also influence the test data subsystem.
Thus, using the ``contaminated'' test data for validation may lead to a worse risk estimate than in the i.i.d.~case.

In our definition of expected true risk, we decoupled the test and training data subsystems before letting the learner act.
This ensures that, even if correlations or entanglement are present across the test-train bipartition initially, our notion of expected true risk still reproduces the same quantity as in the case of independent quantum copies,
\begin{equation}
    R_\rho (\alg)
    = \E_{\bar{Z}_{m+1}\sim P} \E_{\bar{C}_\ell^{(m+1)}\rvert \bar{S}_{m+1}} \E_{\bar{\hat{W}}}\left[ \left\lvert \tr[ E(\bar{Z}_{m+1}) \rho_0(\bar{\hat{W}})] - \frac{1}{m_{\mathrm{test}}}\sum_{\ell =1}^{m_{\mathrm{test}}} \bar{C}_\ell^{(m+1)}\right\rvert \right] .
\end{equation}

The classical sub-gaussianity analysis is exactly the same as before. 
Now, we in addition have to determine the quantum sub-gaussianity behavior.
To this end, note that $\rho_{\mathrm{test}}(s)=(\rho_0^{\otimes m_{\mathrm{train}}})^{\otimes m}$ factorizes by assumption. 
Moreover, our loss observable is $(\nicefrac{2}{m\cdot m_{\mathrm{test}}})$-Lipschitz w.r.t.~the factorization into $m\cdot m_{\mathrm{test}}$ subsystems. (This can be seen by a bounded differences argument: If two density matrices coincide after tracing out a single subsystem, then at most one of the $\bar{C}_\ell^{(i)}$ in $\hat{R}_{s_{(m+1):2m}, \bar{C}_\ell^{(i)} }^{\mathrm{test}} (w)$ changes, leading to an overall change bounded by $\nicefrac{2}{m\cdot m_{\mathrm{test}}}$.) 
Thus, after conditioning on $Z_1,\ldots,Z_m$, we can apply \Cref{corollary:qmi-gen-bound-qmgf-and-cmgf-subgaussian-independent} and, as there is no quantum hypothesis, obtain the following generalization bound:
\begin{equation}
    \begin{split}
        \lvert\operatorname{gen}_\rho(\alg)\rvert
        &\leq \E_{Z_1,\ldots,Z_m}\left[\sqrt{\frac{8}{m\cdot m_{\mathrm{test}}}\E_{Z_{m+1},\ldots,Z_{2m}\sim P^m}\left[ \chi\left(\left\{P^{\mathcal{A}}_{B_\ell^{(i)}|S}(\{b_\ell^{(i)}\}_{\ell,i}), \rho^{\alg} _{\mathrm{test}}(S,\{b_\ell^{(i)}\}_{\ell,i})\right\}_{b_\ell^{(i)}}\right)\right]}\,\right]\\
        &\hphantom{\leq }~~+ \E_{Z_1,\ldots,Z_m}\left[\sqrt{\left(\frac{C}{m} \left(1 + \frac{1}{\sqrt{m_{\mathrm{test}}}}\right)^2\right) I(S; \hat{W}\rvert Z_1,\ldots,Z_m)}\right] \, .
    \end{split}
\end{equation}
The second summand can be controlled as in the case of i.i.d.~quantum copies.
The first summand, which can be viewed as a proxy for the maximal information about the training outcomes $b_\ell^{(i)}$ accessible from the post-measurement state $\rho^{\alg} _{\mathrm{test}}(S,\{b_\ell^{(i)}\}_{\ell,i})$ on the test subsystem, requires a separate analysis. 
Obtaining bounds on this term via quantities measuring the initial correlations/entanglement between the $\mathrm{test}$ and $\mathrm{train}$ subsystems or via properties of the POVMs used by the learner is an interesting challenge that we leave open for future work.

\subsection{Quantum PAC learning from entangled data}

Next, we demonstrate that our framework allows us to prove information-theoretic generalization bounds for quantum PAC learning from entangled data, which can be viewed as a variation on the usual standard PAC learning framework \cite{bshouty1998learning, arunachalam2017survey}.
The classical framework of \cite{Xu2017}, as reviewed in \Cref{section:introduction}, considers training data $S$ consisting of i.i.d.~examples $Z_i$ drawn from $P$. Written in terms of states diagonal in the computational basis, this data corresponds to the mixed state $\left(\sum_{z\in\mathsf{Z}} P(z)\ketbra{z}{z}\right)^{\otimes m}$. 
Instead of this classical data, we consider entangled quantum data representing a purification of this probabilistic mixture. Namely, we consider a quantum data state $\rho = (\ketbra{\phi}{\phi})^{\otimes m}$ with $\ket{\phi} = \sum_{z\in \mathsf{Z}} \sqrt{P(z)} \ket{z}_{\mathrm{test}}\otimes \ket{z}_{\mathrm{train}}$ and thus
\begin{align}
    \ket{\phi}^{\otimes m}
    &= \sum_{z_1,\ldots z_m\in \mathsf{Z}} \sqrt{P(z_1)\cdots P(z_m)} \ket{z_1,\ldots z_m}_{\mathrm{test}}\otimes \ket{z_1,\ldots z_m}_{\mathrm{train}}\\
    &= \sum_{s\in\mathsf{Z}^m} \sqrt{P^m(s)}\ket{s}_{\mathrm{test}}\otimes \ket{s}_{\mathrm{train}} \label{eq:modified-quantum-superposition-example}\, ,
\end{align}
where we identify the purifying system as the test data system. Here, the data is purely quantum, there is no classical part.
As our focus is on learning a classical function, we take $\mathsf{Z} = \mathsf{X}\times \mathsf{Y}$, with $\mathsf{X}=\{0,1\}^n$ and $\mathsf{Y}=\{0,1\}$ and accordingly $\mathcal{H}_{\mathrm{test}} = \mathcal{H}_{\mathrm{train}} = ((\mathbb{C}^{2})^{\otimes n}\otimes \mathbb{C}^2)^{\otimes m}$. We write $Z_i=(X_i,Y_i)$ and take $\mathsf{W}\subset \mathsf{Y}^{\mathsf{X}}$. 
We note that quantum data states as in \Cref{eq:modified-quantum-superposition-example} can be obtained from the more established quantum superposition examples of \cite{bshouty1998learning} by attaching an auxiliary register and applying CNOT gates, and the reverse conversion can be achieved by applying CNOTs and discarding the auxiliary system.

Before proceeding further, let us comment on how this formulation compares to the classical framework obtained by extending \cite{Xu2017} to include test data, discussed in \Cref{subsection:comparison-frameworks}.
Recall that this classical description involved perfectly correlated test and training data random variables;
the entanglement between test and training subsystems in the pure state $\ket{\phi}^{\otimes m}$ can be viewed as a fully quantum analogue of this perfect correlation, with respect to the computational basis.

We are now ready to quantumly analyze a learner that acts according to a conditional probability distribution $P^{\alg}(W|S)$.
To this end, we consider a quantum learner $\alg$ that measures the quantum data in the computational basis and processes the observed outcomes via $P^{\alg}(W|S)$. 
To model this without introducing classical random variables, we take the hypothesis space $\mathcal{H}_{\mathrm{hyp}} = \mathbb{C}^{\lvert W\rvert}$. The quantum learner $\mathcal{A}$, without performing any POVM with observed outcomes, implements the channel 
\begin{equation}
    \Lambda^\alg (\rho)
    = \sum_{s\in\mathsf{Z}^m}\sum_{w\in\mathsf{W}} \bra{s}\rho\ket{s} P^\alg (w|s) \ketbra{w}{w}\, .
\end{equation}
Thus, the state after the action of the learner is given by
\begin{equation}
    \sigma^{\alg}
    = \sum_{s\in\mathsf{Z}^m}\sum_{w\in\mathsf{W}} P^\alg (s,w) \ketbra{s}{s}_{\mathrm{test}}\otimes\ketbra{w}{w}_{\mathrm{hyp}}\, .
\end{equation}
To evaluate the performance of $\alg$, we take the loss observable $L = \tfrac{1}{m}\sum_{i=1}^m L_i$ with
\begin{equation}
    L_i
    = \sum_{z_i\in\mathsf{Z}}\sum_{w\in\mathsf{W}} \ell(w,z_i) \ketbra{z_i}{z_i}_{\mathrm{test},i}\otimes \ketbra{w}{w}_{\mathrm{hyp}}\, ,
\end{equation}
where $\ell:\mathsf{W}\times\mathsf{Z}\to\mathbb{R}_{\geq 0}$ is some classical loss function.
As the relevant operators commute, it is easy to see that this choice reproduces the clasical notions of expected empirical risk
\begin{equation}
    \tr[L \sigma^{\alg}]
    = \E_{(S,W)\sim P^\alg}\left[\hat{R}_S (W)\right]
\end{equation}
and expected true risk
\begin{equation}
    \tr[L (\rho_{\mathrm{test}}\otimes \sigma^{\alg}_{\mathrm{hyp}})]
    = \E_{(\bar{S},\bar{W})\sim P^m\otimes P^\alg_{\mathsf{W}}}\left[\hat{R}_{\bar{S}} (\bar{W})\right]
    = \E_{W\sim P^\alg_{\mathsf{W}}}\left[ R_P(W)\right]\, .
\end{equation}
These are exactly the notions of risk familiar from the classical case.

Moreover, the QMGF bound for $L$ w.r.t.~$\rho_{\mathrm{test}}\otimes \sigma^{\alg}_{\mathrm{hyp}}$ coincides with the classical MGF bound for $\tfrac{1}{m}\sum_{i=1}^m \ell(\bar{W},\bar{Z}_i)$.
Also, as $\sigma^{\alg}$ is diagonal, we see that $I(\mathrm{test};\mathrm{hyp})_{\sigma^{\alg}} = I(S;W)$.
Thus, \Cref{corollary:qmi-gen-bound-qmgf-and-cmgf-subgaussian} reproduces the main result of \cite{Xu2017} via the QMI term\footnote{While the statement of \cite[Theorem 1]{Xu2017} is correct, the argument there was based on the claim that, if $\Bar{X}, \Bar{Y}$ are independent random variables and if $f(x,\Bar{Y})$ is $\beta$-sub-gaussian for every $x$, then also $f(\Bar{X},\Bar{Y}$) is $\beta$-sub-gaussian. This claim is in general not correct because of complications regarding centering, as pointed out, e.g., in Appendix C of the arXiv version of \cite{negrea2019information-theoretic}. This issue can be circumvented by first conditioning on the hypothesis random variable (see, e.g., \cite[p.~22]{raginsky2019information}). Thus, our claim here is that we have reproduced the following version of \cite[Theorem 1]{Xu2017} without the improvement via conditioning: If $\tfrac{1}{m}\sum_{i=1}^m \ell(\bar{W},\bar{Z}_i)$ is $(\tfrac{\beta}{\sqrt{m}})$-sub-gaussian, then Eq.~(10) of \cite{Xu2017} holds.}. Here, both the classical MI and the Holevo information terms vanish because there is no classical hypothesis.

\begin{remark}
    In this section, we have described learning from quantum data in the form of a pure entangled state.
    Recently, \cite{caro2023classical} proposed \emph{mixture-of-superposition quantum examples} as an alternative to the more established superposition examples \cite{bshouty1998learning, arunachalam2017survey} for agnostic quantum learning.
    Similarly, one may change the model considered here and instead work with quantum data of the form $\rho
        = \left(\E_{f\sim F_{P}}\left[ \left(\ketbra{\phi_f}{\phi_f}\right)\right]\right)^{\otimes m} $, where
    \begin{equation}
        \ket{\phi_f}
        = \sum_{x\in\mathsf{X}} \sqrt{P_{\mathsf{X}}(x)} \ket{x,f(x)}_{\mathrm{test}}\otimes \ket{x,f(x)}_{\mathrm{train}}\, ,
    \end{equation}
    and where $F_{P}$ is the probability distribution on the function space $\{0,1\}^{\{0,1\}^n}$ induced by $P$ via
    \begin{equation}
        F_{P}(f)
        = \prod_{x'\in\{0,1\}^n} \mathbb{P}_{(x,y)\sim P}\left[f(x')=y~|~x=x'\right]\, .
    \end{equation}
    An analysis similar to the one presented above can also be carried out for this notion of quantum data and again reproduces the classical bound of \cite{Xu2017}.
\end{remark}

\subsection{Quantum parameter estimation}

Next, we demonstrate how to incorporate quantum parameter estimation tasks, typically considered in quantum metrology \cite{giovannetti2006quantum} and quantum sensing \cite{degen2017quantum}, into our framework. 
Let $\mathsf{Z}= \Theta\subseteq \mathbb{R}^n$ be a parameter space, equipped with the induced Borel $\sigma$-algebra. 
Consider the data Hilbert space
\begin{equation}
    \mathcal{H}_{\mathrm{data}}
    = \mathcal{H}_{\mathrm{test}}\otimes \mathcal{H}_{\mathrm{train}}
    = ((\mathbb{C}^d)^{\otimes m_{\mathrm{test}}})^{\otimes m}\otimes ((\mathbb{C}^d)^{\otimes m_{\mathrm{train}}})^{\otimes m} .
\end{equation} 
For an unknown probability measure $P$ over $\Theta$, let the quantum data state $\rho$ be the CQ state
\begin{equation}
    \rho
    = \mathbb{E}_{S=(Z_1,\ldots,Z_m)\sim P^m}\left[ \left(\bigotimes_{i=1}^m \ketbra{Z_i}{Z_i}\right)\otimes \left(\bigotimes_{i=1}^m \rho(Z_i)^{\otimes m_{\mathrm{test}}}\right) \otimes \left(\bigotimes_{i=1}^m \rho(Z_i)^{\otimes m_{\mathrm{train}}}\right) \right] ,
\end{equation}
where the $\rho (Z_i)$ are parameter-dependent qudit states, with the mapping $z\mapsto\rho (z)$ known in advance. 
Note: Even if this mapping is known in principle, one may not be able to prepare copies of the respective state. Thus, when aiming to learn how to extract information about the unknown parameter from the quantum system, it nevertheless makes sense to work with a finite number of copies of each $\rho(Z_i)$.

The goal of a quantum learner here is to learn a POVM that, when performed on copies of $\rho (Z)$, produces an accurate estimate of the unknown parameter $Z$. 
Therefore, to model the learner, we let $\mathcal{H}_{\mathrm{hyp}}$ be trivial, and we take $\mathsf{W}$ to be some measurable hypothesis space such that each $w\in\mathsf{W}$ is associated with a POVM $\{F_w (\hat{z})\}_{\hat{z}\in\mathsf{Z}}\subseteq\mathcal{E}((\mathbb{C}^d)^{\otimes m_{\mathrm{test}}})$. 
The action of the learner is described by POVMs $\{E_{s}^\alg (w)\}_{w\in \mathsf{W}}\subseteq \mathcal{E}(((\mathbb{C}^d)^{\otimes m_{\mathrm{train}}})^{\otimes m})$, for $s=(z_i)_i\in\mathsf{Z}^m$.
If we now define the loss observables as
\begin{equation}
    L(s, w)
    = L ((z_i)_i, w)
    = \sum_{\hat{z}\in\mathsf{Z}} \frac{1}{m}\sum_{i=1}^m \Norm{z_i - \hat{z}}_p (\mathbbm{1}_{d}^{\otimes m_{\mathrm{test}}})^{\otimes (i-1)}\otimes  F_w (\hat{z}) \otimes (\mathbbm{1}_{d}^{\otimes m_{\mathrm{test}}})^{\otimes (m-i)},
\end{equation}
for some $p\geq 1$, then we can evaluate the expected empirical risk (\Cref{definition:expected-empirical-risk}) as 
\begin{equation}
    \hat{R}_\rho (\alg)
    = \mathbb{E}_{(S,\hat{Z})\sim P^\alg} \left[ \frac{1}{m}\sum_{i=1}^m \Norm{Z_i - \hat{Z}}_p \right] ,
\end{equation}
where the classical data $S=(Z_i)_i$ and the estimated parameter $\hat{Z}$ have the joint probability distribution
\begin{equation}
    P^\alg ((z_i)_i, \hat{z})
    = \left( \prod\limits_{i=1}^m P(z_i)\right)\cdot \sum_{w\in \mathsf{W}} \tr\left[E^\alg_{(z_i)_i} (w) \left(\bigotimes_{i=1}^m \rho(z_i)^{\otimes m_{\mathrm{train}}}\right)\right] \cdot \tr\left[ F_w (\hat{z}) \rho(z_i)^{\otimes m_{\mathrm{test}}} \right]\, .
\end{equation}
Similarly, the expected true risk (\Cref{definition:expected-true-risk}) is 
\begin{equation}
    R_\rho (\alg)
    = \mathbb{E}_{\bar{Z}, \hat{Z}} \left[\Norm{\bar{Z} - \hat{Z}}_p \right] ,
\end{equation}
where 
\begin{equation}
    P^\alg (\bar{z}, \hat{z})
    = P(\bar{z})\cdot \mathbb{E}_{\bar{W}}\left[\tr\left[ F_{\bar{W}} (\hat{z}) \rho(\bar{z})^{\otimes m_{\mathrm{test}}} \right]\right],
\end{equation}
with the random variables $\bar{Z}$ and $\bar{W}$ being independent copies of $Z$ and $W$.
That is, the expected empirical risk measures the expected average norm error that estimates produced from the learned POVM make on the states that it has been learned from.
Meanwhile, the expected true risk measures the expected average norm error that estimates produced from the learned POVM make on a new parameter setting drawn at random from the underlying distribution.

We next evaluate the guarantees of \Cref{section:framework-main-result} for this setting. 
We are in the scenario of \Cref{corollary:qmi-gen-bound-qmgf-and-cmgf-subgaussian-independent} without a quantum hypothesis and without initial test-train entanglement, so it suffices to study the sub-gaussianity parameter of the random variable $\sum_{\hat{z}\in\mathsf{Z}}\Norm{Z_i - \hat{z}}_p \tr[ F_w (\hat{z}) \rho (Z_i)^{\otimes m_{\mathrm{test}}}] = \mathbb{E}_{\hat{Z}|Z_i,w} \left[\Norm{Z_i - \hat{Z}}_p \right]$ for $Z_i\sim P$ and for fixed $w$.
If we assume the parameter space $\mathsf{Z}$ to have a $p$-norm diameter $B_p<\infty$, then this random variable is bounded by $B_p$ and thus $(\tfrac{B_p}{2})$-sub-gaussian by Hoeffding.
Then, \Cref{corollary:qmi-gen-bound-qmgf-and-cmgf-subgaussian-independent} implies
\begin{equation}
    \lvert R_\rho (\alg) - \hat{R}_\rho (\alg)\rvert
    \leq \sqrt{\frac{B_p}{2m} I((Z_i)_i;W)} .
\end{equation}
Informally, this tells us: If the learned POVM performs well on the available classical-quantum data and does not depend too strongly on any specific sample parameter setting seen during training, then the POVM will also accurately extract the parameter of a previously unseen $\rho(Z)$.
Similarly to \Cref{example:quantum-state-classification-gen-bound}, we may further bound the relevant mutual information in terms of the complexity of the admissible POVMs.
To the best of our knowledge, this is the first generalization bound for quantum parameter estimation.

\subsection{Variational quantum machine learning}
 
In this subsection, we consider a task of classifying classical data via an embedding into quantum states, similarly to \cite{banchi2021generalization}.
To formalize this task, consider the data Hilbert space
\begin{equation}
    \mathcal{H}_{\mathrm{data}}
    = \mathcal{H}_{\mathrm{data}}
    = ((\mathbb{C}^d)^{\otimes m_{\mathrm{test}}})^{\otimes m}\otimes ((\mathbb{C}^d)^{\otimes m_{\mathrm{train}}})^{\otimes m} .
\end{equation} 
Let $P$ be an unknown probability measure over a measurable input space $\mathsf{X}$, let $f:\mathsf{X}\to \{1,\ldots,k\}$ be an unknown labelling function, and consider the quantum data state
\begin{equation}
    \rho
    = \mathbb{E}_{X_1,\ldots,X_{m}\sim P^m} \left[ \left(\bigotimes_{i=1}^m \ketbra{X_i, f(X_i)}{X_i, f(X_i)}\right)\otimes \left(\bigotimes_{i=1}^m \rho(X_i)^{\otimes m_{\mathrm{test}}}\right)\otimes \left(\bigotimes_{i=1}^m \rho(X_i)^{\otimes m_{\mathrm{train}}}\right)\right] ,
\end{equation}
where the $\rho(x_i)$ are quantum states into which the classical inputs $x_i$ are embedded according to a mapping $x\mapsto\rho(x)$, which may be known or unknown. 
While the mapping $x\mapsto\rho(x)$ is typically in principle known in variational QML, since it is given by the parametrized circuit, it can nevertheless make sense to work with a restricted number of copies of output states, for example if running the quantum circuit itself is expensive.
Importantly, while with a known mapping the output state and expectation values thereof could be computed classically, this will become infeasible for large system sizes. Then, using actual quantum circuits to prepare and measure states may be necessary.

The goal of a quantum learner in this scenario is to learn a POVM that, when performed on copies of $\rho(x)$, produces the correct label $f(x)$ with high probability. Accordingly, we model the learner by taking $\mathcal{H}_{\mathrm{hyp}}$ to be trivial, and by taking $\mathsf{W}$ to be some hypothesis space such that each $w\in\mathsf{W}$ is associated with a $k$-outcome POVM $\{F_w (\ell)\}_{\ell=1}^k \subseteq \mathcal{E}((\mathbb{C}^d)^{\otimes m_{\mathrm{test}}})$. We describe the action of the learner by POVMs $\{E^\alg_{((x_i,f(x_i))_{i}} (w)\}_{w\in\mathsf{W}}\subseteq \mathcal{E}(((\mathbb{C}^d)^{\otimes m_{\mathrm{train}}})^{\otimes m})$, for $((x_i,f(x_i))_{i}\in (\mathsf{X}\times\{1,\ldots,k\})^m$.
We now consider the loss observables
\small
\begin{equation}
    L(s,w)
    = L((x_i,f(x_i))_{i},w)
    = \frac{1}{m}\sum_{i=1}^m \sum_{\ell\in\{1,\ldots,k\}\setminus\{f(x_i)\}} (\mathbbm{1}_{d}^{\otimes m_{\mathrm{test}}})^{\otimes (i-1)}\otimes  F_w (\ell) \otimes (\mathbbm{1}_{d}^{\otimes m_{\mathrm{test}}})^{\otimes (m-i)} .
\end{equation}
\normalsize
According to \Cref{definition:expected-empirical-risk}, this leads to the expected empirical risk
\begin{align}
    \hat{R}_{\rho}(\mathcal{A})
    &= \mathbb{E}_{(X_1,\ldots, X_m; W)\sim P^\alg} \left[ \frac{1}{m}\sum_{i=1}^m \sum_{\ell\in\{1,\ldots,k\}\setminus\{f(X_i)\}} \tr[ F_W (\ell) \rho(X_i)^{\otimes m_{\mathrm{test}}} ] \right]\\
    &= \mathbb{E}_{(X_1,\ldots, X_m; W)\sim P^\alg} \left[ 1 - \frac{1}{m}\sum_{i=1}^m \tr[ F_W (f(X_i)) \rho(X_i)^{\otimes m_{\mathrm{test}}} ] \right],
\end{align}
Similarly, according to \Cref{definition:expected-true-risk} we obtain the expected true risk
\begin{align}
    R_{\rho}(\mathcal{A})
    &= \mathbb{E}_{\bar{X};\bar{W}}\left[ \sum_{\ell\in\{1,\ldots,k\}\setminus\{f(\bar{X})\}} \tr[ F_{\bar{W}} (\ell) \rho(\bar{X})^{\otimes m_{\mathrm{test}}} ]\right]\\
    &= \mathbb{E}_{\bar{X};\bar{W}}\left[ 1 - \tr[ F_{\bar{W}} (f(\bar{X})) \rho(\bar{X})^{\otimes m_{\mathrm{test}}} ]\right].
\end{align}
In words, $\hat{R}_{\rho}(\mathcal{A})$ is the expected average misclassification probability on the available training data, and $R_{\rho}(\mathcal{A})$ is the expected msiclassification probability on a fresh test data point.
Thus, our notions of risk are simply the expected version of those considered in \cite{banchi2021generalization}.

It remains to evaluate the guarantees proved in \Cref{section:framework-main-result} for this scenario. According to \Cref{corollary:qmi-gen-bound-qmgf-and-cmgf-subgaussian-independent}, we can focus on determining the sub-gaussianity parameter of the random variable $1 - \tr[ F_w (f(X_i)) \rho(X_i)^{\otimes m_{\mathrm{test}}} ]$ for $X_i\sim P$ and for fixed $w$. As this random variable takes values in $[0,1]$, it is $(\tfrac{1}{2})$-sub-gaussian by Hoeffding. 
So, \Cref{corollary:qmi-gen-bound-qmgf-and-cmgf-subgaussian-independent} yields the expected generalization error bound
\begin{equation}
    \lvert R_{\rho}(\mathcal{A}) - \hat{R}_{\rho}(\mathcal{A})\rvert
    \leq \sqrt{\frac{1}{2m} I(((X_i, f(X_i)))_i;W)} .
\end{equation}
We leave it as an open question whether this bound can be directly related and compared to \cite[Theorem 1]{banchi2021generalization} (see also the results in \cite{banchi2023statistical}), which depends exponentially on the $2$-Rényi mutual information between the classical input and the quantum register for a single copy.
Moreover, it will be interesting to investigate whether recent quantum generalization bounds based on (quantum) Fisher information \cite{abbas2021power,abbas2021effective, haug2023generalization} can be reinterpreted in our information-theoretic framework.
More generally, we envision that, similarly to how classical information-theoretic generalization guarantees help overcome the limitations of uniform generalization bounds pointed out in \cite{zhang2017understanding, zhang2021understanding}, a quantum information-theoretic perspective will be an important tool in remedying the drawbacks \cite{gil-fuster2023understanding} of recently established uniform generalization bounds for variational quantum machine learning \cite{caro2020pseudo, caro2021encodingdependent, chen2021expressibility, popescu2021learning, cai2021quantum, du2022efficient, caro2022generalization, gyurik2023structuralrisk}.

\subsection{Approximate quantum membership learning}

Next, we discuss a task of learning a POVM that approximately decides membership of quantum states in an a priori unknown set.
To this end, consider the data Hilbert space
\begin{equation}
    \mathcal{H}_{\mathrm{data}}
    = \mathcal{H}_{\mathrm{test}}\otimes \mathcal{H}_{\mathrm{train}}
    = ((\mathbb{C}^d)^{\otimes m_{\mathrm{test}}})^{\otimes m}\otimes ((\mathbb{C}^d)^{\otimes m_{\mathrm{train}}})^{\otimes m} .
\end{equation}
Let $P$ be an unknown probability measure over qudit states. Let $\varepsilon>0$. 
Consider the CQ data state
\begin{equation}
    \rho
    = \mathbb{E}_{\rho_1,\ldots,\rho_m\sim P^m}\left[ \left( \bigotimes_{i=1}^m \ketbra{f_{\mathcal{P},\varepsilon}(\rho_i)}{f_{\mathcal{P},\varepsilon}(\rho_i)} \right) \otimes \left(\bigotimes_{i=1}^m \rho_i^{\otimes m_\mathrm{test}}\right) \otimes \left(\bigotimes_{i=1}^m \rho_i^{\otimes m_\mathrm{train}}\right)\right] ,
\end{equation}
where $\mathcal{P}\subseteq \mathcal{S}(\mathbb{C}^d)$ is a subset of qudit states, and $f_{\mathcal{P},\varepsilon}:\mathcal{S}(\mathbb{C}^d)\to\{0,1,\perp\}$ is defined as 
\begin{equation}
    f_{\mathcal{P},\varepsilon}(\rho)
    = \begin{cases} 1 \quad&\textrm{ if } \rho\in\mathcal{P}\\ 0 &\textrm{ if } d_1 (\rho, \mathcal{P})\geq \varepsilon\\ \perp &\textrm{ else}\end{cases} .
\end{equation}
Here, we used the notation $d_1 (\rho, \mathcal{P}) = \inf_{\sigma\in\mathcal{P}} \Norm{\rho - \sigma}_1$.
Thus, given an input state $\rho$, the function value $f_{\mathcal{P},\varepsilon}(\rho)$ $\varepsilon$-approximately (and ambiguously for states with $d_1 (\rho, \mathcal{P})< \varepsilon$) decides whether $\rho$ is in $\mathcal{P}$.
If we let $Q$ denote the probability measure over $\{0,1,\perp\}\times \mathcal{S}(\mathbb{C}^d)$ induced by $P$ via $Q(z_i,\rho_i) = P(\rho_i) \delta_{z_i, f_{\mathcal{P},\varepsilon}(\rho_i)}$, then we can rewrite $\rho$ as
\begin{align}
    \rho
    &= \mathbb{E}_{(Z_1,\rho_1),\ldots , (Z_m, \rho_m)\sim Q^m}\left[ \left( \bigotimes_{i=1}^m \ketbra{Z_i}{Z_i} \right) \otimes \left(\bigotimes_{i=1}^m \rho_i^{\otimes m_\mathrm{test}}\right) \otimes \left(\bigotimes_{i=1}^m \rho_i^{\otimes m_\mathrm{train}}\right)\right]\\
    &= \mathbb{E}_{Z_1,\ldots , Z_m\sim Q_{\mathsf{Z}}^m}\left[\left( \bigotimes_{i=1}^m \ketbra{Z_i}{Z_i} \right) \otimes \left(\bigotimes_{i=1}^m \mathbb{E}_{\rho_i\sim Q_{\mathcal{S}(\mathbb{C}^d)} | Z_i}\left[ \rho_i^{\otimes m_\mathrm{test}}\otimes \rho_i^{\otimes m_\mathrm{train}}\right]\right)\right] .
\end{align}
Thus, the data state has the form of \Cref{eq:data-state}, with classical instance space $\mathsf{Z}=\{0,1,\perp\}$.
This rewriting also highlights a similarity to ambiguous state discrimination: The training data consists of a classical label (saying ``far from $\mathcal{P}$'', ``in $\mathcal{P}$'', or ''marginal case'') and a quantum part given by a conditioned average over copies of the corresponding quantum states. Given the data, the learner should essentially produce a $2$-outcome POVM that distinguishes between `` far from $\mathcal{P}$'' and ``in $\mathcal{P}$'' well on average, where the marginal cases do not matter.

More precisely, the goal of a learner is to learn a $2$-outcome POVM for deciding whether a state belongs to $\mathcal{P}$ or is $\varepsilon$-far from $\mathcal{P}$. For states that are not in $\mathcal{P}$ but less than $\varepsilon$-far from $\mathcal{P}$, any of the two outcomes is deemed acceptable.
To model such a learner, we let $\mathcal{H}_{\mathrm{hyp}}$ be trivial, and we take $\mathsf{W}$ to be some measurable hypothesis space such that each $w\in\mathsf{W}$ is associated with a POVM $\{F_w, \mathbbm{1}_{d}^{\otimes m_{\mathrm{test}}} - F_w\}\subseteq\mathcal{E}((\mathbb{C}^d)^{\otimes m_{\mathrm{test}}})$. The action of the learner is described by POVMs $\{E^\alg_{(z_i)_i} (w)\}_{w\in \mathsf{W}}\subseteq \mathcal{E}(((\mathbb{C}^d)^{\otimes m_{\mathrm{train}}})^{\otimes m})$, for $(z_i)_i\in \{0,1,\perp\}^m$.
We define the loss observables as 
\begin{align}
    L(s,w)
    &= L ((z_i)_i, w)\\
    &= \frac{1}{m}\sum_{i=1}^m (\mathbbm{1}_{d}^{\otimes m_{\mathrm{test}}})^{\otimes (i-1)}\otimes  \left( \delta_{z_i , 0} F_w + \delta_{z_i , 1} (\mathbbm{1}_{d}^{\otimes m_{\mathrm{test}}} - F_w)\right) \otimes (\mathbbm{1}_{d}^{\otimes m_{\mathrm{test}}})^{\otimes (m-i)} .
\end{align}
This leads to an expected empirical risk
\begin{equation}
    \hat{R}_\rho (\alg)
    = \mathbb{E}_{\rho_1,\ldots ,\rho_m; W} \left[ \frac{1}{m}\sum_{i=1}^m \left( \mathbf{1}_{\rho_i\in\mathcal{P}} \tr[(\mathbbm{1}_{d}^{\otimes m_{\mathrm{test}}} - F_W) \rho_i^{\otimes m_{\mathrm{test}}}] + \mathbf{1}_{d_1(\rho_i,\mathcal{P})\geq\varepsilon} \tr[F_W \rho_i^{\otimes m_{\mathrm{test}}}] \right) \right] ,
\end{equation}
where the joint distribution of $((\rho_i)_i, W)$ is given by
\begin{equation}
    P^\alg ((\rho_i)_i, W)
    = \left( \prod\limits_{i=1}^m P(\rho_i)\right)\cdot \tr\left[E^\alg_{(f_{\mathcal{P},\varepsilon}(\rho_i))_i} (W) \left(\bigotimes_{i=1}^m \rho_i^{\otimes m_{\mathrm{train}}}\right)\right]\, .
\end{equation}
The expected true risk in this case becomes
\begin{equation}
    R_\rho (\alg)
    = \mathbb{E}_{\bar{\rho}; \bar{W}} \left[ \mathbf{1}_{\bar{\rho}\in\mathcal{P}} \tr[(\mathbbm{1}_{d}^{\otimes m_{\mathrm{test}}} - F_{\bar{W}}) \bar{\rho}^{\otimes m_{\mathrm{test}}}] + \mathbf{1}_{d_1(\bar{\rho},\mathcal{P})\geq\varepsilon} \tr[F_{\bar{W}} \bar{\rho}^{\otimes m_{\mathrm{test}}}] \right] ,
\end{equation}
where $\bar{\rho}$ and $\bar{W}$ are independent random variables with joint product distribution
\begin{equation}
    P^\alg (\bar{\rho}, \bar{W})
    = P(\bar{\rho})\cdot \mathbb{E}_{\bar{\rho}_1,\ldots,\bar{\rho}_m\sim P^m}\left[\tr\left[E^\alg_{(f_{\mathcal{P},\varepsilon}(\bar{\rho}_i))_i} (\bar{W}) \left(\bigotimes_{i=1}^m \bar{\rho}_i^{\otimes m_{\mathrm{train}}}\right)\right]\right] \, .
\end{equation}
That is, the expected empirical risk is the expected average error that the learned POVM makes on the data that it was learned from. In contrast, the expected true risk is the expected average probability that the POVM makes a wrong prediction on a randomly drawn new state. (Again, the classification of marginal cases is irrelevant.)

To apply \Cref{corollary:qmi-gen-bound-qmgf-and-cmgf-subgaussian-independent}, since there is only a classical hypothesis here and since there are no initial correlations or entanglement across the test train bipartition, we study the sub-gaussianity parameter of the random variable $\tr\left[\left( \delta_{Z_i , 0} F_w + \delta_{Z_i , 1} (\mathbbm{1}_{d}^{\otimes m_{\mathrm{test}}} - F_w)\right) \mathbb{E}_{\rho_i\sim Q_{\mathcal{S}(\mathbb{C}^d)} | Z_i}\left[ \rho_i^{\otimes m_\mathrm{test}}\right]\right]$.
This random variable takes the value $0\leq \tr\left[(\mathbbm{1}_{d}^{\otimes m_{\mathrm{test}}} - F_w) \mathbb{E}_{\rho_i\sim Q_{\mathcal{S}(\mathbb{C}^d)} | 1}\left[\rho_i^{\otimes m_\mathrm{test}}\right]\right]\leq 1$ with probability $Q_{\mathsf{Z}}(1)$ and the value $0\leq \tr\left[F_w \mathbb{E}_{\rho_i\sim Q_{\mathcal{S}(\mathbb{C}^d)} | 0}\left[\rho_i^{\otimes m_\mathrm{test}}\right]\right]\leq 1$ with probability $Q_{\mathsf{Z}}(0)$.
In particular, by Hoeffding's inequality, it is $(\tfrac{1}{2})$-sub-gaussian.
Thus, \Cref{corollary:qmi-gen-bound-qmgf-and-cmgf-subgaussian-independent} implies
\begin{equation}
    \lvert R_\rho (\alg) - \hat{R}_\rho (\alg)\rvert
    \leq \sqrt{\frac{1}{2m} I((Z_i)_i;W)} .
\end{equation}
If the learner has prior knowledge indicating that membership in $\mathcal{P}$ can be (approximately) decided using only few-copy measurements and chooses the set of admissible POVMs $\{F_w, \mathbbm{1}_{d}^{\otimes m_{\mathrm{test}}} - F_w\}$ with a suitable locality structure, this is expected to lead to an improved generalization performance compared to a learner that considers general many-copy measurements as viable hypotheses (compare also the discussion in \Cref{example:quantum-state-classification-gen-bound}). 
To the best of our knowledge, we are the first to take this PAC perspective on quantum membership learning and to establish a generalization bound for it.

\begin{remark}
    The learning problem described in this section can also be interpreted as learning to solve an average-case version of quantum property testing for states, see \cite[Section 4]{montanaro2016survey}.
    From this perspective, we are asking: Given data consisting of (copies of) quantum states correctly classified according to an unknown property $\mathcal{P}$ of states and a proximity parameter $\varepsilon$, learn a POVM that tests $\mathcal{P}$ w.r.t. proximity parameter $\varepsilon$ well on average over states drawn from $P$.
    We note that formulating meaningful average-case property testing problems is subtle. For instance average-case property testing w.r.t.~uniformly random bit strings becomes trivial because of the blow-up phenomenon for Hamming distance balls \cite{goldreich2011average}.

    Complementary to the scenario discussed above, one might also consider testing for multiple properties, drawn from an unknown distribution, on a fixed (but unknown) quantum state. Here, the challenge would be to learn a mapping from a property $\mathcal{P}$ to an associated $2$-outcome POVM that classifies the unknown state $\rho_0$ according to whether it has property $\mathcal{P}$ or is $\varepsilon$-far from it.    
\end{remark}

\subsection{Learning quantum state-preparation channels from classical-quantum data}

In this section, we discuss how our framework can incorporate recent work on learning classical-to-quantum mappings \cite{chung2021sample, caro2021binary, fanizza2022learning}.
Let $\mathsf{Z}$ be some measurable instance space. Let $P$ be a probability measure over $\mathsf{Z}$.
Consider the data Hilbert space
\begin{equation}
    \mathcal{H}_{\mathrm{data}}
    = \mathcal{H}_{\mathrm{test}}\otimes \mathcal{H}_{\mathrm{train}}
    = (\mathbb{C}^d)^{\otimes m}\otimes (\mathbb{C}^d)^{\otimes m} .
\end{equation}
Take the CQ data state
\begin{equation}
    \rho
    = \E_{(Z_1,\ldots,Z_m)\sim P^m}\left[ \left(\bigotimes_{i=1}^m \ketbra{Z_i}{Z_i}\right)\otimes \left(\bigotimes_{i=1}^m \mathcal{N}(Z_i) \right)\otimes \left(\bigotimes_{i=1}^m \mathcal{N}(Z_i) \right) \right] ,
\end{equation}
where $\mathcal{N}: \mathsf{X}\to\mathcal{S}(\mathbb{C}^d)$ is an unknown qudit state-preparation channel.
The goal of a quantum learner with a hypothesis class $\{\mathcal{N}_w\}_{w\in\mathsf{W}}$ of classical descriptions of state preparation channels is to output $w$ such that performing $\mathcal{N}_w$ on inputs drawn from $P$ approximates the action of the unknown channel $\mathcal{N}$ on those inputs well in trace distance. Throughout, we assume that $\mathcal{N}(z)$ and $\mathcal{N}_w(z)$ are pure states for all $z\in\mathsf{Z}$ and $w\in\mathsf{W}$, and we therefore use notations like $\mathcal{N}(z) = \ketbra{\mathcal{N}(z)}{\mathcal{N}(z)}$ for these states. 

With our framework, we now formalize this setting for a learner that produces only a classical hypothesis, by taking $\mathcal{H}_{\mathrm{hyp}}$ to be trivial, and we define our loss observables as 
\begin{equation}
    L(s,w)
    = L((z_i)_i,w)
    = \frac{1}{m}\sum_{i=1}^m  \mathbbm{1}_d^{\otimes (i-1)}\otimes L_i(z_i,w)  \otimes \mathbbm{1}_d^{\otimes (m-i)}\, ,
\end{equation}
with local loss observables
\begin{equation}
    L_i(z_i,w) 
    = \mathbbm{1}_d - \mathcal{N}_w (z_i) \, .
\end{equation}
With these choices, \Cref{definition:expected-empirical-risk,definition:expected-true-risk} lead to the expected empirical risk
\begin{align}
    \hat{R}_\rho (\alg)
    &= \E_{(S,W)\sim P^\alg}\left[1 - \frac{1}{m}\sum_{i=1}^m\lvert\braket{\mathcal{N}_{W} (Z_i)|\mathcal{N}(Z_i)}\rvert^2\right]\\
    &= \E_{(S,W)\sim P^\alg}\left[\frac{1}{m}\sum_{i=1}^m\left(\frac{1}{2}\lVert \mathcal{N}_{W} (Z_i) - \mathcal{N} (Z_i)\rVert_1\right)^2\right]\, ,
\end{align}
and the expected true risk 
\begin{equation}
    R_\rho (\alg)
    = \E_{\bar{Z},\bar{W}}[1 - \lvert\braket{\mathcal{N}_{\bar{W}} (\bar{Z})|\mathcal{N}(\bar{Z})}\rvert^2]
    = \E_{\bar{Z},\bar{W}}\left[\left(\frac{1}{2}\lVert \mathcal{N}_{\bar{W}} (\bar{Z}) - \mathcal{N} (\bar{Z})\rVert_1\right)^2\right]\, .
\end{equation}
That is, the expected empirical risk is the expected squared trace distance between the output states of the true channel and the hypothesis channel averaged over the training data, whereas the expected true risk considers the average squared trace distance on a fresh input state.

In this scenario, we can apply \Cref{corollary:qmi-gen-bound-qmgf-and-cmgf-subgaussian-independent}. Namely, for every fixed $w\in\mathsf{W}$, the random variable $\tr[L_i (Z_i,w) \mathcal{N}(Z_i)]$ 
with $Z_i\sim P$ takes values in $[0,1]$ and thus is $(\tfrac{1}{2})$-sub-gaussian by Hoeffding. Hence, the generalization error can be bounded as
\begin{equation}
    \lvert \operatorname{gen}_\rho (\alg)\rvert
    \leq \sqrt{\frac{1}{2m} I(S;W)}\, .
\end{equation}
If $\mathsf{W}$ is finite, we can bound $I(S;W)\leq \log\lvert\mathsf{W}\rvert$, thus recovering an in-expectation version of the sample complexity bound of \cite{chung2021sample}. 
If $\mathsf{W}$ is infinite, we can resort to empirical covering net arguments similarly to \Cref{example:quantum-state-classification-gen-bound} and \Cref{section:pac-learning-quantum-states}. 
When the maps in $\mathsf{W}$ only ever output two possible quantum states, this approach, combined with standard bounds on the size of an empirical covering net via the VC-dimension \cite{vapnik1971uniform} (compare for instance \cite[Section 8.3.4]{ vershynin2018high-dimensional} and \cite[Section 3.3]{caro2022phdthesis}), leads to an in-expectation version of the guarantee proved in \cite[Section 4.1]{caro2021binary}.
More generally, using covering nets w.r.t.~empirical Schatten $q$-norms as in \cite[Definition 1]{fanizza2022learning}, we can obtain generalization bounds similar in spirit to \cite[Theorem 4]{fanizza2022learning}, which we may turn into bounds on the expected excess risk following the line of reasoning from \Cref{section:pac-learning-quantum-states}. 
Note, however, that these upper bounds on the mutual information via capacity measures are worst-case, we expect tighter data- and algorithm-dependent bounds to be possible.

Let us point out that the reasoning in this subsection was specific to state preparation channels outputting pure states, so that the overlap serves as a measurable quantity tightly related to the trace distance.
For channels outputting mixed states, other loss observables would be required to obtain risks that accurately reflect the desired average trace distance approximation to the true output states. 
In the case of only two possible known output states, one may use the Holevo-Helstrom measurement as in \cite{caro2021binary}.
However, for the general case, the ``right'' choice is not immediate. 
We believe that measurements in a random orthonormal basis as used in \cite{chung2021sample} or the quantum data analysis approach of \cite{fanizza2022learning} may serve as inspiration for how to incorporate channels with mixed output states.
Assuming purified access, an alternative route may proceed via combining the well known Fuchs-van de Graaf inequalities \cite{fuchs1999cryptographic} with a recent quantum fidelity estimation procedure for low-rank states \cite{wang2023quantum}.

\subsection{Generalization bounds for differentially private quantum learners}
Differential privacy~\cite{dwork2014algorithmic} is a robust framework that ensures the privacy of individuals in a dataset by adding controlled noise to the data or to the output of data analyses, which becomes crucial when training machine learning models on sensitive information. In machine learning, integrating differential privacy helps in mitigating the risks of data leakage and model inversion attacks, ensuring that the model's predictions do not inadvertently reveal private information about any individual in the training data. With the advent of quantum machine learning, several works tried to quantize the basic concepts, definitions and results of differential privacy~\cite{hirche2023quantum,angrisani2023unifying,angrisani2023quantum,nuradha2023quantum,aaronson2019gentle,zhou2017differential,du2021quantum}.
Classically, differentially private learners are known to satisfy mutual information stability \cite{feldman2018calibrating}, which can then be plugged into information-theoretic generalization bounds (see also \cite[Section 7.6]{hellstroem2023generalization}).
Additionally, in the case of locally differentially private (LDP) classical learners, strong data processing inequalities (DPIs) have been established (see \cite{asoodeh2023contraction, zamanlooy2023strong,angrisani2023unifying} and the references therein), which also aid in controlling the entropic quantities appearing in our bounds. 
Here, we give proof-of-principle demonstrations for how recent quantum results on contraction properties of LDP channels and measurements \cite{hirche2023quantum,angrisani2023quantum} can be used within our framework to analyze the generalization behavior of such quantum learners.

First, suppose that all channels $\Lambda^{\alg}_{s,w}$ used by the learner $\mathcal{A}$ are $\varepsilon$-LDP (see \cite[Section V]{hirche2023quantum} or \cite[Section 2.2]{angrisani2023quantum} for a definition). 
Then, combining \cite[Corollary 3.1]{angrisani2023quantum} with the Pinsker inequality, we see that
\begin{equation}\label{eq:LDP-contraction}
    I(\mathrm{test};\mathrm{hyp})_{\sigma^\alg (s,w)}\leq 2\varepsilon (1-e^{-\varepsilon}) \sqrt{2 I(\mathrm{test};\mathrm{train})_{\rho^\alg (s,w)}}\, .
\end{equation}
Using Jensen's inequality, this means that the relevant expected QMI in our generalization bounds is upper bounded as
\begin{equation}
     \E_{(S,W)\sim P^{\alg}}\left[ I(\mathrm{test};\mathrm{hyp})_{\sigma^{\alg}(S,W)}\right]
     \leq 2\sqrt{2}\varepsilon (1-e^{-\varepsilon}) \sqrt{\E_{(S,W)\sim P^\alg}[I(\mathrm{test};\mathrm{train})_{\rho^\alg (S,W)}]} \, .
\end{equation}
To further upper bound the average QMI in the post-measurement states $\rho^{\alg} (s,w)$, we can write $I(\mathrm{test};\mathrm{train})_{\rho^\alg (S,W)}$ in terms of von Neumann entropies, and use concavity of the entropy as well as the definition of the Holevo information $\chi$ to arrive at
\small
\begin{align}
    \E_{(S,W)\sim P^\alg}[I(\mathrm{test};\mathrm{train})_{\rho^\alg (S,W)}]
    &\leq I(\mathrm{test};\mathrm{train})_{\E_{(S,W)\sim P^\alg}[\rho^\alg (S,W)]} + \chi\left(\{P^\alg(s,w),\rho^\alg (s,w)\}\right)\\
    &\leq I(\mathrm{test};\mathrm{train})_{\rho} + \chi\left(\{P^\alg(s,w),\rho^\alg (s,w)\}\right)\, ,
\end{align}
\normalsize
where the last step used the data-processing inequality.
Thus, we control the QMI contribution to the generalization error in terms of the initial QMI present in the data and a proxy for the maximum accessible information about the measurement outcomes accessible from the post-measurement ensemble.
When performing a similar analysis for a learner using general (not $\varepsilon$-LDP) channels $\Lambda^\alg_{s,w}$, a direct application of DPI yields the weaker $I(\mathrm{test};\mathrm{hyp})_{\sigma^\alg (s,w)}\leq I(\mathrm{test};\mathrm{train})_{\rho^\alg (s,w)}$ instead of \Cref{eq:LDP-contraction}.
Once we note that $1-e^{-\varepsilon} = \varepsilon + \mathcal{O}(\varepsilon^2)$, we obtain the following rule of thumb: We expect the QMI contribution to the generalization error to improve by a factor of $\mathcal{O}(\varepsilon^2)$ when using $\varepsilon$-LDP quantum channels.

Next, we turn our attention to the classical MI term in our generalization bounds.
Here, we assume that the learner $\alg$ uses an overall $\varepsilon$-LDP POVM. 
As the POVM $\{\ketbra{s}{s}\otimes E_s^\alg (w)\}_{s,w}$ is not LDP even if every $\{E^\alg_s(w)\}_w$ is, we make the simplifying assumption that the learner uses an $s$-independent $\varepsilon$-LDP POVM $\{E^\alg (w)\}_w$.
Then, we can write $I(S;W)=\E_{S\sim P^m}[D(P^\alg_{\mathsf{W}|S}\| P^\alg_{\mathsf{W}})]$, where $P^\alg_{\mathsf{W}|S}$ is the outcome distribution when measuring $\{E^\alg (w)\}_w$ on $\rho(S)$, and where $P^\alg_{\mathsf{W}}$ is the outcome distribution when measuring $\{E^\alg (w)\}_w$ on $\E_{\tilde{S}\sim P^m}[\rho(\tilde{S})]$.
As we assume $\{E^\alg (w)\}_w$ to be $\varepsilon$-LDP, \cite[Lemma 3.1]{angrisani2023quantum} now implies
\begin{align}
    I(S;W)
    &\leq 2e^\varepsilon (1-e^{-\varepsilon})^2 \E_{S\sim P^m}\left[D\left(\rho(S)\Big\| \E_{\tilde{S}\sim P^m}[\rho(\tilde{S})]\right)\right]\\
    &= 2e^\varepsilon (1-e^{-\varepsilon})^2\,  \chi\left(\{P^m(s),\rho(s)\}_{s\in\mathsf{Z}^m}\right) \, ,
\end{align}
where the second step used \Cref{eq:holevo-information-alternate-expression}.
So, the classical MI contribution to the generalization error is controlled by the Holevo information of the quantum data states.
Again, compared to a general learner, we expect the classical MI contribution to the generalization error to be smaller by a factor of $\mathcal{O}(\varepsilon^2)$ when using an ($s$-independent) $\varepsilon$-LDP POVM.

In this subsection, we have used our generalization guarantees to show that requiring a quantum learner $\alg$ to be $\varepsilon$-LDP -- both in terms of the channels and the measurement used -- is expected to be beneficial for generalization performance.
While our discussion here already highlights the benefits of an LDP assumption for generalization in a broad sense, it would be interesting to instantiate this insight for specific quantum learning tasks of interest.
Moreover, while our discussion focused on local differential privacy, it does not yet apply to differentially private quantum learners. The question of whether quantum differential privacy implies a version of mutual information stability useful for quantum generalization error bounds remains open.
Finally, we have demonstrated how to use local differential privacy to control the classical and quantum mutual information terms in our generalization bounds. Investigating the effect of $\varepsilon$-LDP assumptions on the Holevo information term would give further insight into the relevance of $\varepsilon$-LDP to generalization when processing entangled quantum data.

\subsection{Generalization bounds for inductive supervised quantum learning}

\cite{monras2017inductive} considered quantum learners described by multipartite quantum channels acting on quantum training data and on the input marginals of test states.
Then, they defined the expected risk as the expectation value of a loss observable measured on the output of the learner and on the output marginals of the test states.
We can formulate this in our framework as follows: 
We take a trivial classical instance space $\mathsf{Z}$ and consider the data Hilbert space
\begin{align}
    \mathcal{H}_{\mathrm{data}}
    = \mathcal{H}_{\mathrm{test}} \otimes \mathcal{H}_{\mathrm{train}}
    &= \mathcal{H}_{\mathrm{test,out}}\otimes \left(\mathcal{H}_{\mathrm{test,in}} \otimes \mathcal{H}_{\mathrm{train},\mathrm{in}}\right)\\
    &= (\mathbb{C}^{d_\mathrm{out}})^{\otimes m_{\mathrm{test}}}\otimes ((\mathbb{C}^{d_\mathrm{in}})^{\otimes m_{\mathrm{test}}}\otimes (\mathbb{C}^{d})^{\otimes m_{\mathrm{train}}})\, .
\end{align}
Then we take a quantum data state of the form
\begin{equation}
    \rho
    = \rho_{\mathrm{test}}^{\otimes m_{\mathrm{test}}}\otimes \rho_{\mathrm{train}}\, ,
\end{equation}
with $\rho_{\mathrm{test}}\in \mathcal{S}(\mathbb{C}^{d_\mathrm{out}}\otimes \mathbb{C}^{d_\mathrm{in}})$.
The goal of the learner is to use $\rho_{\mathrm{train}}$ to predict the mapping from the input to the output parts of the test systems.

We will consider quantum learners with hypothesis space $\mathcal{H}_{\mathrm{hyp}}\cong \mathcal{H}_{\mathrm{test,out}}$ that first perform a POVM $\{\mathbbm{1}_{\mathrm{test,in}}\otimes E^\alg (w)\}_{w\in\mathsf{W}}$ that act non-trivially only on the [$\mathrm{train,in}$] subsystem, and, depending on the observed outcome, apply quantum processing of the form $(\Lambda_w^\alg)^{\otimes m_{\mathrm{test}}}\otimes \operatorname{id}_{\mathrm{train,in}}$, with each $\Lambda_w^\alg :\mathcal{T}_1(\mathbb{C}^{d_{\mathrm{in}}})\to \mathcal{T}_1(\mathbb{C}^{d_{\mathrm{out}}})$ acting only on one of the [$\mathrm{test,in}$] subsystems.
To measure the performance of such a learner, we use a local loss observable of the form
\begin{equation}\label{eq:inductive-quantum-learning-loss}
    L(w)
    = \bar{L}
    = \frac{1}{m}\sum_{i=1}^{m_{\mathrm{test}}} \mathbbm{1}_{d_{\mathrm{out}},d_{\mathrm{out}}}^{\otimes (i-1)}\otimes L_0 \otimes \mathbbm{1}_{d_{\mathrm{out}},d_{\mathrm{out}}}^{\otimes (m-i)} \, ,
\end{equation}
where $L_0\in\mathcal{B}(\mathbb{C}^{d_{\mathrm{out}}}\otimes \mathbb{C}^{d_{\mathrm{out}}})$.
With these choices, the expected empirical risk
\begin{equation}
    \E_{W\sim P^\alg_{\mathsf{W}}}\left[ \tr\left[\bar{L} \sigma^\alg(W) \right] \right]
\end{equation}
reproduces what \cite{monras2017inductive} simply call expected risk, whereas
our expected true risk
\begin{equation}
    \E_{W\sim P^\alg_{\mathsf{W}}}\left[ \tr\left[\bar{L} (\rho_{\mathrm{test},\mathrm{out}}^{\otimes m_{\mathrm{test}}}\otimes\sigma^\alg(W)_{\mathrm{hyp}})\right] \right]
\end{equation}
does not have a direct counterpart in \cite{monras2017inductive}.
Note: While the inductive (i.e., ``measure-then-process'') learners that we consider here are not the most general form of quantum learner from $\rho$, we have a motivation for this focus. Namely, formulated in our language \cite[Theorem 1]{monras2017inductive} implies that, under a non-signalling assumption, quantum learners can approximately be assumed to be inductive. 
Here, the approximation is w.r.t.~the expected empirical and true risks arising from a loss observable as in \Cref{eq:inductive-quantum-learning-loss} and improves with growing $m_{\mathrm{test}}$ because of a quantum de Finetti type behavior.

We can apply our generalization guarantees in this setting as follows: 
Notice that, since the POVM act trivially on [$\mathrm{test,in}$] and since the quantum processing is a tensor power of single-system channels, both $\sigma^{\alg}(w)$ and $\rho_{\mathrm{test,out}}\otimes\sigma^\alg(w)_{\mathrm{hyp}}$ factorize according to the tensor product structure $\mathcal{H}_{\mathrm{test,out}}\otimes\mathcal{H}_{\mathrm{hyp}} \cong  (\mathbb{C}^{d_{\mathrm{out}}}\otimes \mathbb{C}^{d_{\mathrm{out}}})^{\otimes m_{\mathrm{test}}}$.
As the loss observable is local w.r.t.~the same factorization, \Cref{corollary:qmi-gen-bound-qmgf-and-cmgf-subgaussian-independent} applies and, simply using boundedness of $L$ to get sub-gaussianity, yields the generalization bound
\begin{equation}
    \left\lvert\operatorname{gen}_\rho(\alg)\right\vert
    \leq \sqrt{\frac{C\lVert L\rVert^2}{m_{\mathrm{test}}} \E_{W\sim P^\alg_{\mathsf{W}}} \left[\sum_{i=1}^{m_{\mathrm{test}}} I(\mathrm{test,out};\mathrm{hyp})_{\sigma^\alg_{i}(W)}\right]}\, .
\end{equation}
Thus, the framework of \cite{monras2017inductive} fits naturally into our formulation, and this way our framework gives rise to a notion of generalization error that can be analyzed quantum information-theoretically.
This, to the best of our knowledge, led us to the first generalization bound that applies to arbitrary inductive quantum learners.

\section*{Acknowledgements}

The authors thank Marco Fanizza, Fredrik Hellström, and Matteo Rosati for feedback on an earlier draft.
Moreover, the authors thank Mark M.~Wilde for pointing out the connection to measured relative entropy discussed in \Cref{remark:measured-relative-entropy} and Giacomo de Palma for pointing out the reference \cite{depalma2023wassersteinspinsystems}.
Finally, the authors thank the anonymous COLT 2024 reviewers for valuable feedback that in particular helped improve the presentation.
MCC is supported by a DAAD PRIME fellowship. 
The Institute for Quantum Information and Matter is an NSF Physics Frontiers Center. 
TG is supported by UKRI Future Leaders Fellowship MR/S031545/1, EPSRC New Horizons Grant EP/X018180/1, and EPSRC Robust and Reliable Quantum Computing Grant EP/W032635/1. 
DSF is supported by France 2030 under the French National Research Agency award number “ANR-22-PNCQ-0002”. CR acknowledges financial support from the ANR project QTraj (ANR-20-CE40-0024-01) of the French National Research Agency (ANR), as well as from the Humboldt Foundation. SS is supported by a Royal Commission for the Exhibition of 1851 Research Fellowship.


\clearpage
\printbibliography

\appendix

\section{Auxiliary Results and Proofs}\label{appendix:proofs}
\numberwithin{equation}{section}

\begin{lemma}[{Restatement of \cite[Theorem 8.1]{depalma2023wassersteinspinsystems}}]\label{lemma:lipschitz-observables-subgaussian}
    Let $H$ be a Hermitian $m$-qudit observable. 
    Let $\rho = \bigotimes_{i=1}^m \rho_i$ be an $m$-fold tensor product of qudit states.
    Then, for any $\lambda\in\mathbb{R}$,
    \begin{equation}
        \tr[e^{\log (\rho) + \lambda H}]
        \leq e^{\frac{\lambda^2 m \lVert H\rVert_{\mathrm{Lip}}^2 }{2}}\, .
    \end{equation}
\end{lemma}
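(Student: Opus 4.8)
This is precisely \cite[Theorem 8.1]{depalma2023wassersteinspinsystems}, so strictly speaking one may simply cite it; my plan for a self-contained argument is to recover it in two stages — a variational reduction of the matrix-exponential bound to a geometric inequality between the quantum Wasserstein-$1$ distance and the quantum relative entropy, followed by a tensorization that establishes that inequality for product states. Throughout I would assume $H$ is centered so that $\tr[\rho H]=0$ (for general $H$ one picks up an extra additive term $\lambda\tr[\rho H]$ on the right, exactly as in \eqref{eq:weakened-qmgf-assumption}; since $\lVert H\rVert_{\mathrm{Lip}}=\lVert H+c\,\mathbbm{1}\rVert_{\mathrm{Lip}}$ this is harmless).

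\emph{Variational reduction.} First I would invoke the Gibbs variational principle, which is the Fenchel–Legendre dual of Petz's characterization in \Cref{lemma:variational-quantum-relative-entropy}: for every $\lambda\in\mathbb{R}$,
\begin{equation}
    \log\tr\!\left[e^{\log\rho+\lambda H}\right]
    = \sup_{\omega\in\mathcal{S}(\mathcal{H})}\Bigl\{\lambda\tr[\omega H]-D(\omega\|\rho)\Bigr\}.
\end{equation}
Since $\omega-\rho$ is traceless, the defining duality of the quantum Lipschitz constant gives $\tr[(\omega-\rho)H]\le\lVert H\rVert_{\mathrm{Lip}}\,\lVert\omega-\rho\rVert_{W_1}$, and hence (using $\tr[\rho H]=0$)
\begin{equation}
    \log\tr\!\left[e^{\log\rho+\lambda H}\right]
    \le \sup_{\omega\in\mathcal{S}(\mathcal{H})}\Bigl\{\lambda\lVert H\rVert_{\mathrm{Lip}}\,\lVert\omega-\rho\rVert_{W_1}-D(\omega\|\rho)\Bigr\}.
\end{equation}
It then suffices to establish the transportation-cost inequality $\lVert\omega-\rho\rVert_{W_1}^2\le 2m\,D(\omega\|\rho)$ for $\rho=\bigotimes_{i=1}^m\rho_i$: substituting $D(\omega\|\rho)\ge\lVert\omega-\rho\rVert_{W_1}^2/(2m)$ and writing $t=\lVert\omega-\rho\rVert_{W_1}\ge 0$, the bracketed quantity is at most $\lambda\lVert H\rVert_{\mathrm{Lip}}t-t^2/(2m)$, whose supremum over $t\ge 0$ is $\lambda^2 m\lVert H\rVert_{\mathrm{Lip}}^2/2$ (and is $0$ when $\lambda<0$); taking $\sup$ over $\omega$ and exponentiating finishes the proof.

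\emph{Tensorization — the main obstacle.} What remains, and what I expect to be the hard part, is the inequality $\lVert\omega-\bigotimes_{i=1}^m\rho_i\rVert_{W_1}^2\le 2m\,D\bigl(\omega\,\big\|\,\bigotimes_{i=1}^m\rho_i\bigr)$, a quantum $W_1$ analogue of Marton's transportation inequality for product measures; this step is genuinely non-commutative. The classical proof conditions on the coordinates one at a time, but a general $\omega$ admits no canonical "conditional states," so instead I would tensorize directly: use that the quantum $W_1$ distance decomposes additively over the $m$ tensor factors (each single-site discrepancy contributes on its own), the chain rule for relative entropy against a product state, and, site by site, the single-qudit estimate in which $\lVert\cdot\rVert_{W_1}$ reduces to the trace distance so that Pinsker's inequality applies, reassembling the $m$ contributions by Cauchy–Schwarz. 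Carefully controlling the cross terms produced by non-commutativity in this reassembly is exactly the content of \cite[Section 8]{depalma2023wassersteinspinsystems}; in the present paper it is entirely legitimate to present the two-stage reduction above and then invoke that reference for the transportation inequality itself.
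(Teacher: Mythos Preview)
The paper does not prove this lemma at all; it is listed purely as a restatement of \cite[Theorem 8.1]{depalma2023wassersteinspinsystems} and used as a black box, so your opening remark that one may simply cite it is exactly what the paper does. Your additional self-contained sketch---Gibbs variational principle to reduce to a quantum transportation-cost inequality $\lVert\omega-\rho\rVert_{W_1}^2\le 2m\,D(\omega\|\rho)$ for product $\rho$, then tensorization \`a la Marton for that inequality---is the standard Bobkov--G\"otze route and is indeed how the cited reference proceeds, so the strategy is sound.

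One small slip worth fixing: for $\lambda<0$ the step $\lambda\tr[(\omega-\rho)H]\le\lambda\lVert H\rVert_{\mathrm{Lip}}\,t$ goes the wrong way (multiplying an upper bound by a negative number flips the inequality), so the parenthetical ``and is $0$ when $\lambda<0$'' is not the right justification. The clean fix is to use the two-sided Lipschitz duality $\lvert\tr[(\omega-\rho)H]\rvert\le\lVert H\rVert_{\mathrm{Lip}}\,\lVert\omega-\rho\rVert_{W_1}$, giving $\lambda\tr[\omega H]\le\lvert\lambda\rvert\,\lVert H\rVert_{\mathrm{Lip}}\,t$ for all $\lambda$; the quadratic optimization over $t\ge 0$ then yields $\lvert\lambda\rvert^2 m\lVert H\rVert_{\mathrm{Lip}}^2/2=\lambda^2 m\lVert H\rVert_{\mathrm{Lip}}^2/2$ uniformly in the sign of $\lambda$.
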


\begin{proof}[Proof of \Cref{corollary:qmi-gen-bound-qmgf-and-cmgf-subgaussian-independent}]
    First, note that the sub-gaussianity assumption on the $L_i(z_i, w)$ implies
    \small
    \begin{align}
        &\log \tr\left[ \left(\rho_{\mathrm{test}}(s) \otimes \sigma^{\alg}_{\mathrm{hyp}}(s,w)\right) \cdot e^{\lambda\left(L(s,w)-\tr[L(s,w) \left(\rho_{\mathrm{test}}(s) \otimes \sigma^{\alg}_{\mathrm{hyp}}(s,w)\right)]\mathds{1}_{\mathrm{test},\mathrm{hyp}}\right)}  \right]\\
        &= \sum_{i=1}^m \log \tr\left[ \left(\rho_{\mathrm{test},i}(Z_i) \otimes \sigma^{\alg}_{\mathrm{hyp},i}(z_i,w)\right) \cdot e^{\tfrac{\lambda}{m}\left(L_i(z_i,w)-\tr[L_i(z_i,w) \left(\rho_{\mathrm{test},i}(Z_i) \otimes \sigma^{\alg}_{\mathrm{hyp},i}(z_i,w)\right)]\mathds{1}_{\mathrm{test},\mathrm{hyp},i}\right)}  \right]\\
        &\leq \sum_{i=1}^m \frac{\alpha_i^2 \lambda^2}{2 m^2} .
    \end{align}
    \normalsize
    Therefore, $L(s,w)$ is $\alpha$-sub-gaussian w.r.t.~$\rho_{\mathrm{test}}(s) \otimes \sigma^{\alg}_{\mathrm{hyp}}(s,w)$ with sub-gaussianity parameter $\alpha = m^{-1}\sqrt{\sum_{i=1}^m \alpha_i^2}$, for every $(s,w)\in\mathsf{Z}^m\times \mathsf{W}$.
    
    Using the sub-gaussianity assumption on the $\tr[L_i(Z_i, w) \left( \rho_{\mathrm{test},i}(Z_i) \otimes \sigma^{\alg}_{\mathrm{hyp},i}(Z_i,w)\right)]$, we see that
    \small
    \begin{align}
        &\log\E_{S\sim P^m}\left[e^{\lambda (\tr[L(S,w)\left(\rho_{\mathrm{test}}(S) \otimes \sigma^{\alg}_{\mathrm{hyp}}(S,w)\right)] - \E_{S\sim P^m}[\tr[L(S,w)\left(\rho_{\mathrm{test}}(S) \otimes \sigma^{\alg}_{\mathrm{hyp}}(S,w)\right)]])}\right]\\
        &= \sum_{i=1}^m \log \E_{Z_i\sim P}\left[e^{\tfrac{\lambda}{m} (\tr[L_i(Z_i, w)\left(\rho_{\mathrm{test},i}(Z_i) \otimes \sigma^{\alg}_{\mathrm{hyp},i}(Z_i,w)\right)] - \E_{Z_i\sim P}[\tr[L_i(Z_i, w)\left(\rho_{\mathrm{test},i}(Z_i) \otimes \sigma^{\alg}_{\mathrm{hyp},i}(Z_i,w)\right)]])}\right]\\
        &\leq \sum_{i=1}^m \frac{\beta_i^2 \lambda^2}{2 m^2} .
    \end{align}
    \normalsize
    In other words, $\tr[L(S,w)\left(\rho_{\mathrm{test}}(S) \otimes \sigma^{\alg}_{\mathrm{hyp}}(S,w)\right)]$, with $S\sim P^m$, is $\beta$-sub-gaussian with sub-gaussianity parameter $\beta = m^{-1}\sqrt{\sum_{i=1}^m \beta^2}$, for every $w\in\mathsf{W}$.
    Therefore, we can apply \Cref{corollary:qmi-gen-bound-qmgf-and-cmgf-subgaussian} and obtain the claimed bound, once we use that $\sigma^{\alg}(s,w) = \bigotimes_{i=1}^m \sigma^{\alg}_i(z_i,w) $ implies $I(\mathrm{test};\mathrm{hyp})_{\sigma^{\alg}(s,w)} = \sum_{i=1}^m I(\mathrm{test};\mathrm{hyp})_{\sigma^{\alg}_i(z_i,w)}$.
\end{proof}

\begin{proof}[Proof of \Cref{corollary:pac-learning-states-excess-risk-bound}]
    On the one hand, we have
    \begingroup
    \allowdisplaybreaks
    \begin{align}
        &\left\lvert\E_{\bar{\hat{w}}} \E_{\bar{Z}_{m+1}} \left[ \left\lvert  \tr[E(\bar{Z}_{m+1})\rho_0] - \tr[E(\bar{Z}_{m+1})\rho_0(\bar{\hat{w}})] \right\rvert \right] - R_\rho (\alg) \right\rvert\\
        &\leq \E_{\bar{Z}_{m+1}}\E_{\bar{C}_\ell^{(m+1)}\rvert \bar{Z}_{m+1}} \left[\left\lvert   \tr[E(\bar{Z}_{m+1})\rho_0] - \frac{1}{m_{\mathrm{test}}}\sum_{\ell =1}^{m_{\mathrm{test}}} \bar{C}_\ell^{(m+1)} \right\rvert\right]\\
        &\leq \frac{C}{\sqrt{m_{\mathrm{test}}}} ,
    \end{align}
    \endgroup
    where the first step is an application of the reverse triangle inequality and the second step is via first conditioning on $\bar{Z}_{m+1}$ and then using Hoeffding-based sub-gaussianity, as already argued in \Cref{section:pac-learning-quantum-states}.
    On the other hand, we have 
    \begingroup
    \allowdisplaybreaks
    \begin{align}
        \hat{R}_\rho (\alg)
        &= \E_{S\sim P^{2m}} \E_{C_\ell^{(i)}\rvert S} \E_{\hat{W}}\left[ \hat{R}_{S_{(m+1):2m}, C_\ell^{(i)} }^{\mathrm{test}} (\hat{W}) \right]\\
        &= \E_{S\sim P^{2m}} \E_{C_\ell^{(i)}\rvert S} \E_{\hat{W}}\left[\frac{1}{m}\sum_{i=m+1}^{2m} \left\lvert \tr[ E(Z_i) \rho_0( \hat{W})] - \frac{1}{m_{\mathrm{test}}}\sum_{\ell =1}^{m_{\mathrm{test}}} C_\ell^{(i)}\right\rvert \right]\\
        &\leq \E_{S\sim P^{2m}} \E_{B_\ell^{(i)}\rvert S}\left[\frac{1}{m}\sum_{i=m+1}^{2m} \left\lvert \tr[ E(Z_i) \rho_0( \hat{W})] - \frac{1}{m_{\mathrm{train}}}\sum_{\ell =1}^{m_{\mathrm{train}}} B_\ell^{(i)}\right\rvert \right] \\
        &\hphantom{\leq~} + \E_{S\sim P^{2m}} \E_{B_\ell^{(i)}\rvert S}\left[\frac{1}{m}\sum_{i=m+1}^{2m} \left\lvert \tr[ E(Z_i) \rho_0 ] - \frac{1}{m_{\mathrm{train}}}\sum_{\ell =1}^{m_{\mathrm{train}}} B_\ell^{(i)}\right\rvert \right]\\
        &\hphantom{\leq~} + \E_{S\sim P^{2m}} \E_{C_\ell^{(i)}\rvert S}\left[\frac{1}{m}\sum_{i=m+1}^{2m} \left\lvert \tr[ E(Z_i) \rho_0] - \frac{1}{m_{\mathrm{test}}}\sum_{\ell =1}^{m_{\mathrm{test}}} C_\ell^{(i)}\right\rvert \right]\\
        &= \E_{S\sim P^{2m}} \E_{B_\ell^{(i)}\rvert S}\left[\inf_{w\in\mathsf{W}_1 } \frac{1}{m}\sum_{i=m+1}^{2m} \left\lvert \tr[ E(Z_i) \rho_0( w)] - \frac{1}{m_{\mathrm{train}}}\sum_{\ell =1}^{m_{\mathrm{train}}} B_\ell^{(i)}\right\rvert \right] \\
        &\hphantom{\leq~} + \E_{S\sim P^{2m}} \E_{B_\ell^{(i)}\rvert S}\left[\frac{1}{m}\sum_{i=m+1}^{2m} \left\lvert \tr[ E(Z_i) \rho_0 ] - \frac{1}{m_{\mathrm{train}}}\sum_{\ell =1}^{m_{\mathrm{train}}} B_\ell^{(i)}\right\rvert \right]\\
        &\hphantom{\leq~} + \E_{S\sim P^{2m}} \E_{C_\ell^{(i)}\rvert S}\left[\frac{1}{m}\sum_{i=m+1}^{2m} \left\lvert \tr[ E(Z_i) \rho_0] - \frac{1}{m_{\mathrm{test}}}\sum_{\ell =1}^{m_{\mathrm{test}}} C_\ell^{(i)}\right\rvert \right]\\
        &\leq  \E_{S\sim P^{2m}} \left[\inf_{w\in\mathsf{W}_1 } \frac{1}{m}\sum_{i=m+1}^{2m} \left\lvert \tr[ E(Z_i) \rho_0( w)] - \tr[ E(Z_i) \rho_0]\right\rvert \right]\\
        &\hphantom{\leq~} + 2\E_{S\sim P^{2m}} \E_{B_\ell^{(i)}\rvert S}\left[\frac{1}{m}\sum_{i=m+1}^{2m} \left\lvert \tr[ E(Z_i) \rho_0 ] - \frac{1}{m_{\mathrm{train}}}\sum_{\ell =1}^{m_{\mathrm{train}}} B_\ell^{(i)}\right\rvert \right]\\
        &\hphantom{\leq~} + \E_{S\sim P^{2m}} \E_{C_\ell^{(i)}\rvert S}\left[\frac{1}{m}\sum_{i=m+1}^{2m} \left\lvert \tr[ E(Z_i) \rho_0] - \frac{1}{m_{\mathrm{test}}}\sum_{\ell =1}^{m_{\mathrm{test}}} C_\ell^{(i)}\right\rvert \right]\\
        &\leq  \E_{S\sim P^{2m}} \left[\inf_{w\in\mathsf{W}_1 } \frac{1}{m}\sum_{i=m+1}^{2m} \left\lvert \tr[ E(Z_i) \rho_0( w)] - \tr[ E(Z_i) \rho_0]\right\rvert \right]\\
        &\hphantom{\leq~} + C\left( \frac{1}{\sqrt{m_{\mathrm{train}}}} + \frac{1}{\sqrt{m_{\mathrm{test}}}}\right).
    \end{align}
    \endgroup
    Here, the first step is plugging in the definition of $\hat{R}_{S_{(m+1):2m}, C_\ell^{(i)} }(\cdot)$, the second step holds by applying the triangle inequality twice, the third step uses the definition of $\hat{W}$, the fourth step is one more triangle inequality, and the final step follows from Hoeffding-type sub-gaussianity bounds.
    
    To finish the proof, we need the following fact:
    \begin{claim}\label{claim:auxiliary-covering-claim}
        With the notation introduced above,
        \begin{align}
            &\E_{S\sim P^{2m}} \left[\inf_{w\in\mathsf{W}_1 } \frac{1}{m}\sum_{i=m+1}^{2m} \left\lvert \tr[ E(Z_i) \rho_0( w)] - \tr[ E(Z_i) \rho_0]\right\rvert \right]\\ 
            &\leq \inf_{w\in\mathsf{W}} \E_{\bar{Z}_{m+1}} \left[ \left\lvert  \tr[E(\bar{Z}_{m+1})\rho_0] - \tr[E(\bar{Z}_{m+1})\rho_0(w)] \right\rvert \right] + \tilde{\varepsilon} + C\sqrt{\frac{\log (\lvert\mathsf{W}_1\rvert )}{m}} .
        \end{align}
        where $C>0$ is some positive constant.
    \end{claim}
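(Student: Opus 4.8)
The plan is to condition on the first batch $\mathbf{Z}:=(Z_1,\dots,Z_m)$ and exploit that, given $\mathbf{Z}$, the net $\mathsf{W}_1=\mathsf{W}_1(\mathbf{Z})$ is a fixed finite set of cardinality $\le\lvert\mathsf{W}_1\rvert$ while $(Z_{m+1},\dots,Z_{2m})$ are fresh i.i.d.\ $P$-samples independent of it. Write $g_w(z):=\lvert\tr[E(z)(\rho_0(w)-\rho_0)]\rvert\in[0,1]$, so the left-hand side is $\E_{S\sim P^{2m}}[\inf_{w\in\mathsf{W}_1}\tfrac1m\sum_{i=m+1}^{2m}g_w(Z_i)]$ and the leading term on the right is $\inf_{w\in\mathsf{W}}\E_{\bar Z}[g_w(\bar Z)]$. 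Fixing a (near-)minimizer $w^\star\in\mathsf{W}$ of $\E_{\bar Z}[g_w(\bar Z)]$ and letting the suboptimality gap tend to $0$ at the end, I would split $\inf_{w\in\mathsf{W}_1}\tfrac1m\sum_{i>m}g_w(Z_i)\le \inf_{w\in\mathsf{W}_1}\E_{\bar Z}[g_w]+\sup_{w\in\mathsf{W}_1}\bigl(\tfrac1m\sum_{i>m}g_w(Z_i)-\E_{\bar Z}[g_w]\bigr)$ and bound the two summands separately.

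For the deviation term, conditioned on $\mathbf{Z}$ the variables $g_w(Z_i)-\E_{\bar Z}[g_w]$ are i.i.d., centered, and supported in an interval of length $1$, hence $\tfrac12$-sub-gaussian by Hoeffding's lemma; therefore $\tfrac1m\sum_{i>m}(g_w(Z_i)-\E_{\bar Z}[g_w])$ is $(2\sqrt m)^{-1}$-sub-gaussian, and the standard maximal inequality over the at most $\lvert\mathsf{W}_1\rvert$ elements of $\mathsf{W}_1$ gives $\E_{Z_{m+1:2m}}[\sup_{w\in\mathsf{W}_1}(\cdots)]\le\sqrt{\log\lvert\mathsf{W}_1\rvert/(2m)}$, which after taking $\E_{\mathbf{Z}}$ contributes (part of) the claimed $C\sqrt{\log\lvert\mathsf{W}_1\rvert/m}$.

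For the approximation term, I would use that $\mathsf{W}_1$ is a $\tilde\varepsilon$-cover of $\mathsf{W}$ in the empirical seminorm $\Norm{\cdot}_{2,\{Z_j\}_{j=1}^m}$: there is $w_1=w_1(\mathbf{Z})\in\mathsf{W}_1$ with $\Norm{w_1-w^\star}_{2,\{Z_j\}_{j=1}^m}\le\tilde\varepsilon$. Then $\inf_{w\in\mathsf{W}_1}\E_{\bar Z}[g_w]\le\E_{\bar Z}[g_{w_1}]$, and applying the reverse triangle inequality $\lvert g_{w_1}(z)-g_{w^\star}(z)\rvert\le\lvert\tr[E(z)(\rho_0(w_1)-\rho_0(w^\star))]\rvert$ pointwise followed by $\E[\lvert\cdot\rvert]\le\sqrt{\E[\lvert\cdot\rvert^2]}$ yields $\E_{\bar Z}[g_{w_1}]\le\E_{\bar Z}[g_{w^\star}]+\Norm{w_1-w^\star}_{2,P}$, with $\Norm{\cdot}_{2,P}$ the population $L^2$ seminorm. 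It then remains to trade $\Norm{w_1-w^\star}_{2,P}$ for $\Norm{w_1-w^\star}_{2,\{Z_j\}_{j=1}^m}\le\tilde\varepsilon$, i.e.\ to show $\E_{\mathbf{Z}}[\Norm{w_1-w^\star}_{2,P}]\le\tilde\varepsilon+(\text{small})$.

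This last step is the main obstacle, since the cover was built on the first batch and $w_1$ may overfit it, so empirical and population seminorms must be reconciled. Using $\lvert\sqrt a-\sqrt b\rvert\le\sqrt{\lvert a-b\rvert}$ and Jensen, I would reduce it to a uniform law of large numbers: $\E_{\mathbf{Z}}[\Norm{w_1-w^\star}_{2,P}]\le\tilde\varepsilon+\sqrt{\E_{\mathbf{Z}}\sup_{w,w'\in\mathsf{W}}\lvert\tfrac1m\sum_{j=1}^m h_{w,w'}(Z_j)-\E_{\bar Z}h_{w,w'}(\bar Z)\rvert}$, where $h_{w,w'}(z)=\lvert\tr[E(z)(\rho_0(w)-\rho_0(w'))]\rvert^2\in[0,1]$. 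Since $h_{w,w'}=\phi\circ(\tr[E(\cdot)(\rho_0(w)-\rho_0(w'))])$ with $\phi(t)=t^2$ being $2$-Lipschitz on $[-1,1]$, and since the base class $\{z\mapsto\tr[E(z)\rho_0(w)]\}_w\subseteq\mathcal{F}_{\mathcal{S}(\mathbb{C}^d)}$ has $\operatorname{fat}(\mathcal{F}_{\mathcal{S}(\mathbb{C}^d)},\gamma)\le C\log(d)/\gamma^2$ by \cite[Corollary 2.7]{Aaronson2007}, a symmetrization/contraction argument together with the fat-shattering bound on covering numbers (already invoked in \Cref{section:pac-learning-quantum-states}) bounds this uniform deviation by $\tilde{\mathcal{O}}(\operatorname{poly}(\log d)\cdot m^{-\Omega(1)})$, which is absorbed into $C\sqrt{\log\lvert\mathsf{W}_1\rvert/m}$ up to the constants and logarithmic factors hidden by the $\tilde{\mathcal{O}}$ in \Cref{corollary:pac-learning-states-excess-risk-bound} (recall $\log\lvert\mathsf{W}_1\rvert=\Theta(\tilde\varepsilon^{-2}\log d\,\log(1/\tilde\varepsilon))$). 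Assembling the three bounds and letting $w^\star$ run along an infimizing sequence gives the claim; every step other than the empirical-versus-population seminorm reconciliation is routine maximal-inequality and triangle-inequality bookkeeping.
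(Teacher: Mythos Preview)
Your approach differs from the paper's and is more involved than needed. The paper's first move is your ``condition on $\mathbf{Z}$'' idea applied more tightly: since $\mathsf{W}_1$ depends only on $(Z_1,\dots,Z_m)$, one has $\E_{Z_{m+1:2m}}\bigl[\inf_{w\in\mathsf{W}_1}\tfrac1m\sum_{i>m}g_w(Z_i)\bigr]\le\inf_{w\in\mathsf{W}_1}R(w)$ (with $R(w):=\E_{\bar Z}[g_w]$) directly by $\E[\inf]\le\inf\E$, \emph{without} your second-batch deviation term---which is nonnegative in expectation and hence pure overhead. For the remaining comparison $\E_{\mathbf{Z}}\bigl[\inf_{w\in\mathsf{W}_1}R(w)\bigr]-\inf_{w\in\mathsf{W}}R(w)$, the paper sidesteps your empirical-vs.-population obstacle by bridging through the \emph{first}-batch empirical risk $\hat R(w):=\tfrac1m\sum_{j\le m}g_w(Z_j)$: because the cover $\mathsf{W}_1$ is built in precisely this seminorm, the approximation bound $\hat R(\hat w_{\mathsf{W}_1})-\hat R(\hat w_{\mathsf{W}})\le\tilde\varepsilon$ is immediate, and the only deviation needed is $\sup_{w\in\mathsf{W}_1}\bigl(R(w)-\hat R(w)\bigr)$, a finite-class maximal inequality yielding exactly $C\sqrt{\log\lvert\mathsf{W}_1\rvert/m}$. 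Your ULLN route over $\{h_{w,w'}\}_{w,w'\in\mathsf{W}}$ is plausible but re-invokes the fat-shattering machinery for an infinite class and delivers the claimed form only up to extra logarithmic factors, so it does not literally prove the claim as stated. (One genuine trade-off: the paper's deviation is on the same batch that built $\mathsf{W}_1$, so its union bound over the data-dependent set $\mathsf{W}_1$ glosses over a dependence subtlety that your second-batch deviation avoids---but you then pay for that cleanliness with a ULLN over all of $\mathsf{W}$.)
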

    \begin{proof}
        See below.
    \end{proof}
    Combining \Cref{claim:auxiliary-covering-claim} with our previous upper bound on $\hat{R}_\rho (\alg)$, we have shown
    \begin{align}
        \hat{R}_\rho (\alg)
        &\leq \inf_{w\in\mathsf{W}} \E_{\bar{Z}_{m+1}} \left[ \left\lvert  \tr[E(\bar{Z}_{m+1})\rho_0] - \tr[E(\bar{Z}_{m+1})\rho_0(w)] \right\rvert \right] + \tilde{\varepsilon}\\
        &\hphantom{=~}+ C\left( \sqrt{\frac{\log (\lvert\mathsf{W}_1\rvert )}{m}} + \frac{1}{\sqrt{m_{\mathrm{train}}}} + \frac{1}{\sqrt{m_{\mathrm{test}}}}\right).
    \end{align}
    Finally, once we recall the bound $\lvert \mathsf{W}_1\rvert \leq \left(\nicefrac{2}{\tilde{\varepsilon}}\right)^{C\log(d)/\tilde{\varepsilon}^2}$ on the size of the covering net, we can bring together our upper bound on $R_\rho (\alg) $, our upper bound on $\hat{R}_\rho (\alg)$, and our generalization error bound to obtain
    \begin{align}
        \operatorname{excess}_\rho (\alg)
        &\leq \operatorname{gen}_\rho(\alg) + \tilde{\varepsilon} + \mathcal{O}\left( \sqrt{\frac{\log (\lvert\mathsf{W}_1\rvert )}{m}} + \sqrt{\frac{\log (\lvert \mathsf{W}_1\rvert)}{m_{\mathrm{train}}}} + \frac{1}{\sqrt{m_{\mathrm{test}}}}\right)\\
        &\leq \tilde{\varepsilon} + \tilde{\mathcal{O}}\left(\sqrt{\frac{\log (d)}{m \tilde{\varepsilon}^2}} + \frac{1}{\sqrt{m_{\mathrm{train}}}} + \frac{1}{\sqrt{m_{\mathrm{test}}}}\right) ,
    \end{align}
    as claimed.
\end{proof}

\begin{proof}[Proof of \Cref{claim:auxiliary-covering-claim}]
    First recall that the chosen covering net $\mathsf{W}_1$ depends only on $Z_1,\ldots,Z_m$, so that we can exchange $\E_{Z_{m+1},\ldots,Z_{2m}\sim P^{m}}$ and $\inf_{w\in\mathsf{W}_1 }$ to obtain the bound
    \begin{align}
        &\E_{S\sim P^{2m}} \left[\inf_{w\in\mathsf{W}_1 } \frac{1}{m}\sum_{i=m+1}^{2m} \left\lvert \tr[ E(Z_i) \rho_0( w)] - \tr[ E(Z_i) \rho_0]\right\rvert \right]\\
        &\leq \E_{(Z_1,\ldots,Z_m)\sim P^{m}} \left[\inf_{w\in\mathsf{W}_1 } \E_{(Z_{m+1},\ldots,Z_{2m})\sim P^{m}}\left[\frac{1}{m}\sum_{i=m+1}^{2m} \left\lvert \tr[ E(Z_i) \rho_0( w)] - \tr[ E(Z_i) \rho_0]\right\rvert \right] \right]\\
        &=\E_{(Z_1,\ldots,Z_m)\sim P^{m}}\left[\inf_{w\in\mathsf{W}_1 } \E_{\bar{Z}_{m+1}\sim P} \left[\left\lvert \tr[ E(\bar{Z}_{m+1}) \rho_0( w)] - \tr[ E(\bar{Z}_{m+1}) \rho_0]\right\rvert \right] \right].
    \end{align}
    Next, we define the following pieces of notation for the true risk with perfectly accurately evaluated quantum expectation values
    \begin{equation}
        R(w)
        :=\E_{\bar{Z}_{m+1}\sim P} \left[\left\lvert \tr[ E(\bar{Z}_{m+1}) \rho_0( w)] - \tr[ E(\bar{Z}_{m+1}) \rho_0]\right\rvert \right],
    \end{equation}
    the empirical risk with perfectly accurately evaluated quantum expectation values
    \begin{equation}
        \hat{R}_{s}(w)
        := \frac{1}{m}\sum_{j=1}^m \lvert \tr[E (z_j)\rho_0 (w)] - \tr[E (z_j)\rho_0]\rvert,
    \end{equation}
    and the corresponding true risk minimizers
    \begin{equation}
        w_{\mathsf{W}_1} \in \operatorname{argmin}_{w\in\mathsf{W}_1} R(w),~
        w_{\mathsf{W}} \in \operatorname{argmin}_{w\in\mathsf{W}} R(w),
    \end{equation}
    and empirical risk minimizers
    \begin{equation}
        \hat{w}_{\mathsf{W}_1}\in \operatorname{argmin}_{w\in\mathsf{W}_1} \hat{R}_{s}(w),~
        \hat{w}_{\mathsf{W}}\in \operatorname{argmin}_{w\in\mathsf{W}} \hat{R}_{s}(w).
    \end{equation}
    With this, we can rewrite and bound
    \begingroup
    \allowdisplaybreaks
    \begin{align}
        &\E_{(Z_1,\ldots,Z_m)\sim P^{m}}\left[\inf_{w\in\mathsf{W}_1 } \E_{\bar{Z}_{m+1}\sim P} \left[\left\lvert \tr[ E(\bar{Z}_{m+1}) \rho_0( w)] - \tr[ E(\bar{Z}_{m+1}) \rho_0]\right\rvert \right] \right] \\
        &\hphantom{=~~}- \inf_{w\in\mathsf{W}} \E_{\bar{Z}_{m+1}} \left[ \left\lvert  \tr[E(\bar{Z}_{m+1})\rho_0] - \tr[E(\bar{Z}_{m+1})\rho_0(w)] \right\rvert \right]\\
        &= \E_{(Z_1,\ldots,Z_m)\sim P^{m}}\left[ R(w_{\mathsf{W}_1})\right] - R(w_{\mathsf{W}})\\
        &= \E_{(Z_1,\ldots,Z_m)\sim P^{m}}\left[ R(w_{\mathsf{W}_1}) - R(w_{\mathsf{W}})\right]\\
        &= \E_{(Z_1,\ldots,Z_m)\sim P^{m}}\left[ R(w_{\mathsf{W}_1}) - R(w_{\mathsf{W}})\right] + \E_{(Z_1,\ldots,Z_m)\sim P^{m}}\left[ \hat{R}(\hat{w}_{\mathsf{W}_1}) - \hat{R}(w_{\mathsf{W}})\right] \\
        &\hphantom{=~}- \E_{(Z_1,\ldots,Z_m)\sim P^{m}}\left[ \hat{R}(\hat{w}_{\mathsf{W}_1}) - \hat{R}(w_{\mathsf{W}})\right]\\
        &= \E_{(Z_1,\ldots,Z_m)\sim P^{m}}\left[ R(w_{\mathsf{W}_1}) -\hat{R}(\hat{w}_{\mathsf{W}_1})\right] + \E_{(Z_1,\ldots,Z_m)\sim P^{m}}\left[ \hat{R}(\hat{w}_{\mathsf{W}_1}) - \hat{R}(w_{\mathsf{W}})\right] \\
        &\hphantom{=~}+ \underbrace{\E_{S\sim P^{m}}\left[\hat{R}(w_{\mathsf{W}}) - R(w_{\mathsf{W}})  \right]}_{=0}\\
        &= \underbrace{\E_{(Z_1,\ldots,Z_m)\sim P^{m}}\left[ R(w_{\mathsf{W}_1})-R(\hat{w}_{\mathsf{W}_1})\right]}_{\leq 0} +  \E_{(Z_1,\ldots,Z_m)\sim P^{m}}\left[R(\hat{w}_{\mathsf{W}_1}) -\hat{R}(\hat{w}_{\mathsf{W}_1})\right] \\
        &\hphantom{=~} + \underbrace{\E_{(Z_1,\ldots,Z_m)\sim P^{m}}\left[ \hat{R}(\hat{w}_{\mathsf{W}}) - \hat{R}(w_{\mathsf{W}})\right]}_{\leq 0}+ \E_{(Z_1,\ldots,Z_m)\sim P^{m}}\left[ \hat{R}(\hat{w}_{\mathsf{W}_1}) - \hat{R}(\hat{w}_{\mathsf{W}}) \right] \\
        &\leq \E_{(Z_1,\ldots,Z_m)\sim P^{m}}\left[ \sup_{w\in\mathsf{W}_1} R(w) -\hat{R}(w)\right] + \E_{(Z_1,\ldots,Z_m)\sim P^{m}}\left[ \Norm{\hat{w}_{\mathsf{W}_1} - \hat{w}_{\mathsf{W}}}_{1,\{Z_j\}_{j=1}^m} \right] \\
        &\leq \E_{(Z_1,\ldots,Z_m)\sim P^{m}}\left[ \sup_{w\in\mathsf{W}_1} R(w) -\hat{R}(w)\right] + \tilde{\varepsilon}.
    \end{align}
    \endgroup
    Here, the second-to-last step used a reverse triangle inequality, and the final step holds because $\mathsf{W}_1$ is by definition a $\tilde{\varepsilon}$-covering net for $\mathsf{W}$ w.r.t.~$\Norm{\cdot}_{2,\{Z_j\}_{j=1}^m}$ and thus also w.r.t.~$\Norm{\cdot}_{1,\{Z_j\}_{j=1}^m}$.
    Next, observe that, for any fixed $w\in\mathsf{W}_1$, the random variable $R(w) - \hat{R}(w)$ is an average of $m$ i.i.d.~centered $2$-bounded random variables and thus is $(\tfrac{C}{\sqrt{m}})$-sub-gaussian by Hoeffding's Lemma. 
    Using the equivalence of sub-gaussianity in terms of MGF bounds and tail bounds \cite[Proposition 2.5.2]{vershynin2018high-dimensional}, this can now be combined with a union bound over $\mathsf{W}_1$ to see that the random variable $\sup_{w\in\mathsf{W}_1} R(w) - \hat{R}(w)$ is $(C\sqrt{\tfrac{\log (\lvert\mathsf{W}_1\rvert )}{m}})$-sub-gaussian. 
    Therefore, using again the $L_p$ bound version of sub-gaussianity \cite[Proposition 2.5.2]{vershynin2018high-dimensional}, we conclude
    \begin{equation}
        \E_{S\sim P^{m}}\left[ \sup_{w\in\mathsf{W}_1} R(w) -\hat{R}(w)\right]
        \leq C\sqrt{\frac{\log (\lvert\mathsf{W}_1\rvert )}{m}} .
    \end{equation}
    Plugging this into our previous bound and rearranging, we get the claimed inequality.
\end{proof}

\end{document}